\title{Scott correction for large atoms and molecules in a self-generated
magnetic field}
\author{L\'aszl\'o Erd\H os
 \thanks{Partially supported by SFB-TR12 of
the German Science Foundation. {\text lerdos@math.lmu.de} }
\\Institute of Mathematics, University of Munich \\
Theresienstr. 39, D-80333 Munich, Germany \\
S\o ren Fournais \thanks{Work partially supported by the Lundbeck
  Foundation, the Danish Natural Science Research Council and the European 
Research Council under the
 European Community's Seventh Framework Program (FP7/2007--2013)/ERC grant
 agreement  202859.
{\text fournais@imf.au.dk}} \\ Department of Mathematical Sciences, Aarhus University\\
 Ny Munkegade 118, DK-8000 Aarhus, Denmark
\\ and \\
Jan Philip Solovej \thanks{Work partially supported
   by the Danish Natural Science Research Council and by a Mercator
   Guest Professorship from the German Science Foundation. {\text
solovej@math.ku.dk}}
\\ Department of Mathematics, University of Copenhagen\\
Universitetsparken 5, DK-2100 Copenhagen,
Denmark}
\date{May 2, 2011}
\newtheorem{theorem}{Theorem}[section]
\newtheorem{proposition}[theorem]{Proposition}
\newtheorem{lemma}[theorem]{Lemma}
\newtheorem{remark}[theorem]{Remark}
\numberwithin{equation}{section}
\newcommand{\rd}{{\rm d}}
\newcommand{\be}{\begin{equation}}
\newcommand{\ee}{\end{equation}}
\newcommand{\bey}{\begin{eqnarray}}
\newcommand{\eey}{\end{eqnarray}}
\newcommand{\beys}{\begin{eqnarray*}}
\newcommand{\eeys}{\end{eqnarray*}}
\DeclareMathOperator{\supp}{supp}
\newcommand{\bz}{{\bf z}}
\newcommand{\bsigma}{\mbox{\boldmath $\sigma$}}
\newcommand{\cT}{{\cal T}}
\newcommand{\cQ}{{\cal Q}}
\renewcommand{\iint}{\int \!\! \int}
\newcommand{\bR}{{\mathbb R}}
\newcommand{\bC}{{\mathbb C}}
\newcommand{\bbR}{{\bf R}}
\newcommand{\bbZ}{{\bf Z}}
\newcommand{\br}{{\bf r}}
\newcommand{\bN}{{\mathbb N}}
\newcommand{\ov}{\overline}
\newcommand{\e}{\varepsilon}
\newcommand{\Tr}{{\rm Tr\;}}
\newcommand{\tr}{{\rm Tr\;}}
\newcommand{\wh}{\widehat}
\newcommand{\wt}{\widetilde}
\newcommand{\cE}{{\cal E}}
\newcommand{\cP}{{\cal P}}
\newcommand{\cM}{{\cal M}}
\newcommand{\cH}{{\cal H}}
\newcommand{\al}{\alpha}
\newcommand{\pt}{\partial}
\newcommand{\om}{\omega}
\newcommand{\non}{\nonumber}
\begin{document}
\maketitle

\begin{abstract}
We consider a large  neutral  molecule
with total nuclear charge $Z$ in non-relativistic quantum
mechanics with a self-generated classical electromagnetic field.
To ensure stability, we assume that $Z\al^2\le \kappa_0$ for
a sufficiently small $\kappa_0$, where $\al$ denotes the fine structure constant.
We show that, in the simultaneous limit $Z\to\infty$, $\al\to 0$ such that
$\kappa =Z\al^2$ is fixed, the ground state energy of the system is given by a
two term expansion $c_1Z^{7/3} + c_2(\kappa) Z^2 + o(Z^2)$. The leading term
is given by the non-magnetic Thomas-Fermi theory. Our result  shows that
the magnetic field affects only the second (so-called Scott) term in
the expansion.

\end{abstract}

\bigskip\noindent
{\bf AMS 2010 Subject Classification:} 35P15, 81Q10, 81Q20

\medskip\noindent
{\it Key words:} Pauli operator, semiclassical asymptotics, magnetic field

\medskip\noindent
{\it Running title:} Scott correction for self-generated fields

\section{Introduction and the main result}

We  introduce the molecular many-body Hamiltonian
of $N$ dynamical electrons and $M$ static nuclei in three space dimensions.
The electron coordinates are $x_1, x_2, \ldots, x_N\in \bR^3$,
the nuclei are located at ${\bf R}=(R_1, R_2, \ldots , R_M)\in \bR^{3M}.$
Let ${\bf Z}= (Z_1, Z_2, \ldots , Z_M)$ denote the nuclear
charges, $Z_j>0$, with total nuclear charge $Z=\sum_{k=1}^M Z_k$.
We assume that the system is neutral, i.e. $N=Z$, in particular $Z$ is integer.
The electrons are subject to a self-generated magnetic field, $B=\nabla \times A$,
where $A\in H^1(\bR^3, \bR^3)$ is a magnetic vector potential. The magnetic
field energy is
$$
   \frac{1}{8\pi \al^2}\int_{\bR^3} B^2 = \frac{1}{8\pi \al^2} \int_{\bR^3} |\nabla\times A|^2,
$$
where $\al$ is the fine structure constant.

In the non-relativistic approximation,
the kinetic energy operator of the $j$-th particle 
is given by the magnetic Schr\"odinger or the Pauli operator,
\be
  T^{(j)}(A)= (-i\nabla_{x_j}+A(x_j))^2, \quad \mbox{or}
\quad T^{(j)}(A) = \big[ \bsigma\cdot (-i\nabla_{x_j}+A(x_j))\big]^2,
\label{Th1}
\ee
depending on whether the particles are considered spinless or have spin-$\frac{1}{2}$.
Here $\bsigma =(\sigma_1, \sigma_2, \sigma_3)$
is the vector of Pauli matrices.
 The Schr\"odinger operator acts on the space $L^2(\bR^3)$, the 
Pauli operator  acts on $L^2(\bR^3, \bC^2)$.
We will work with the Pauli operator, the treatment of the 
magnetic Schr\"odinger operator is simpler and we will only 
comment on the modifications needed.

The electrostatic potential of the electrons is the difference 
of the nuclear attraction
$$
     V( {\bf Z}, {\bf R}, x) = \sum_{k=1}^M \frac{Z_k}{|x-R_k|},
$$
and electron-electron repulsion
$$
   \sum_{1\le i < j \le N}\frac{1}{|x_i-x_j|}.
$$
The total energy of the electrons is given by the
Hamiltonian
\be
   H_N({\bf Z}, {\bf R}, A): = \sum_{j=1}^N \big[ T^{(j)}(A)
 -  V( {\bf Z}, {\bf R}, x_j)\big]
  +  \sum_{1\le i < j \le N}\frac{1}{|x_i-x_j|}. 
\label{HMN}
\ee
This operator acts on the space of antisymmetric functions $\bigwedge_1^N\cH$, where
$\cH = L^2(\bR^3)\otimes \bC^2$ is the single particle Hilbert space.

For a given vector potential $A$, the ground state energy of the electrons
is given by
\be \label{eq:parten}
   E({\bf Z}, {\bf R}, A) : = 
\inf{\mbox{Spec} \; H_N({\bf Z}, {\bf R}, A) }
\ee
with $N=Z= \sum_k Z_k$.
By gauge invariance, $E({\bf Z}, {\bf R}, A)$ depends only on the magnetic field
$B=\nabla\times A$.
Considering the magnetic field dynamical, we focus on 
the absolute ground state energy of the system, that includes the 
field energy,
\be
    E({\bf Z}, {\bf R}, A)+
 \frac{1}{8\pi \al^2}\int_{\bR^3} |\nabla \times A|^2,
\label{toten1}
\ee
and we will minimize over all vector potentials.

Since we are interested in gauge invariant quantities (like energy, ground state
density), we can always choose a divergence free gauge, $\nabla\cdot A=0$.
In this case, the field energy is given by
\be
 \frac{1}{8\pi \al^2} \int_{\bR^3} |\nabla\times A|^2 =  
\frac{1}{8\pi \al^2} \int_{\bR^3} | \nabla \otimes A|^2.
\label{fielden}
\ee
Since the magnetic energy  will always be finite, we can also  assume 
that $A\in L^6(\bR^3)$ (see Appendix of \cite{FLL} for the existence
of such a gauge), and we thus have 
\be
   \nabla\cdot A =0,  \qquad \Big(\int_{\bR^3}  A^6\Big)^{1/3} 
\le C \int_{\bR^3} |\nabla \otimes A|^2=
 \int_{\bR^3} |\nabla\times A|^2
\label{gauge}
\ee
by the Sobolev inequality.

We will call a vector potential $A$ {\it admissible} if
$A\in L^6(\bR^3)$, $\nabla \otimes A \in L^2(\bR^3),$ and $\nabla\cdot A=0$. 
For admissible vector potentials,
the total energy is
$$
  \cE({\bf Z}, {\bf R}, A, \al): =   E({\bf Z}, {\bf R}, A)+
\frac{1}{8\pi \al^2} \int_{\bR^3} |\nabla \otimes A|^2,
$$
and the absolute ground state energy of the system is given by
\be
  E_{\rm abs}({\bf Z}, {\bf R},\al):=
  \inf_{A} \Big\{  \cE({\bf Z}, {\bf R}, A, \al) 
  \Big\},
\ee
where the infimum is taken over all admissible vector potentials
  $A$.

Our units are $\hbar^2(2me^2)^{-1}$ for the length, $2me^4\hbar^{-2}$
for the energy and $2mec\hbar^{-1}$ for the magnetic vector
potential, where $m$ is the electron mass, $e$ is the electron charge and
$\hbar$ is the Planck constant. 
 In these units, the only physical parameter
that  appears in the total  Hamiltonian \eqref{toten1}
 is the dimensionless fine structure constant $\al=e^2(\hbar c)^{-1} \sim \frac{1}{137}$.
We will assume that $\max_k Z_k\al^2\leq\kappa_0$ with some sufficiently small
universal constant $\kappa_0\le 1$
and we will investigate the simultaneous limit $Z\to \infty$, $\al\to 0$.

\medskip

The main result of this paper  is:

\begin{theorem}\label{thm:main} Fix $M\in \bN$. Let $\bz = (z_1, z_2, \ldots, z_M)$ with
$z_1, z_2, \ldots, z_M>0$, $\sum_{k=1}^M z_k =1$, and
$\br = (r_1, r_2, \ldots, r_M)\in \bR^{3M}$ with
$\min_{k\neq \ell} |r_k-r_\ell|\ge r_{min}$ for some $r_{min}>0$ be given.
With a positive real parameter $Z>0$, define
${\bf Z} = (Z_1, Z_2, \ldots, Z_M)$, $Z_k:= Zz_k$,
and ${\bf R} = Z^{-1/3}\br $ to be the charges and
 the locations of the nuclei. Then there exists
a constant $E^{{\rm TF}}(\bz, \br)$ and a universal 
(independent of $\bz, \br$ and $M$), monotonically decreasing function
$S: (0, \kappa_0]\to \bR$ with some universal $\kappa_0>0$
and with $\lim_{\kappa\to 0+} S(\kappa) =\frac{1}{8}$ such that
as $Z= \sum_{k=1}^M Z_k\to \infty$, $\al\to 0 $ with
$\max_k 8\pi Z_k\al^2\le \kappa_0$, we have
\be\label{eq:main}
    E_{\rm abs}({\bf Z}, {\bf R}, \al) = Z^{7/3} E^{{\rm TF}}(\bz, \br)
  + 2Z^2\sum_{k=1}^M z_k^2 S(8\pi Z_k\al^2) + o(Z^2).
\ee
\end{theorem}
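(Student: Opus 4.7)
The plan is to establish matching upper and lower bounds of the form $Z^{7/3}E^{\rm TF}(\bz,\br) + 2Z^2\sum_k z_k^2 S(8\pi Z_k\al^2) + o(Z^2)$. The first thing I would do is identify the universal function $S(\kappa)$ concretely, namely as the Scott coefficient of a single hydrogenic atom coupled to a self-generated Pauli magnetic field: $S(\kappa)$ is the coefficient of $Z^2$ in the two-term expansion of the ground-state energy of a one-electron atom of charge $Z$ and fine-structure parameter $\al$ with $\kappa = 8\pi Z\al^2$ held fixed. Monotonicity in $\kappa$ and the limit $S(0)=\frac{1}{8}$ come, respectively, from the fact that turning on a magnetic field lowers the infimum in the Pauli variational problem and from the classical non-magnetic Scott formula.

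For the upper bound I would use a Hartree--Fock-type trial state. Its one-particle density matrix is built from the Thomas--Fermi minimising density $\rho^{\rm TF}$ outside balls of radius $\delta Z^{-1/3}$ around the nuclei, and from the (properly rescaled) optimiser of the single-atom problem defining $S(8\pi Z_k\al^2)$ inside each ball. The trial vector potential is the corresponding rescaled atomic optimiser inside each ball, cut off smoothly to zero outside. A standard Lieb--Simon computation recovers $Z^{7/3}E^{\rm TF}(\bz,\br)$ from the bulk, $2z_k^2 Z^2 S(8\pi Z_k\al^2)$ from each atomic region, and puts the gluing and electron-electron cross terms into $o(Z^2)$.

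The lower bound is the serious analytic step. First, a Lieb--Oxford bound replaces the electron-electron repulsion by $D(\rho_\Psi) - C\int\rho_\Psi^{4/3}$, and introducing the Thomas--Fermi mean-field potential $\varphi^{\rm TF}$ as a Lagrange multiplier reduces the estimate to bounding $\Tr[T(A) - V_{\rm eff}]_- + \frac{1}{8\pi\al^2}\int|\nabla\times A|^2$ from below, uniformly in $A$. A magnetic IMS partition of unity on length scale $Z^{-1/3+\e}$ then splits this trace into contributions from neighbourhoods of each nucleus and from the bulk, with localisation error in the Pauli kinetic energy independent of $A$. In the bulk, a magnetic semiclassical (Lieb--Thirring-type) inequality produces the Thomas--Fermi leading term and forces the local field-energy contribution to be $o(Z^2)$. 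Near each $R_k$, the rescaling $x = R_k + Z_k^{-1}y$ brings the one-body operator and field energy into the form of the single-atom variational problem defining $S(8\pi Z_k\al^2)$, modulo a smooth background that is absorbed using continuity of $S$.

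The hard part is the simultaneous control of the vector potential $A$ at two different length scales. An almost-minimiser of the full many-body problem need not, a priori, resemble an atomic minimiser on each $Z^{-1}$-neighbourhood, and the nonlinear field energy couples the atomic and bulk scales. My proposed fix is to decompose $A = \chi_0 A + \sum_k \chi_k A$ with a smooth partition subordinate to the same length-scale splitting, estimating the cross terms in $\int|\nabla\times A|^2$ by Cauchy--Schwarz. Because $S(\kappa)$ is defined through an \emph{infimum} over admissible fields, the localised pieces $\chi_k A$ need not be optimal for the atomic problem in order for the atomic lower bound to apply to each one. The remaining delicate point is to keep every error uniform in the scaling regime $\max_k 8\pi Z_k\al^2 \le \kappa_0$, so that all $o(Z^2)$ terms remain $o(Z^2)$ as $\al\to 0$ with $Z\al^2$ fixed.
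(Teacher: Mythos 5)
Your overall architecture (Lieb--Oxford reduction to a one-body problem in the Thomas--Fermi potential, localization around the nuclei with rescaling to the hydrogen scale, magnetic Lieb--Thirring in the bulk, and a glued trial state for the upper bound) is essentially the strategy of the paper. But there is a genuine gap at the very first step: you take the function $S(\kappa)$ for granted. Defining it as ``the coefficient of $Z^2$ in the two-term expansion of the ground-state energy of a one-electron atom'' is both imprecise and unproven. The correct object is not a one-electron energy but a renormalized difference: the infimum over $A$ of $\tr[\phi_R(T_{h=1}(A)-|x|^{-1})\phi_R]_- $ plus the field energy, minus the corresponding semiclassical phase-space integral, in the limit $R\to\infty$ (equivalently, with a chemical potential $\mu\to 0^+$ as in Lemma~\ref{S=S}). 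Proving that this limit exists, is finite for $\kappa$ below a threshold, is independent of the cutoff procedure, and --- crucially for your upper bound --- can be achieved along vector potentials supported in a small ball (the content of \eqref{def:Skappa2}) is a substantial part of the work (Theorem~\ref{thm:scott} and Section~\ref{sec:scott}); it requires a priori bounds on the field energy of near-minimizers, an approximate monotonicity in $R$, and independence of the boundary weight $\beta$. Without this, your ``rescale near each $R_k$ to the single-atom problem'' step has no well-defined limit object to converge to, and your glued trial field has no justification for being cut off outside the atomic balls without changing the energy to leading Scott order.

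Two further points in your lower bound would fail as stated. First, a single IMS scale $Z^{-1/3+\e}$ cannot bridge the region between the Scott scale $Z^{-1}$ and the Thomas--Fermi scale $Z^{-1/3}$, where the potential is Coulomb-like and varies over all intermediate scales; one needs a genuinely multiscale partition with $\ell(u)\sim d(u)$, combined with a local semiclassical theorem whose error is summable over scales and which works with only an arbitrarily small share of the global field energy (this is Lemma~\ref{lm:multiscalesc} and the intermediate-region analysis in Section~\ref{sec:scscott}). Second, your decomposition $A=\chi_0A+\sum_k\chi_kA$ with Cauchy--Schwarz on the cross terms does not close: $\nabla\times(\chi_kA)$ contains $\nabla\chi_k\times A$, and $\|A\|_{L^2}$ on the overlap regions is not controlled by $\int|\nabla\times A|^2$ unless you first subtract local averages (gauge fixing) and invoke a Poincar\'e inequality on annuli --- exactly what the paper does when it cuts $A$ in Step~9 of Section~\ref{sec:scott} and in the localized Lieb--Thirring lemma. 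In fact, for the lower bound the paper avoids decomposing $A$ altogether and instead splits the field-energy \emph{integral} spatially among the localized traces, which is why the definition of $S$ must carry a reduced weight $\beta$ on the outer annulus; your plan does not address how the atomic problem is to be run with only part of the field energy available near each nucleus.
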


\begin{remark} The constant $E^{{\rm TF}}(\bz, \br)$ is given
by the Thomas-Fermi theory for nonrelativistic molecules
without magnetic field, see below. The factor 2 in the Scott term is
due to the spin degeneracy.
\end{remark}

\begin{remark} Our current proof does not provide an effective
error term in \eqref{eq:main} although we conjecture that
$o(Z^2)$ can be replaced with $O(Z^{2-\eta})$ with some $\eta>0$
where the constant in the
error term depends only on $r_{min}$ and $ M$.
\end{remark}

\begin{remark} {F}rom the monotonicity of $E_{\rm abs}({\bf Z}, {\bf R},\al)$
in $\al$, it is obvious that the function $S$ is monotonically decreasing.
It is known that there is a finite critical constant $\kappa_{cr}>0$
such that the system is unstable if $\max_k Z_k \al^2 > \kappa_{cr}$,
in particular the restriction $\max_k Z_k \al^2 \le \kappa_0$ in the theorem 
is necessary but our threshold $\kappa_0$ is not optimal.
 We conjecture that $S(\kappa)$ is a strictly
decreasing function, but unfortunately our proof does not provide this
statement. In fact, we even cannot exclude the possibility that
$S(\kappa)$ is constant (=1/8) for all $\kappa$ up to the critical
value $\kappa_{cr}$ beyond which it is minus infinity.
\end{remark}

\begin{remark}
An energy expansion similar to \eqref{eq:main} 
was derived in a relativistic model (without magnetic fields)
in \cite{SSS}. In the relativistic case one must also
consider simultaneous limits $Z\to\infty$, $\al\to0$. In this
case, however, it is the combination $Z_k\al$ which must remain bounded
in contrast to Theorem~\ref{thm:main}, where a bound on $Z_k\al^2$ is required.
In the atomic case, $M=1$, an alternative proof of the relativistic
energy asymptotics was given in \cite{FSW1}.
\end{remark}

The scaling can be understood from Thomas-Fermi theory
which we recall briefly \cite{LS, L}. Let $0\le \varrho(x) \in L^{5/3}(\bR^3)
\cap L^1(\bR^3)$ then the Thomas-Fermi energy functional is defined
as
$$
 \cE^{{\rm TF}}(\varrho): = \frac{3}{5} (3\pi^2)^{2/3}\int_{\bR^3} \varrho^{5/3} - 
  \int_{\bR^3} V(\bz, \br, x) \varrho(x) \rd x + D(\varrho)
$$
with
$$
  D(\varrho):= D(\varrho,\varrho):=
 \frac{1}{2}\int_{\bR^3}\int_{\bR^3} \frac{\varrho(x)\varrho(y)}{|x-y|} \rd x \rd y
$$
(the coefficient in front of the kinetic energy takes
into account the spin degeneracy).
It is well known that the variational problem
\be
  \inf\Big\{  \cE^{{\rm TF}}(\varrho)\; : \;  \varrho\in L^1(\bR^3)\cap
L^{5/3}(\bR^3), \; \; \int \varrho = \sum_{k=1}^M z_k \Big\}
\label{variation}
\ee
has a unique, strictly positive minimizer, called the {\it Thomas-Fermi density} and
denoted by $\varrho^{{\rm TF}}(x) = \varrho^{{\rm TF}}(\br,\bz, x)$. The value of
the minimum, $E^{{\rm TF}}(\bz, \br): =\cE^{{\rm TF}}(\varrho^{{\rm TF}})$, is called
the {\it Thomas-Fermi energy}. The function
\be
   V^{{\rm TF}}(\bz,\br, x):= V(\bz,\br,x) - \varrho^{{\rm TF}}* |x|^{-1}
\label{VTFdef}
\ee
is called the {\it Thomas-Fermi potential}; it is strictly positive and
it solves the Thomas-Fermi equation
\be
     V^{{\rm TF}}_{\bz,\br} = 
 (3\pi^2)^{2/3} \big[\varrho^{{\rm TF}}_{\bz,\br}\big]^{2/3}.
\label{TFE}
\ee
Sometimes we will use the notation $V^{{\rm TF}}_{\bz, \br}(x)$ instead of $V^{{\rm TF}}(\bz, \br, x)$
and likewise for $\varrho^{{\rm TF}}_{\bz,\br}$.
The key quantities in the Thomas-Fermi theory have the following
scaling behavior:
\begin{align}
    V^{{\rm TF}}(\bz,\br, x) & = h^{-4} V^{{\rm TF}}( h^3\bz, h^{-1}\br, h^{-1}x) \label{Vscale}\\
   \varrho^{{\rm TF}}(\bz,\br, x) & = h^{-6} \varrho^{{\rm TF}}( h^3\bz, h^{-1}\br, h^{-1}x) \non\\
  E^{{\rm TF}}(\bz, \br) &= h^{-7} E^{{\rm TF}}(h^3\bz, h^{-1} \br) \non
\end{align}
for any $h>0$.
We also note that
the Thomas-Fermi energy defined as the minimal value of \eqref{variation}
can also be given by the phase-space integral of the classical
symbol with the Thomas-Fermi potential, i.e.
\be
 E^{{\rm TF}}( \bz,\br): =  2 \frac{1}{(2\pi)^3} \int_{\bR^3\times\bR^3}
 [p^2 - V^{{\rm TF}}_{ \bz,\br}(q)]_-\rd p\rd q - D(\varrho^{{\rm TF}}_{ \bz,\br}),
\label{ETF}
\ee
where $[a]_-= - \min\{ a, 0\}$ denotes the negative part of a real number or a selfadjoint
operator. The factor 2 in \eqref{ETF} accounts for the spin degeneracy.

\medskip
Suppose we
replace the many-body electrostatic potential in  $H_N({\bf Z},  {\bf R}, A)$
by its mean field approximation
as
$$
   - \sum_{j=1}^N V({\bf Z}, {\bf R}, x_j ) +  \sum_{1\le i <j\le N}
  \frac{1}{|x_i-x_j|} \approx - V^{{\rm TF}}_{{\bf Z}, {\bf R}}(x_j) - 
D(\varrho^{{\rm TF}}_{{\bf Z}, {\bf R}}).
$$
Then the absolute ground state energy $E_{\rm abs}({\bf Z}, {\bf R},\al)$
will be approximated by
\begin{align}\label{eq:Ztoh}
\inf_A\Big\{ \Tr\big[ & T(A) -  V^{{\rm TF}}_{{\bf Z}, {\bf R}}\big]_- +
\frac{1}{8\pi \al^2} \int_{\bR^3} |\nabla \otimes A|^2 -D(\varrho^{{\rm TF}}_{{\bf Z}, {\bf R}}) \Big\} \\
& \approx Z^{4/3} \inf_A\Big\{ \Tr\big[  T_h(A) -  V^{{\rm TF}}_{{\bf z}, {\bf r}}\big]_- +
\frac{1}{8\pi Z\al^2} h^{-2} \int_{\bR^3} |\nabla \otimes A|^2  \Big\}
- Z^{7/3}D(\varrho^{{\rm TF}}_{{\bf z}, {\bf r}}) ,
\non
\end{align}
by using the scaling property \eqref{Vscale} with the choice $h=Z^{-1/3}$
and introducing the notation 
$$
T_h(A) =[\bsigma\cdot (-ih\nabla +A)]^2.
$$
The  infimum in \eqref{eq:Ztoh} is taken over all admissible vector potentials  $A$,
but in fact taking infimum over  slightly different sets yields the
same result, see a more detailed discussion in Appendix A of \cite{EFS1}.

Note that the trace in 
 \eqref{eq:Ztoh} is computed in the spinor space $L^2(\bR^3, \bC^2)$. 
Any operator $\cM$ acting on $L^2(\bR^3)$ will be naturally identified with $\cM\otimes I$
acting on the spinor space $L^2(\bR^3)\otimes \bC^2$ and we will not distinguish
between $\cM$ and $\cM\otimes I$ in the notation. This will not cause any confusion
but we need to keep in mind that $\tr_{L^2(\bR^3)\otimes \bC^2} \cM = 2 \tr_{L^2(\bR^3)} \cM$. 
Unless indicated otherwise, the traces in 
this paper are computed on $L^2(\bR^3)\otimes \bC^2$.

We thus reduced the problem to a semiclassical analysis of
the Pauli operator with a self-generated magnetic field.
The leading term in the asymptotic expansion in negative powers of $h=Z^{-1/3}$
for the one particle energy
in \eqref{eq:Ztoh} is given by the Weyl term 
\begin{align}
 \inf_A\Big\{ \Tr\big[  T_h(A) -  V^{{\rm TF}}_{{\bf z}, {\bf r}}\big]_- +
 & \frac{1}{\kappa h^2} \int_{\bR^3} |\nabla \otimes A|^2  \Big\}
 \non\\
& = h^{-3}   \frac{2}{(2\pi)^3} \int_{\bR^3\times\bR^3}
 [p^2 - V^{{\rm TF}}_{ \bz,\br}(q)]_-\rd p\rd q  + o(h^{-3})
\label{weylsc}
\end{align}
as long as $\kappa := 8\pi Z\al^2\le \kappa_0$ with some sufficiently
small fixed positive $\kappa_0$
(in fact, it is sufficient that $\max_k 8\pi  Z_k\al^2\le \kappa_0$).
Notice that the leading term  does not depend on $\kappa$.
Via \eqref{ETF}
this produces the leading term asymptotics  $Z^{7/3}E^{\rm TF}( {\bf z}, {\bf r})
 + o(Z^{7/3})$ for
the many body ground state energy.

\medskip

A fundamental result of Lieb and Simon \cite{LS}  (see also \cite{L}) rigorously justified
this heuristics without magnetic field and with an effective error term.
Thus they proved the leading term asymptotics 
in \eqref{eq:main} for the ground state energy of
large atoms and molecules  without magnetic field.
The next order term, known as the Scott correction, 
is of order $Z^2$.
For the non-magnetic case it is explicitly given by
\be
     2\cdot \frac{Z^2}{8}\sum_{k=1}^M z_k^2  
\label{Z2}
\ee
(the additional factor $2$ is due to the spin degeneracy)
and it was rigorously proved for atoms in \cite{H, SW1}
and for molecules in \cite{IS}
(see also \cite{SW2, SW3, SS}).
The next term in the expansion of order $Z^{5/3}$ was obtained
in \cite{FS}.

It was established in \cite{ES3} that the
 inclusion of a self-generated magnetic field 
does not change the leading term asymptotics in \eqref{eq:main}.
The  main theorem in this paper, Theorem \ref{thm:main} shows   that the effect of the
magnetic field appears in the second (Scott) term in
the asymptotic expansion.

The proof has two main steps. First we reduce the interacting
many-body problem to an effective one-body semiclassical
problem as it was described in \eqref{eq:Ztoh}.
 This step   will
be done rigorously in Section \ref{sec:reduc}. The second
step is an accurate second order semiclassical
asymptotics for the magnetic problem. More
precisely, we will  show the following more accurate
version of \eqref{weylsc}:
\begin{align}
\label{secondsc}
 \inf_A\Big\{ \Tr\big[  T_h(A) - &  V^{{\rm TF}}_{{\bf z}, {\bf r}}\big]_- +
 \frac{1}{\kappa h^{2}} \int_{\bR^3} |\nabla \otimes A|^2  \Big\}
 \\
& = h^{-3}   \frac{2}{(2\pi)^3} \int_{\bR^3\times\bR^3}
 [p^2 - V^{{\rm TF}}_{ \bz,\br}(q)]_-\rd p\rd q +
  2 h^{-2}\sum_{k=1}^M z_k^2 S(8\pi Z_k\al^2) + o(h^{-2}), \non
\end{align}
with   $\kappa = 8\pi Z\al^2$ and under the assumption 
that $\max_k 8\pi Z_k\al^2 \le \kappa_0$
with some $\kappa_0$ that is sufficiently small.
Together with \eqref{ETF}, \eqref{eq:Ztoh} 
and $h=Z^{-1/3}$, \eqref{secondsc} will yield \eqref{eq:main}.

\medskip

The precise second order semiclassical expansion for 
\be
    \inf_A\Big\{ \Tr\big[  T_h(A) - V\big]_- +
 \frac{1}{\kappa h^{2}} \int_{\bR^3} |\nabla \otimes A|^2  \Big\}
\label{ssec}
\ee
is of interest itself for a general potential $V$ and with $\kappa\le \kappa_0$.
Under general conditions on $V$ the leading term is given by the
classical Weyl term as in \eqref{weylsc}.
A local version of this statement was proven in Theorem 1.3 of \cite{ES3}
(this theorem contains only the lower bound, the upper bound is trivial).
The global version was given in Theorem 2.2 of \cite{EFS1}, where 
the condition $\kappa \le\kappa_0$ could even be relaxed to $\kappa = o(h^{-1})$.

The subleading term in the expansion for \eqref{ssec} depends on the 
singularity structure of the potential. For $V\in C_0^\infty$ we proved
in Theorem 1.1 of \cite{EFS2} that the Weyl asymptotics holds with an error
$O(h^{-2+\eta})$. The main ingredient was a local semiclassical asymptotics
that we recall in Theorem~\ref{thm:scMain}.

Using this result for smooth potentials and a 
multiscale analysis, 
we will show in Section~\ref{sec:multi}
that for potentials with Coulomb singularities, a non-zero second order
term arises  from the non-semiclassical effects of the innermost shells
of the Hydrogen-like atoms. The precise form of this term will not
be as explicit as in the non-magnetic case, \eqref{Z2}, but
it will be given by a universal function $S$ which we will
describe in Theorem~\ref{thm:scott} and prove in Section~\ref{sec:scott}.

\section{Semiclassical results up to second order}\label{sec:multi}

In this section we are interested in  noninteracting fermions, each is subject to the
one-body Hamiltonian $H= T_h(A)-V$.
The total ground state energy of the system is given by
\be
 \cE(A):=\tr \big[T_h(A) - V \big]_{-} +
 \frac{1}{\kappa h^2} \int_{\bR^3} | \nabla\times A|^2
= \tr \big[T_h(A) - V \big]_{-} +
 \frac{1}{\kappa h^2} \int_{\bR^3} | \nabla \otimes A|^2,
\label{toten2}
\ee
where $\kappa>0$ is a parameter, and where the last equality uses that
$\nabla \cdot A =0$ which can be assumed by gauge invariance.

\medskip

We will assume that the potential has a multiscale structure. Intuitively, this means that
there exist two scaling functions, $f, \ell: \bR^3\to \bR_+$  such
that for any $u\in \bR^3$, within the ball $B_u(\ell(u))$ centered
at $u$ with radius $\ell(u)$, the size of $V$ is of order $f^2(u)$
and $V$ varies on scale $\ell(u)$.  Moreover, we also require
that  the continuous family of balls  $B_u(\ell(u))$
supports a regular partition of unity.
The following lemma states this condition precisely.
This statement was  proved in Theorem 22 of \cite{SS}
with an explicit construction\footnote{Multiscaling was introduced in
  semiclassical problems in \cite{IS} (see also \cite{Sob, Sob1})}.

We will use the notation $B_x(r)$ for the ball of radius $r$ and 
center at $x$ and if $x=0$, we use $B(r) = B_0(r)$.

\begin{lemma}[{\cite[Theorem 22]{SS}}]\label{lem:PartUnityMultScale}
 Fix a cutoff function $\psi\in C_0^\infty(\bR^3)$
supported in the unit ball $B(1)$ satisfying $\int\psi^2=1$. 
Let $\ell: \bR^3\to (0,1]$ be a $C^1$ function with $\|\nabla\ell\|_\infty<1$.
Let $J(x,u)$ be the Jacobian of the map $u\mapsto (x-u)/\ell(u)$ and we define
$$
   \psi_u(x) = \psi\Big( \frac{x-u}{\ell(u)}\Big) \sqrt{J(x,u)} \ell(u)^{3/2}.
$$
Then, for all $x\in\bR^3$,
\be
   \int_{\bR^3} \psi_u(x)^2 \ell(u)^{-3} \rd u =1,
\label{partun}
\ee
and for all multi-indices $n\in \bN^3$ we have
\be\label{psider}
  \| \pt^n \psi_u\|_\infty \le C_n \ell(u)^{-|n|} , \qquad |n|=n_1+n_2+n_3,
\ee
where $C_n$ depends on the derivatives of $\psi$ but is independent of $u$.  \qed
\end{lemma}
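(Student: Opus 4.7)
The plan is an explicit computation of the Jacobian combined with a contraction-mapping argument that exploits the hypothesis $\|\nabla\ell\|_\infty<1$. Writing $w := (x-u)/\ell(u)$, the Jacobian matrix of $u\mapsto w$ equals $-\ell(u)^{-1}[I + \nabla\ell(u)\, w^T]$, so the rank-one identity $\det(I+ab^T)=1+a\cdot b$ gives the closed form
\[ J(x,u) \;=\; \ell(u)^{-3}\big(1+\nabla\ell(u)\cdot w\big), \]
which on the support of $\psi(w)$ (where $|w|\le 1$) is strictly positive thanks to $\|\nabla\ell\|_\infty<1$. This explicit expression will be the backbone of both parts of the proof.

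To establish \eqref{partun}, I would perform, for each fixed $x$, the substitution $u\mapsto y = (x-u)/\ell(u)$. The needed bijection from $\{u : |x-u|\le \ell(u)\}$ onto $B(1)$ follows by a contraction argument: injectivity comes from observing that if both $u_1,u_2$ map to the same $y$ then $u_1-u_2 = (\ell(u_2)-\ell(u_1))y$, which gives $|u_1-u_2|\le \|\nabla\ell\|_\infty|u_1-u_2|$ and hence $u_1=u_2$; surjectivity comes from viewing the inverse as a fixed-point problem $u = x-\ell(u)y$, again a contraction for $|y|\le 1$. Since $\rd y = J(x,u)\,\rd u$ on this set, the substitution yields $\int \psi_u(x)^2\,\ell(u)^{-3}\,\rd u = \int_{B(1)} \psi(y)^2\,\rd y = 1$. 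This change of variables is the principal obstacle: without the strict condition $\|\nabla\ell\|_\infty<1$ the Jacobian could vanish and the map could wrap, breaking the identity.

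For the derivative bound \eqref{psider}, using the explicit Jacobian I would factor
\[ \psi_u(x) \;=\; \psi\big((x-u)/\ell(u)\big)\cdot \big(1+\nabla\ell(u)\cdot(x-u)/\ell(u)\big)^{1/2}. \]
Both factors, viewed as functions of $x$, are smooth and supported in $\{|x-u|\le \ell(u)\}$. Every $x$-derivative of the first factor brings out a factor $\ell(u)^{-1}$, hence $|\pt^n\psi((x-u)/\ell(u))|\le C_n\ell(u)^{-|n|}$ by the boundedness of the derivatives of $\psi$. The argument of the square root is affine in $x$ with gradient $\nabla\ell(u)/\ell(u)$ of norm at most $\ell(u)^{-1}$, and its base lies between $\sqrt{1\pm\|\nabla\ell\|_\infty}$, so its $n$-th $x$-derivative is again $O(\ell(u)^{-|n|})$ by a direct chain-rule computation. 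The Leibniz rule then delivers $\|\pt^n\psi_u\|_\infty \le C_n\ell(u)^{-|n|}$, as claimed.
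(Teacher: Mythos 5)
Your proof is correct; note that the paper itself contains no proof of this lemma --- it is quoted directly from \cite[Theorem~22]{SS} --- so your argument is a self-contained verification rather than a variant of an in-paper argument, and it uses exactly the right ingredients. The rank-one identity gives $J(x,u)=\ell(u)^{-3}\bigl(1+\nabla\ell(u)\cdot w\bigr)$ with $w=(x-u)/\ell(u)$ (up to an irrelevant sign; also your matrix $I+\nabla\ell\, w^{T}$ is the transpose of the actual factor $I+w\,(\nabla\ell)^{T}$, which does not affect the determinant); the contraction argument based on $\|\nabla\ell\|_\infty<1$ makes $u\mapsto (x-u)/\ell(u)$ a bijection from $\{u:|x-u|<\ell(u)\}$ onto $B(1)$ with nonvanishing Jacobian there, so the change of variables yields \eqref{partun} because $\psi_u(x)^2\ell(u)^{-3}=\psi(w)^2 J(x,u)$ and the integrand vanishes off that set; and the factorization $\psi_u(x)=\psi(w)\bigl(1+\nabla\ell(u)\cdot w\bigr)^{1/2}$ reduces \eqref{psider} to the chain and Leibniz rules. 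One small caveat: your estimate on the $x$-derivatives of the square-root factor uses that its argument lies in $[1-\|\nabla\ell\|_\infty,\,1+\|\nabla\ell\|_\infty]$, so the constants $C_n$ depend not only on the derivatives of $\psi$ but also on the distance of $\|\nabla\ell\|_\infty$ from $1$; this is harmless --- the constants remain independent of $u$, and in every application in the paper $\ell$ is chosen with $\|\nabla\ell\|_\infty\le 1/100$ --- but the dependence should be stated explicitly.
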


We will require that the potential satisfies
\be
     |\pt^n V(u)| \le C_n f(u)^{2} \ell(u)^{-|n|}
\label{Vbound}
\ee
for all $n \in \bN^3$
uniformly in $u$ in some domain $\Omega\subset \bR^3$.
In applications, $\Omega$ will exclude an $h$-neighborhood of the core 
of the Coulomb potentials.
For brevity, we will often use $\ell_u= \ell(u)$ and $f_u = f(u)$.

Inserting the partition of unity \eqref{partun}, by IMS formula 
and reallocation of the localization error, we have
\begin{align} \label{IMS3}
  \cE(A) = & \tr \Big[\int_{\bR^3} \frac{\rd u}{\ell_u^3}\Big(
   \psi_u [ T_h(A) - V]\psi_u - h^2|\nabla\psi_u|^2 \Big)\Big]_{-} +
 \frac{1}{\kappa h^2} \int_{\bR^3} | \nabla \otimes A|^2 \cr
  \ge & \tr \Big[\int_{\bR^3} \frac{\rd u}{\ell_u^3} 
\psi_u \big[ T_h(A) - V - C h^2|\nabla\psi_u|^2\big]\psi_u  \Big]_{-} +
 \frac{1}{\kappa h^2} \int_{\bR^3} | \nabla \otimes A|^2 \cr
  \ge  & \int_{\bR^3} \frac{\rd u}{\ell_u^3} \cE(A,  V_u^+, \psi_u),
\end{align}
where
$$
    V_u^+: = V + Ch^2|\nabla\psi_u|^2
$$
and
$$  \cE(A, V_u^+, \psi_u) :=\tr \Big[\psi_u[ T_h(A) - V_u^+]\psi_u  \Big]_{-} +
 \frac{1}{\kappa h^2} \int_{\bR^3} \psi_u^2| \nabla \otimes A|^2 .
$$
In \eqref{IMS3}  we used
 $\tr [\int O_u \rd u]_-\ge \int \tr [O_u]_-\rd u$
for any continuous family of operators $O_u$.

We will assume that $V_u^+$ satisfies the same bound \eqref{Vbound} as $V$, i.e.
\be
   h \le Cf_u\ell_u.
\label{fll}
\ee
For the Coulomb-like
singularity, $V(x) = 1/|x|$, we will choose the $\ell(u)=c|u|$
with some $c<1$  and $f_u= \ell_u^{-1/2}$ so that \eqref{Vbound}
holds. With this choice \eqref{fll} holds for $|u|\ge C h^2$,
so the multiscale analysis will work apart from a very small
neighborhood of the nucleus.

\bigskip

Next we recall our  local semiclassical result from \cite{EFS2}
on a model problem
living in the ball $B(\ell)$ of radius $\ell>0$.

\begin{theorem}[{Semiclassical asymptotics \cite[Theorem 3.1]{EFS2}}]\label{thm:scMain} 
There exist universal constants $n_0 \in \bN$ and $\e >0$ such
that the following is satisfied.
Let $\kappa_0,f,\ell, h_0>0$ and let $\kappa \le \kappa_0 f^{-2}\ell^{-1}$.
Let $\psi\in C_0^\infty(\bR^3)$ with $\supp \psi \in B(\ell)$
and let $V\in C^\infty(\ov{B}(\ell))$ be a real valued potential satisfying
\be
    |\pt^n \psi|\le C_n \ell^{-|n|}, \qquad |\pt^n V|\le C_n f^2\ell^{-|n|}
\label{derMain}
\ee 
for every multiindex $n$ with $|n|\le n_0$.
Then
\begin{multline}
   \Bigg| \inf_A \Big( \tr [\psi H_h(A) \psi]_- + \frac{1}{\kappa  h^2}
\int_{B(2\ell)} |\nabla \otimes A|^2\Big)
  - 2(2\pi h)^{-3}\int_{\bR^3\times\bR^3} \psi(q)^2\big[ p^2 - V(q)\big]_- \rd q \rd p 
\Bigg| \\ \le  C  h^{-2+\e} f^{4-\e} \ell^{2-\e} 
\label{locscMain}
\end{multline}
for any $h\le h_0 f \ell$.
The constant $C$ depends only on $\kappa_0$, $h_0$ and on the constants $C_n$,  in \eqref{derMain}. The factor 2 in front of
the semiclassical term accounts for the spin and it is present only 
for the Pauli case. \qed
\end{theorem}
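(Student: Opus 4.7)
The plan is to establish matching upper and lower bounds on the left-hand side of \eqref{locscMain} with the claimed error. The first step is a scaling reduction: setting $x=\ell\tilde x$, $A(x)=f\tilde A(\tilde x)$, $V(x)=f^2\tilde V(\tilde x)$, the problem is transformed into one at unit scale with effective semiclassical parameter $\tilde h:=h/(f\ell)\le h_0$ and rescaled coupling $\tilde\kappa:=\kappa f^2\ell\le\kappa_0$; the rescaled $\tilde V$ and $\tilde\psi$ then satisfy derivative bounds independent of $f,\ell$, and the target error becomes $O(\tilde h^{-2+\e})$. Undoing the scaling restores the $h^{-2+\e}f^{4-\e}\ell^{2-\e}$ factor, so it suffices to prove the unit-scale version.

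For the upper bound one simply chooses $A=0$, which costs no field energy, and applies a standard local Weyl asymptotics for $\tr[\tilde\psi(-\tilde h^2\Delta-\tilde V)\tilde\psi]_-$ via coherent states on scale $\tilde h^{1/2}$. The smoothness of $\tilde V$ and $\tilde\psi$ reproduces the Weyl integral with error $O(\tilde h^{-2+\e})$.

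The lower bound is the heart of the matter. The first sub-step is an a priori control on any near-minimizing $A$: a magnetic Lieb--Thirring inequality for the Pauli operator, of the form $\tr[\tilde\psi H_{\tilde h}(A)\tilde\psi]_- \ge -C\tilde h^{-3}-C\tilde h^{-1}\|\nabla\otimes A\|_{L^2}$, combined with the field-energy cost and the upper bound already obtained, forces $\|\nabla\otimes A\|_{L^2}^2 \le C\tilde h^{-1}$ and hence $\|A\|_{L^6}\le C\tilde h^{-1/2}$ for any competitor within the required error tolerance. The second sub-step is to run coherent-state semiclassics with such an $A$: decomposing $T_{\tilde h}(A)=(-i\tilde h\nabla+A)^2+\tilde h\,\bsigma\cdot B$, one evaluates the magnetic Schr\"odinger trace against coherent states at a scale $\tilde h^{1-\delta}$ on which the $A$-dependence factors through a gauge phase up to $O(\|\nabla\otimes A\|_{L^2})$ errors, so the leading term reproduces the (gauge-invariant) non-magnetic Weyl integrand, with remainder controlled by $\|B\|_{L^2}$. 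The Zeeman correction $\tilde h\,\tr[\tilde\psi\,\bsigma\cdot B\,\chi_{H<0}\tilde\psi]$ is handled by Cauchy--Schwarz against the density of negative-energy states, and is bounded by $C\tilde h^{-3/2}\|B\|_{L^2}$. Both error terms are then absorbed into the field-energy term $(\tilde\kappa\tilde h^2)^{-1}\|\nabla\otimes A\|_{L^2}^2$ using $\tilde\kappa\le\kappa_0$ sufficiently small, completing the proof.

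The principal obstacle is the Zeeman term $\tilde h\,\bsigma\cdot B$ in the Pauli operator, which can in principle lower the trace by an amount of order $\tilde h^{-2}$ before the field cost is accounted for; this is precisely why the Pauli case is delicate relative to magnetic Schr\"odinger and why the smallness hypothesis $\tilde\kappa\le\kappa_0$ is essential. Turning the naive $O(\tilde h^{-2})$ bound into the improved $O(\tilde h^{-2+\e})$ remainder required by the multiscale apparatus of \eqref{IMS3} needs an honest semiclassical remainder — e.g.\ iterating the coherent-state expansion to next order, or invoking a sharper magnetic Lieb--Thirring-type bound exploiting the smoothness of $\tilde V$ — rather than merely the classical Weyl estimate. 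Correctly combining the Pauli identity $T_h(A)=(-ih\nabla+A)^2+h\bsigma\cdot B$ with the gauge-invariant semiclassical expansion of $(-ih\nabla+A)^2-V$, while tracking the $h^\e$ gain uniformly in all admissible $A$, is the technical core.
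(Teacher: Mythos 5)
This theorem is not proved in the present paper at all: it is imported verbatim from \cite{EFS2} (note the \(\qed\) after the statement), and its proof there is a long, delicate argument; so the benchmark is that proof, not anything in this manuscript. Your preliminary reductions are fine and do match the start of the real argument: the rescaling \(x=\ell\tilde x\), \(A=f\tilde A\), \(V=f^2\tilde V\) correctly produces \(\tilde h=h/(f\ell)\le h_0\), \(\tilde\kappa=\kappa f^2\ell\le\kappa_0\) and converts the target error into \(O(\tilde h^{-2+\e})\); taking \(A=0\) is the right upper bound (though plain coherent states only give an \(O(\tilde h^{-2})\) remainder, so even here you need a genuinely second-order expansion, which you acknowledge but do not supply); and the a priori bound \(\|\nabla\otimes A\|_{L^2}^2\le C\tilde h^{-1}\) for near-minimizers does follow from the Lieb--Thirring inequality of \cite{LLS} together with the field-energy cost (although the inequality you quote, with a term linear in \(\|\nabla\otimes A\|_{L^2}\), is not the LLS bound).

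The genuine gap is the core of the lower bound. With only \(\|\nabla\otimes A\|_{L^2}^2\lesssim\tilde h^{-1}\) the field can be arbitrarily large locally, so the claim that on coherent-state balls of radius \(\tilde h^{1-\delta}\) ``the \(A\)-dependence factors through a gauge phase up to \(O(\|\nabla\otimes A\|_{L^2})\) errors'' is not a valid estimate: after a local gauge choice the error is governed by the local field strength on that ball, which is not controlled by the global \(L^2\) norm, and one is forced to split space according to the size of \(B\) and treat the large-field regions by magnetic Lieb--Thirring --- which is where the real work in \cite{EFS2} lies. Likewise the quantitative claim for the Zeeman term is unjustified and appears too optimistic: Cauchy--Schwarz against the density of the negative spectral subspace (which is of size \(\tilde h^{-3}\) on a unit ball) gives \(\tilde h\,\|B\|_{L^2}\|\varrho\|_{L^2}\sim\tilde h^{-2}\|B\|_{L^2}\sim\tilde h^{-5/2}\), not \(C\tilde h^{-3/2}\|B\|_{L^2}\); and even if one reached order \(\tilde h^{-2}\|B\|_{L^2}\), absorbing a term linear in \(\|B\|\) into \((\tilde\kappa\tilde h^{2})^{-1}\|B\|_{L^2}^2\) leaves a remainder of order \(\tilde\kappa\,\tilde h^{-2}\), i.e.\ exactly the critical order, which does not produce the required \(\tilde h^{-2+\e}\) gain uniformly for fixed \(\tilde\kappa\le\kappa_0\) (it would only help if \(\tilde\kappa\to0\)). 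Beating the \(h^{-2}\) threshold uniformly over all admissible \(A\), in the presence of the spin coupling \(\tilde h\,\bsigma\cdot B\) which can create many low-lying Pauli states, is precisely the content of \cite{EFS2}; your outline names this obstacle but does not overcome it, so the proposal does not constitute a proof of the theorem.
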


\medskip
{\it Remark.}
By variation of $\kappa$, we obtain  from Theorem~\ref{thm:scMain}
the following estimate
\begin{align}
  \label{eq:21}
  \frac{1}{\kappa  h^2}
\int_{B(2\ell)} |\nabla \otimes A|^2\leq
 C  h^{-2+\e} f^{4-\e} \ell^{2-\e},
\end{align}
for (near) minimizing vector potentials $A$.

\bigskip

 The following result from \cite{EFS2} can be viewed as a partial  converse to \eqref{eq:21}
as it estimates the semiclassical error in terms of the magnetic field.
 Note that
the assumption in \eqref{eq:17} below is much weaker than \eqref{eq:21}.

\begin{theorem}[{\cite[Theorem 3.2]{EFS2}}]\label{thm:UpperSemiclass}
Let the assumptions be as in  Theorem~\ref{thm:scMain}
 and assume that $A$ satisfies the bound
\begin{align}\label{eq:17}
\int_{B(2\ell)} |\nabla \otimes A|^2\leq
 C h^{-2}  f^{4} \ell^3.
\end{align}
Then, with $\e$ from Theorem~\ref{thm:scMain} we have
\begin{align}\label{eq:USNew}
C h^{-2}f^3\ell^{3/2}
 \Big\{ \int_{B(2\ell)}& |\nabla \otimes A|^2 \Big\}^{1/2}  + C h^{-1}f^3\ell \non\\ &\ge
\tr [\psi H_h(A) \psi]_{-} -  2(2\pi h)^{-3}\int_{\bR^3\times\bR^3} 
\psi(q)^2\big[ p^2 - V(q)\big]_- \rd q \rd p \nonumber \\
  & 
\ge  C  h^{-2+\e} f^{4-\e} \ell^{2-\e} - Ch^{-2}f^2\ell  \int_{B(2\ell)} |\nabla \otimes A|^2,
\end{align}
where the constants may depend on $h_0$ and $\kappa_0$ and on the constant in \eqref{eq:17}.
\end{theorem}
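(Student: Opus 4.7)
The statement contains both an upper and a lower bound on $\tr[\psi H_h(A)\psi]_-$ compared to the Weyl integral, for a \emph{given} $A$ satisfying \eqref{eq:17}. The lower bound follows almost immediately from Theorem~\ref{thm:scMain}; the upper bound requires constructing an explicit trial density matrix and carefully estimating the resulting magnetic cross-terms.

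For the \emph{lower bound}, I would apply Theorem~\ref{thm:scMain} with the choice $\kappa = c f^{-2}\ell^{-1}$ for a small constant $c\le\kappa_0$, which is admissible by the hypothesis of that theorem. Since the infimum in \eqref{locscMain} is no larger than the value at our fixed $A$,
\[
   \tr[\psi H_h(A)\psi]_- + \frac{1}{\kappa h^2}\int_{B(2\ell)}|\nabla\otimes A|^2
   \ge 2(2\pi h)^{-3}\iint\psi(q)^2[p^2-V(q)]_-\,\rd p\,\rd q - Ch^{-2+\e}f^{4-\e}\ell^{2-\e}.
\]
Rearranging, using $1/(\kappa h^2) = c^{-1}h^{-2}f^2\ell$, gives the lower inequality in \eqref{eq:USNew}.

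For the \emph{upper bound}, I would take as trial state a semiclassical density matrix $\gamma_0$ built from coherent states for the \emph{non-magnetic} operator $\psi(h^2\bp^2-V)\psi$, supported on the Fermi sea $\{p^2\le V(q)\}$ and proportional to the identity on the spin variable. By the variational principle $\tr[\psi H_h(A)\psi]_-\le -\tr(\gamma_0\psi H_h(A)\psi)$, and since $\nabla\cdot A=0$ the Pauli identity gives
\[
   H_h(A) = (h^2\bp^2 - V) + 2h A\cdot\bp + A^2 + h\,\bsigma\cdot B .
\]
The non-magnetic piece $-\tr(\gamma_0\psi(h^2\bp^2-V)\psi)$ is bounded above by the Weyl integral plus the standard coherent-state error $C h^{-1}f^3\ell$. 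The term $-\tr(\gamma_0\psi A^2\psi)$ has the favourable sign and can be dropped. The spin term $-h\,\tr(\gamma_0\psi\bsigma\cdot B\psi)$ vanishes because $\gamma_0$ is spin-identity and $\tr_{\bC^2}\bsigma=0$. The work then lies entirely in the cross-term $-2h\,\tr(\gamma_0\psi A\cdot\bp\,\psi)$: its leading Weyl symbol $A(q)\cdot p$ integrates to zero on the spherically symmetric Fermi ball by $p$-parity, so only a subleading remainder survives. I would estimate this remainder by a trace Cauchy--Schwarz combined with the kinetic bound $\tr(\gamma_0\psi\,h^2|\bp|^2\psi)\le C h^{-3}f^5\ell^3$ and the Poincar\'e--Sobolev inequality $\|A\|_{L^2(B(2\ell))}\le C\ell\|\nabla\otimes A\|_{L^2}$, yielding the advertised rate $C h^{-2}f^3\ell^{3/2}\|\nabla\otimes A\|_{L^2}$.

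The principal difficulty is exactly this cross-term: a direct symmetric Cauchy--Schwarz only produces either a quadratic $\|\nabla\otimes A\|_{L^2}^2$ bound or the wrong power of $h$. Extracting the linear dependence on $\|\nabla\otimes A\|_{L^2}$ in \eqref{eq:USNew} requires genuinely exploiting the parity cancellation of $A(q)\cdot p$ on the Fermi sea, either through a careful symbolic calculation in the coherent-state representation (in the spirit of the proof of Theorem~\ref{thm:scMain}), or through a gauge shift $|g_{p,q}\rangle\mapsto|g_{p-A(q),q}\rangle$ in the trial state which absorbs the linear-in-$A$ contribution into the already-optimal kinetic part, leaving only residual field-gradient energy to estimate.
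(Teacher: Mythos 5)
This theorem is quoted from \cite[Theorem 3.2]{EFS2}; the present paper contains no proof of it, so your proposal can only be checked against the statement and the known strategy of \cite{EFS2}, not against an internal argument. Your lower bound is correct and is certainly the intended one: applying Theorem~\ref{thm:scMain} with $\kappa=\kappa_0 f^{-2}\ell^{-1}$ to the fixed $A$ and rearranging gives $\tr[\psi H_h(A)\psi]_- - \mathrm{Weyl} \ge -Ch^{-2+\e}f^{4-\e}\ell^{2-\e} - C h^{-2}f^2\ell\int_{B(2\ell)}|\nabla\otimes A|^2$ (the sign in front of the first error term as printed in \eqref{eq:USNew} is evidently a typo, and your derivation produces the correct version); note this part does not even use \eqref{eq:17}.

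The upper bound is where your sketch has a genuine gap, and it is not located where you put it. First, a sign slip: with the trial-state inequality (sum of negative eigenvalues) $\le \tr\big(\gamma_0\,\psi H_h(A)\psi\big)$, the term $+\tr(\gamma_0\psi A^2\psi)$ enters with the \emph{unfavourable} sign and cannot be dropped. Second, the linear cross term is not the principal difficulty: for any real (time-reversal invariant) trial density matrix, in particular the standard coherent-state Fermi-sea construction, the current vanishes, so $\tr\big(\gamma_0\psi(A\cdot h\bp + h\bp\cdot A)\psi\big)=0$ exactly; no parity or symbol analysis is needed there. The real obstruction is the quadratic term: after the optimal constant gauge shift, Poincar\'e only gives $\int_{B(2\ell)}\psi^2A^2\le C\ell^2\int_{B(2\ell)}|\nabla\otimes A|^2$, so with the Fermi-sea density $\sim h^{-3}f^3$ the $A^2$ term costs $\sim h^{-3}f^3\ell^2\int_{B(2\ell)}|\nabla\otimes A|^2$, which under \eqref{eq:17} can reach $h^{-5}f^7\ell^5$, exceeding the allowed error $h^{-2}f^3\ell^{3/2}\big(\int_{B(2\ell)}|\nabla\otimes A|^2\big)^{1/2}\sim h^{-3}f^5\ell^3$ by a factor $(f\ell/h)^2$. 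This is not an artifact of a crude estimate: if $B$ is essentially constant of strength $b\sim f^2/h$ on $B(2\ell)$ (allowed by \eqref{eq:17}), Stokes forces $\int_{B(\ell)}|A|^2\gtrsim b^2\ell^5$ in every gauge, so a non-magnetic trial state genuinely overshoots; in this regime $hb\sim f^2\sim V$, Landau quantization and the spin coupling shift the phase-space integral at order $h^{-3}f^5\ell^3$, which is precisely the size of the first error term in \eqref{eq:USNew} --- the bound is essentially sharp there, and attaining it requires a trial state adapted to the magnetic field (magnetically gauged or Landau-type states for a suitably regularized $A$, with the $H^1$-small remainder estimated separately), not the non-magnetic Fermi sea. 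Your two proposed repairs do not close this: the ``subleading remainder'' after parity cancellation, like the errors left by the shift $g_{p,q}\mapsto g_{p-A(q),q}$, involves local oscillations and derivatives of $A$ beyond the available $H^1$ control, and your argument never invokes the hypothesis \eqref{eq:17}, which must enter at exactly this point.
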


\bigskip

Very close to  the nuclei (at a distance $h^3\sim Z^{-1}$)
 the semiclassical approximation is no longer valid
due to the  local Coulomb
singularity and the energy is given by a specific function $S$ that
describes the contribution of the innermost electrons. 
Since the nuclei are separated on the much larger semiclassical scale $h$,
this effect  is additive for different nuclei. Thus $S$ 
can be defined via the Hydrogen atom. The following theorem
defines $S$ and gives some of its properties. It will be proven in 
Section~\ref{sec:scott}.

\begin{theorem}\label{thm:scott} 
Let  $\phi \in C_0^\infty(\bR^3)$ be a cutoff function
with $\supp \phi \subset B(1)$, $\phi \equiv 1 $ on $B(1/2)$
and such that $\wt\phi:=(1-\phi^2)^{1/2}\in  C^\infty(\bR^3)$.
Define $\phi_R(x) = \phi(x/R)$ for some $R>0$. 
There is a universal critical constant $\kappa_{cr}>0$
such that for any $\kappa<\kappa_{cr}$  and for any fixed $0<\beta \le (2\kappa)^{-1}$
 the following limit exists, it is finite and it depends only on $\kappa$
(and it is independent of the choice of $\phi$ and $\beta$ satisfying the
conditions listed above):
\begin{multline}
\label{def:Skappa}
   \lim_{R\to\infty} \Bigg[ \inf_A \Big\{ \tr \Big[ \phi_R \Big( T_{h=1}(A) - 
 \frac{1}{|x|} \Big) \phi_R\Big]_- + \frac{1}{\kappa} \int_{B(R/4)} |\nabla\otimes A|^2 
 + \beta \int_{B(2R)\setminus B(R/4)} |\nabla\otimes A|^2 
\Big\} \\
 - 2(2\pi)^{-3}\int_{\bR^3\times \bR^3} \phi_R^2(q)
 \Big[ p^2 - \frac{1}{|q|}\Big]_- \rd p \rd q \Bigg] =: 2 S(\kappa).
\end{multline}
The limit is denoted by $2S(\kappa)$ following the convention in
the literature to indicate explicitly the factor 2 that accounts
for  the spin degeneracy.
The function $S(\kappa)$ is defined on $[0, \kappa_{cr})$, it is decreasing
and $S(0)=1/8$,
corresponding to the coefficient of the non-magnetic Scott correction.

Furthermore, there exists a sequence $\{ A_R\} \subset H^1({\mathbb R}^3)$ with 
$\supp A_R \subset B(R/4)$ and such that
\begin{multline}
\label{def:Skappa2}
   \lim_{R\to\infty} \Bigg[   \tr \Big[ \phi_R \Big( T_{h=1}(A_R) - 
 \frac{1}{|x|} \Big) \phi_R\Big]_- + \frac{1}{\kappa} \int_{B(R/4)} |\nabla\otimes A_R|^2 
 \\
 - 2(2\pi)^{-3}\int_{\bR^3\times \bR^3} \phi_R^2(q)
 \Big[ p^2 - \frac{1}{|q|}\Big]_- \rd p \rd q \Bigg] = 2S(\kappa).
\end{multline}

\end{theorem}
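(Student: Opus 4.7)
My plan is to establish \eqref{def:Skappa} as a Cauchy limit in $R$, using the multiscale semiclassical estimate Theorem~\ref{thm:scMain} to absorb annular contributions into the subtracted Weyl integral. The remaining properties---monotonicity of $S$, the identity $S(0)=1/8$, and the existence of $\{A_R\}$ supported in $B(R/4)$---will then follow from variational monotonicity and from an a priori decay of the magnetic field based on \eqref{eq:21} and Theorem~\ref{thm:UpperSemiclass}.

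For the Cauchy step I fix $R_1<R_2$, both large, and decompose $\phi_{R_2}^2=\chi_1^2+\chi_2^2$ where $\chi_1$ agrees with $\phi_{R_1}$ on $B(R_1/2)$ and $\chi_2$ is supported in the annulus $B(2R_2)\setminus B(R_1/4)$. Applying the IMS formula splits the trace into inner and outer pieces plus a small localization error. On the outer annulus, the Coulomb potential $V=1/|x|$ satisfies \eqref{Vbound} with $\ell_u=c|u|$ and $f_u=|u|^{-1/2}$, and the semiclassical condition $h\le h_0 f_u\ell_u$ holds at $h=1$ as soon as $|u|\ge 1$. Inserting the multiscale partition of Lemma~\ref{lem:PartUnityMultScale} and applying Theorem~\ref{thm:scMain} on each ball $B_u(\ell_u)$ shows that the localized energy equals the Weyl integral up to an error $Cf_u^{4-\e}\ell_u^{2-\e}=C|u|^{-\e/2}$ per ball; summing with the multiscale density $\ell_u^{-3}\,\rd u$ yields a total error of order $R_1^{-\e/2}\to 0$. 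The outer trace therefore cancels against the corresponding part of the subtracted phase-space integral, while the inner contribution of $\chi_1$ coincides, up to negligible error, with the $R_1$ functional. This furnishes the lower bound; the matching upper bound comes from gluing a near-optimizer $A^{(1)}$ of the $R_1$ problem on $B(R_1/4)$ with the semiclassical trial vector potential and coherent-state density matrices that witness Theorem~\ref{thm:scMain} on the annulus, joined via the same partition of unity. Independence from $\phi$ and $\beta$ follows by the same scheme: two admissible cutoffs agree on $B(R/2)$ so their difference lives in the semiclassical outer region, and $\beta$ only re-weights magnetic energy outside $B(R/4)$, where the Weyl answer is magnetic-field-independent.

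Monotonicity of $S$ is automatic: the bracket in \eqref{def:Skappa} is pointwise non-increasing in $\kappa$ since the subtracted Weyl integral is $\kappa$-independent while $\inf_A\{\tr[\phi_R(T_1(A)-1/|x|)\phi_R]_-+\kappa^{-1}\int|\nabla\otimes A|^2\}$ is non-increasing in $\kappa$, and this property survives $R\to\infty$. For $S(0)=1/8$, letting $\kappa\to 0^+$ forces near-minimizing $A$ to satisfy $\int|\nabla\otimes A|^2=O(\kappa)$, hence $A\to 0$ in $L^6$ by \eqref{gauge}; standard perturbative and diamagnetic estimates show that the trace converges to the non-magnetic trace, reducing the problem to the classical hydrogen Scott correction, known to equal $1/8$ per spin (\cite{H,SW1}). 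To obtain the improved support property in \eqref{def:Skappa2}, apply \eqref{eq:21} on each multiscale ball with $|u|\ge r_0$: the local magnetic energy is at most $C|u|^{-\e/2}$, which is summable, so $\int_{|x|>r_0}|\nabla\otimes A|^2\to 0$ as $r_0\to\infty$ for any near-minimizer. Multiplying such a near-minimizer by a smooth cutoff equal to $1$ on $B(R/8)$ and supported in $B(R/4)$, followed by a divergence-free gauge correction as in \eqref{gauge}, alters the magnetic energy and (via Theorem~\ref{thm:UpperSemiclass}) the trace by $o(1)$, yielding the claimed sequence $\{A_R\}$.

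The hardest step is the $\beta$-independence, which demands a uniform handling of the magnetic field simultaneously in the inner Coulomb region, where the problem is genuinely quantum, and in the outer semiclassical annulus where $\beta$ takes effect. The sharp a priori decay of $|\nabla\otimes A|^2$ away from the origin furnished by \eqref{eq:21} is essential: it shows that near-optimal fields concentrate in an $O(1)$ neighborhood of the origin, so the outer weighting $\beta$ cannot influence the limit. The careful bookkeeping of the IMS and multiscale patching, together with the divergence-free gauge adjustment of truncated vector potentials, constitutes the main technical burden.
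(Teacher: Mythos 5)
There is a genuine gap, and it sits at the two places where your scheme leans on unproved a priori control of near-minimizers. First, nothing in your proposal controls the \emph{inner} region $|x|\lesssim 1$, where the problem is not semiclassical and where the smallness condition on $\kappa$ actually enters. Before any comparison of scales can start you need a stability-type lower bound for the hydrogen Coulomb singularity with an arbitrary self-generated field: the paper proves (Step 3, using \cite[Lemma~2.1]{ES3}) that $\tr[\phi_{r_0}(T_1(A)-|x|^{-1}-W_{r_0})\phi_{r_0}]_- + \frac{1}{2\kappa}\int_{B(2r_0)}|\nabla\otimes A|^2 \ge -K_1$ for $\kappa$ small. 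This is what makes $S(R,\kappa,\beta)$ finite at all \eqref{Slow}, what yields the uniform bound \eqref{fieldbound} on the \emph{total} field energy of near-minimizers, and what (via CLR, Step 5) lets one remove the IMS localization error $W_r$ in the inner trace, which cannot be absorbed semiclassically there. Your proposal never addresses this region; the ``small localization error'' in your decomposition and the finiteness of the quantities you compare are exactly what is at stake.

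Second, your key decay claim --- that a near-minimizer of the global functional has field energy $O(|u|^{-\e/2})$ on each multiscale ball, hence vanishing field energy outside $B(r_0)$ --- is a misapplication of \eqref{eq:21}. That estimate holds for (near) minimizers of the \emph{local} functional $\inf_A\big(\tr[\psi H(A)\psi]_-+\frac{1}{\kappa}\int_{B_u(2\ell_u)}|\nabla\otimes A|^2\big)$ on a single ball; the restriction of a global near-minimizer to a ball is not a near-minimizer of that local problem, so no per-ball bound follows. Since your Cauchy upper bound (gluing a near-optimizer of the $R_1$ problem across the annulus, which requires a Poincar\'e-controlled truncation and hence smallness of its field energy in the transition region), your $\beta$-independence, and your construction of compactly supported $A_R$ for \eqref{def:Skappa2} all rest on this decay, they are all unsupported as written; moreover, gluing ``the trial vector potentials that witness Theorem~\ref{thm:scMain} on the annulus'' to $A^{(1)}$ is not meaningful, since the functional contains a single $A$ --- the correct tool is Theorem~\ref{thm:UpperSemiclass} applied with the one truncated field. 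The paper avoids the circularity you would face: existence of the limit is obtained \emph{without} any gluing, from the essential monotonicity $S(R,\kappa,\beta)\ge S(r,\kappa,\beta')-C r^{-\e/2}$ (Lemma~\ref{lm:multiscalesc}, with a saved portion of outer field energy as in \eqref{mon1}) together with the uniform upper bound from $A=0$; only \emph{then} does the convergence of $S(R)$ combined with \eqref{SSbound} give the outer field-energy decay \eqref{eq:FEo} for near-minimizers, which in turn feeds the Poincar\'e truncation $A_r'=\phi_{2r}A_r$ and the Theorem~\ref{thm:UpperSemiclass}-based estimate of $\Delta(\wt\gamma)$ needed for \eqref{def:Skappa2}. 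Your annulus semiclassics (Lemma~\ref{lm:multiscalesc}-style) and your soft monotonicity remarks are fine, and $S(0)=1/8$ can simply be quoted from the non-magnetic literature rather than argued by a $\kappa\to0^+$ continuity you do not need; but without the inner-region stability bound and a correct derivation of the outer field-energy decay, the proof does not close.
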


This way of introducing a Scott correction when one cannot calculate 
its explicit value was introduced in \cite{Sob} and was used later in
\cite{SS, SSS}. In the case of the relativistic Scott correction
an alternative method of characterizing the Scott term
was given in \cite{FSW1, FSW2}.

Equivalently, one could define $S(\kappa)$ via another limiting procedure,
which may look more canonical:
\begin{lemma}\label{S=S}
For the function $S(\kappa)$ defined in \eqref{def:Skappa} we have
\begin{multline}
\label{def:Skappa1}
   \lim_{\mu\to0^+} \Bigg[ \inf_A \Big\{ \tr \Big[  T_{h=1}(A) - 
 \frac{1}{|x|} +\mu\Big]_- + \frac{1}{\kappa} \int_{\bR^3} |\nabla\otimes A|^2 
\Big\} \\
 - 2(2\pi)^{-3}\int_{\bR^3\times \bR^3}
 \Big[ p^2 - \frac{1}{|q|}+\mu\Big]_- \rd p \rd q \Bigg] = 2S(\kappa).
\end{multline}
\end{lemma}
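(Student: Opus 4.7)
The strategy is to prove $\limsup_{\mu\to 0^+}F(\mu)\le 2S(\kappa)\le \liminf_{\mu\to 0^+}F(\mu)$, where $F(\mu)$ denotes the bracket on the LHS of \eqref{def:Skappa1}. I couple the spectral cutoff $\mu$ to the spatial cutoff $\phi_R$ of Theorem~\ref{thm:scott} by requiring $R\ge 2/\mu$, so that $(1/|q|-\mu)_+$ is supported in $B(1/\mu)\subset B(R/2)$ where $\phi_R\equiv 1$. This lets me rewrite the divergent classical term as $W_\mu=2(2\pi)^{-3}\int\phi_R^2(q)[p^2-1/|q|+\mu]_-\,\rd p\,\rd q$, which is the essential geometric identity relating the two formulations.

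For the upper bound I use the near-minimizing sequence $\{A_R\}$ of \eqref{def:Skappa2} as a trial vector potential in $F(\mu)$. Since $\supp A_R\subset B(R/4)$, the full-space field energy $\int_{\bR^3}|\nabla\otimes A_R|^2$ coincides with $\int_{B(R/4)}|\nabla\otimes A_R|^2$, matching the field-energy term in \eqref{def:Skappa2}. For the trace, apply an IMS partition $\chi^2+\tilde\chi^2=1$ with $\chi\equiv 1$ on $B(1/\mu)$ and $\supp\chi\subset B(R/2)$; on $\supp\tilde\chi\subset\{|x|\ge 1/\mu\}$ the operator $T_1(A_R)-1/|x|+\mu$ is non-negative (because $A_R=0$ there and $-1/|x|+\mu\ge 0$), so its negative-part trace is at most the IMS error $O(R^{-2})$ times the rank of the affected spectral projection. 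The inner part matches $\tr[\phi_R(T_1(A_R)-1/|x|)\phi_R]_-$ modulo the shift $\mu\phi_R^2$, whose effect is reconciled with $W_\mu-W_R$ via a semi-classical computation on the annulus $B(R)\setminus B(R/4)$ where $A_R\equiv 0$.

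For the lower bound, take any admissible $A$ in \eqref{def:Skappa1}. The crucial variational observation is that if $\gamma^*$ is the optimizer in $\tr[\phi_R(T_1(A)-1/|x|+\mu)\phi_R]_- = -\inf_\gamma\tr(\phi_R(T_1(A)-1/|x|+\mu)\phi_R\,\gamma)$ over $0\le\gamma\le 1$, then $\phi_R\gamma^*\phi_R$ remains in that class (because $|\phi_R|\le 1$, so $\phi_R\gamma^*\phi_R\le\phi_R^2\le 1$). Substituting it into the full-space variational principle for $\tr[T_1(A)-1/|x|+\mu]_-$ yields
\[
\tr[T_1(A)-1/|x|+\mu]_- \ge \tr[\phi_R(T_1(A)-1/|x|+\mu)\phi_R]_-.
\]
Moreover, $1/\kappa\ge\beta$ for any $\beta\in(0,(2\kappa)^{-1}]$, so
\[
\tfrac{1}{\kappa}\int_{\bR^3}|\nabla\otimes A|^2\ge \tfrac{1}{\kappa}\int_{B(R/4)}|\nabla\otimes A|^2+\beta\int_{B(2R)\setminus B(R/4)}|\nabla\otimes A|^2.
\]
Combining these, rewriting $W_\mu$ via the localized integral of the previous paragraph, and handling the $\mu\phi_R^2$ shift (using $\tr[K+\mu\phi_R^2]_-\ge \tr[K]_--\mu\tr(\phi_R^2\chi_{K<0})$), the bound reduces to the bracket $G_R$ of \eqref{def:Skappa} minus an $o(1)$ error. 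Taking $R\to\infty$ gives $G_R\to 2S(\kappa)$ by Theorem~\ref{thm:scott}, and then $\mu\to 0^+$ yields $\liminf F(\mu)\ge 2S(\kappa)$.

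The hard part is the semi-classical bookkeeping: both $W_R\sim\sqrt R$ and $W_\mu\sim 1/\sqrt\mu$ diverge, and at the natural coupling $R\sim 1/\mu$ their mismatch is of order $1/\sqrt\mu$, comparable to the correction arising from the $\mu\phi_R^2$ shift (whose effect on the bound-state sum is $\mu\cdot N^-\sim \mu\cdot\mu^{-3/2}=\mu^{-1/2}$, where $N^-\sim R^{3/2}$ counts the semi-classical number of bound states of $\phi_R(T_1(A)-1/|x|)\phi_R$). The proof must show these two potentially large contributions cancel up to $o(1)$. This is essentially a non-magnetic Weyl asymptotic on the Coulomb tail region $B(R)\setminus B(R/4)$ (where $A_R\equiv 0$ in the upper bound, and where the magnetic contribution is negligible in the lower bound by \eqref{eq:21}) and can be executed by applying Theorem~\ref{thm:scMain} on dyadic shells of radius $2^j$ with $h$-scaling $f=2^{-j/2}$, $\ell=2^j$ for $R/4\le 2^j\le R$.
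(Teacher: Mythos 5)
Your plan — coupling the spatial cutoff to the spectral shift via $R\sim 1/\mu$ and comparing the two limiting procedures directly — is a genuinely different route from the paper's, which proves the lemma in a few lines: rescaling $x\to x/\mu$ turns the $\mu\to0^+$ problem into the semiclassical problem with $h=\mu^{1/2}$ for the potential $V(x)=|x|^{-1}-1$, which is of Thomas--Fermi type (conditions \eqref{Vder}, \eqref{West} with $\mu=1$, $M=1$), so Theorem~\ref{thm:scMainscott} applies and gives the result with error $O(\mu^{\e/2})$. Your route, however, has a genuine gap precisely at the step you flag as ``the hard part''. The cancellation of the two $\Theta(\mu^{-1/2})$ quantities is asserted, and the specific devices you offer cannot produce it. The crude bound $\tr[K+\mu\phi_R^2]_-\ge\tr[K]_--\mu\tr(\phi_R^2\chi_{\{K<0\}})$ is itself lossy by $\Theta(\mu^{-1/2})$: it replaces, for every shift level $s\in(0,\mu)$, the counting function at threshold $-s$ by the one at threshold $0$, and the resulting per-shell error (relative size $\sim\mu|x|$, plus the over-counting of all states living in the annulus $1/\mu\lesssim|x|\lesssim R$) integrates over $B(1/\mu)$ to a definite positive multiple of $\mu^{-1/2}$, which the difference $W_R-W_\mu$ does not return to you. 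Likewise, the mismatch between $W_\mu$ and $W_R$ is not localized in $B(R)\setminus B(R/4)$: every dyadic shell of radius $2^j\lesssim 1/\mu$ contributes $\sim\mu\,2^{3j/2}$, so a Weyl analysis confined to $R/4\le 2^j\le R$ misses a divergent amount. What is actually required is two-term semiclassics on \emph{every} shell from a fixed radius out to $\sim 1/\mu$ with the $+\mu$ carried inside each local classical term (so that each localized trace is compared to its own shifted phase-space integral, and only then do the divergent pieces cancel by construction); the same remark applies to your upper bound, where the $+\mu$ shift cannot be treated perturbatively or discarded as a raising of the operator — you need its full $\Theta(\mu^{-1/2})$ benefit to offset $W_R-W_\mu$. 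Carried out properly, this is essentially a re-proof of the relevant special case of Theorem~\ref{thm:scMainscott} (the machinery of Lemma~\ref{lm:multiscalesc} and Section~\ref{sec:scscott} for the shifted Coulomb potential), which is what the paper simply invokes after scaling.

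A second, smaller but real, problem is the field-energy bookkeeping in your lower bound. You spend the entire $\frac1\kappa\int_{\bR^3}|\nabla\otimes A|^2$ on the two field terms of the localized functional, leaving nothing to pay for the semiclassical lower bounds in the intermediate region; and you cannot invoke \eqref{eq:21} there, since that estimate holds only for (near) minimizers of the localized problems, not for an arbitrary competitor $A$ in \eqref{def:Skappa1}. For the Pauli operator the shell-localized sums of negative eigenvalues are not controlled at all without a field-energy contribution, so you must keep a reserve in every region (e.g.\ take $\beta$ strictly below $(2\kappa)^{-1}$ and retain the surplus, exactly as in Lemma~\ref{lm:multiscalesc} and the lower bound of Section~\ref{sec:scscott}); the same reserve is needed to justify the CLR-type bound behind your $\mu\tr(\phi_R^2\chi_{\{K<0\}})$ estimate, which for a Pauli operator requires control of $\int|B|^2$. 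These defects are repairable, but once repaired your argument coincides with re-deriving the semiclassical theorem the paper already has; the efficient proof is the scaling reduction to Theorem~\ref{thm:scMainscott}.
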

In the main arguments of this paper
we will always use the definition \eqref{def:Skappa} and only in Section~\ref{sec:equiv},
with the help of the main semiclassical theorem, Theorem~\ref{thm:scMainscott},
 we will  prove Lemma~\ref{S=S}.

\bigskip

Now we formulate the precise second order semiclassical
asymptotics for a potential with Coulomb like singularities
and with certain scaling properties that are satisfied by the
Thomas-Fermi potential $V^{{\rm TF}}$. We first specify the
properties of  $V^{{\rm TF}}$ that are used in the proof.

For any given $\br =(r_1, r_2, \ldots, r_M)\in \bR^{3M}$ and $\bz = (z_1, z_2, \ldots, z_M)\in \bR^M_+$
set
$$
  r_{min}: =  \min_{k\ne \ell}|r_k-r_\ell|
$$
$$
  d(x):= \min \{ |x-r_k|\; : \; k=1,2, \ldots , M\}
$$
and
\be\label{fdef}
    f(x): = \min \{ d(x)^{-1/2}, d(x)^{-2}\}.
\ee
If $M=1$, then we set $r_{min}=\infty$.
We say that a potential $V$ is of {\it Thomas-Fermi type} if it satisfies
the following two properties:
\begin{itemize}
\item[(i)] There exists $\mu\ge0$ such that for any multiindex $\alpha$ with $|\alpha|\le n_0$
(where the universal constant $n_0$ is given in Theorem~\ref{thm:scMain}) we have
\be
   \Big| \partial_x^\al \big[ V(\bz,\br, x)+\mu\big]\Big|\le C_\alpha f(x)^2 d(x)^{-|\al|},
\label{Vder}
\ee
where the constants $C_\al$ depend only $\al$, $M$ and $\max_k z_k$; 
\item[(ii)] 
For $|x-r_k|\le r_{min}/2$, we have
\be
   -C\le  V(\bz,\br, x)- \frac{z_k}{|x-r_k|} \le Cr_{min}^{-1} + C
\label{West}
\ee
where $C$ depends on $\bz$.
\end{itemize}

Then we have 

\begin{theorem}[{\cite[Theorem 7]{SS}}] The Thomas-Fermi potential $V=V^{{\rm TF}}$
satisfies the conditions \eqref{Vder} and \eqref{West}. \qed
\end{theorem}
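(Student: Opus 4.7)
The plan is to exploit the two distinct regimes of $V^{{\rm TF}}$: a pure Coulombic singularity at each nucleus and a Sommerfeld-type decay at infinity, smoothly interpolated by the Thomas--Fermi equation \eqref{TFE}. The weight $f(x)^2$ switches from $d(x)^{-1}$ for $d(x)\le 1$ to $d(x)^{-4}$ for $d(x)\ge 1$, which matches precisely these two behaviors.

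First I would verify condition \eqref{West}. Writing
\[
    V^{{\rm TF}}(x) = \frac{z_k}{|x-r_k|} + \sum_{\ell\ne k}\frac{z_\ell}{|x-r_\ell|} - \big(\varrho^{{\rm TF}} * |\cdot|^{-1}\big)(x),
\]
the second sum is bounded by $C\,r_{min}^{-1}$ on $B_{r_k}(r_{min}/2)$ by the separation hypothesis, while the Coulomb convolution of $\varrho^{{\rm TF}}$ is bounded globally since $\varrho^{{\rm TF}}\in L^1\cap L^{5/3}$ and decays like $|x|^{-6}$ at infinity (by Sommerfeld). This gives both inequalities in \eqref{West}. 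Positivity of $V^{{\rm TF}}$ (a standard fact from \cite{LS,L}) takes care of the lower bound.

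Next I would address the derivative estimates \eqref{Vder}. Away from the $r_k$'s the potential is smooth because the Thomas--Fermi equation \eqref{TFE} reads $-\Delta V^{{\rm TF}} = 4\pi\sum_k z_k\delta_{r_k} - 4\pi (3\pi^2)^{-2}(V^{{\rm TF}})^{3/2}$, so elliptic regularity with bootstrap yields $V^{{\rm TF}}\in C^\infty(\bR^3\setminus\{r_1,\ldots,r_M\})$. To get the precise quantitative bounds I would use scaling: for $x$ with $d(x)=|x-r_k|=r$ sufficiently small (say $r\le r_{min}/4$), set $y=(x-r_k)/r$ and $U(y)=r\cdot[V^{{\rm TF}}(r_k+ry)-z_k/|ry|]$. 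The rescaled function $U$ satisfies a nonlinear equation on an annulus $\tfrac12\le|y|\le 2$ with coefficients bounded uniformly in $r$, hence standard Schauder estimates give $|\partial_y^\alpha U|\le C_\alpha$, which translates back to $|\partial_x^\alpha V^{{\rm TF}}|\le C_\alpha r^{-1-|\alpha|}=C_\alpha f^2 d^{-|\alpha|}$ since $f^2=d^{-1}$ in this regime. For $d(x)\ge 1$ the same rescaling argument combined with the Sommerfeld decay bound $V^{{\rm TF}}(x)\le C|x|^{-4}$ (valid in the neutral case, which forces $\mu=0$; one can recover the general case by translating) yields $|\partial^\alpha V^{{\rm TF}}|\le C_\alpha d^{-4-|\alpha|}=C_\alpha f^2 d^{-|\alpha|}$.

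The main obstacle is obtaining uniform constants across the two regimes --- in particular proving the Sommerfeld upper bound $V^{{\rm TF}}(x)\le C(1+|x|)^{-4}$ together with the matching derivative estimates in the intermediate region $d(x)\sim 1$, where neither the Coulombic nor the asymptotic description is exact. This is handled by combining the comparison principle for the TF equation (see \cite{LS,L}) with the scaling invariance \eqref{Vscale} and the Harnack-type estimates that come from the uniform ellipticity of the linearized equation on balls of unit radius at each scale. Once these are in place, covering $\bR^3\setminus\{r_k\}$ by the dyadic annuli $\{d(x)\sim 2^j\}$ and applying the rescaled Schauder estimates on each produces the bounds \eqref{Vder} with constants depending only on $\max_k z_k$ and $M$, as required. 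This is exactly the argument carried out in detail in \cite[Theorem 7]{SS}, which we invoke.
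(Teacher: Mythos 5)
The paper itself offers no proof of this statement---it is quoted directly from \cite[Theorem 7]{SS}---and your proposal ultimately rests on the same citation, so it takes essentially the same approach; your preliminary sketch (the decomposition and boundedness of $\varrho^{\rm TF}*|x|^{-1}$ for \eqref{West}, and rescaled elliptic estimates plus Sommerfeld decay on dyadic annuli in $d(x)$ for \eqref{Vder}) is a reasonable outline of how the cited argument goes. One minor slip: by \eqref{TFE} the Thomas--Fermi equation reads $-\Delta V^{\rm TF}=4\pi\sum_k z_k\delta_{r_k}-4\pi(3\pi^2)^{-1}\big(V^{\rm TF}\big)^{3/2}$, i.e.\ with the factor $(3\pi^2)^{-1}$ rather than $(3\pi^2)^{-2}$.
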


With these ingredients in Section~\ref{sec:scscott} we will prove
the following second order semiclassical asymptotics:
\begin{theorem}[Semiclassical asymptotics with Scott term]\label{thm:scMainscott} 
There exist universal constants $n_0 \in \bN$ and $\kappa_0>0$  
such
that the following is satisfied.
Let $V$ be a real valued potential satisfying \eqref{Vder} and \eqref{West}.
Then
\begin{multline}
  \lim_{h\to0} \Bigg| \inf_A \Big( \tr [T_h(A)- V]_- + \frac{1}{\kappa  h^2}
\int_{\bR^3} |\nabla \otimes A|^2\Big) 
\\
  - 2(2\pi h)^{-3}\int_{\bR^3\times\bR^3}\big[ p^2 - V(q)\big]_- \rd q \rd p 
- 2h^{-2}\sum_{k=1}^M z_k^2 S(z_k\kappa) \Bigg|  =0
\label{locscMain1}
\end{multline}
for any $0< \kappa\le\kappa_0$.

Moreover, there exist some  $\e>0$ and  $h_0>0$ and there exist an admissible vector potential $A$
and a density matrix $\gamma$, whose density $\varrho_\gamma$
satisfies
\be
\int\varrho_\gamma \le \frac{1}{3\pi^2} h^{-3}\int [V]_-^{3/2}
 + Ch^{-2+\e}
\label{densitycontroll}
\ee
and
\be
  D\Big(\varrho_\gamma - (3\pi^2)^{-1} h^{-3} [V]_-^{3/2}\Big)
   \le Ch^{-5+\e}
\label{densitycontroll65}
\ee
for any $0<h\le h_0$, 
such that
\begin{align}
\tr    [T_h(A) &- V]\gamma + \frac{1}{\kappa  h^2}
\int |\nabla \otimes A|^2 \label{trialenergy}\\
  &\le  2(2\pi h)^{-3}\iint \big[ p^2 - V(q)\big]_- \rd q \rd p
 +  2h^{-2}\sum_{k=1}^M z_k^2 S(z_k\kappa) + o(h^{-2}). \non
\end{align}
The constants $C$ in the right hand side of these
estimates depend only on $\kappa_0$, $h_0$ 
and on the constants  in \eqref{Vder} and \eqref{West}. The factor 2 in front of
the semiclassical term accounts for the spin and it is present only 
for the Pauli case.

\end{theorem}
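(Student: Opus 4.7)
The plan is to prove both the lower and upper bounds by a multiscale analysis that isolates a shrinking inner core $\cC_k := B_{r_k}(h^{2-\delta}/4)$ around each nucleus, surrounded by a buffer annulus $\cA_k := B_{r_k}(2h^{2-\delta})\setminus \cC_k$, from the exterior region (the complement); here $\delta>0$ is a small parameter. Introduce smooth cutoffs $\chi_0,\chi_1,\ldots,\chi_M\ge 0$ with $\sum_k \chi_k^2=1$ and $\supp\chi_k\subset\cC_k\cup\cA_k$ for $k\ge 1$. For the lower bound, split the magnetic energy exactly as $\int|\nabla\otimes A|^2 =\sum_k\int\chi_k^2|\nabla\otimes A|^2$, then reallocate the coefficient so that weight $(\kappa h^2)^{-1}$ applies on $\cC_k$, weight $\beta z_k h^{-2}$ with $\beta<(2z_k\kappa)^{-1}$ applies on $\cA_k$, and a strictly positive weight remains on the exterior. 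Applying IMS as in \eqref{IMS3} yields
\begin{equation*}
  \tr[T_h(A)-V]_- + \frac{1}{\kappa h^2}\int|\nabla\otimes A|^2 \ge \sum_{k=0}^M \cE_k(A) - o(h^{-2}),
\end{equation*}
where $\cE_k$ is the local energy in region $k$ with the allocated field weight; the IMS localization error produces an additive potential $\sim h^{-2+2\delta}$ supported in the annular transition regions, controlled via Lieb--Thirring.

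For the exterior ($k=0$), apply the continuous multiscale partition from Lemma \ref{lem:PartUnityMultScale} with $\ell(u)\sim d(u):=\min_k|u-r_k|$ and $f(u)$ as in \eqref{fdef}; property \eqref{Vder} ensures admissibility, and $h\le h_0 f_u\ell_u$ on $\supp\chi_0$ since $d(u)\ge h^{2-\delta}$ there. Theorem \ref{thm:scMain} then applies on each ball $B_u(\ell_u)$ with local error $O(h^{-2+\e}f_u^{4-\e}\ell_u^{2-\e})$. Integrating against the weight $\rd u/\ell_u^3$ gives a total exterior error of $O(h^{-2+c\e\delta})=o(h^{-2})$, so the exterior contribution is $\frac{2}{(2\pi h)^3}\int\chi_0^2(q)[p^2-V(q)]_-\rd p\rd q$ up to $o(h^{-2})$.

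For each core near $r_k$ apply the rescaling $x=r_k+(h^2/z_k)y$ so that
\begin{equation*}
   T_h(A)- \frac{z_k}{|x-r_k|} = \frac{z_k^2}{h^2}\Big(T_{h=1}(\tilde A) - \frac{1}{|y|}\Big),\qquad \tilde A(y):=\frac{h}{z_k}A\Big(r_k+\frac{h^2y}{z_k}\Big).
\end{equation*}
The allocated magnetic energies in $\cC_k$ and $\cA_k$ transform respectively into $\frac{1}{z_k\kappa}\int_{B(R/4)}|\nabla_y\otimes\tilde A|^2$ and $\beta\int_{B(2R)\setminus B(R/4)}|\nabla_y\otimes\tilde A|^2$ with $R:=z_kh^{-\delta}\to\infty$, so $\cE_k(A)$ becomes exactly $z_k^2 h^{-2}$ times the Scott functional of Theorem \ref{thm:scott} at effective coupling $\tilde\kappa:=z_k\kappa$. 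The non-Coulomb part of $V$ contributes $o(1)$ in the rescaled coordinates by virtue of \eqref{West}. Hence
\begin{equation*}
   \cE_k(A) \ge \frac{z_k^2}{h^2}\bigg(\frac{2}{(2\pi)^3}\int\phi_R^2(q)\Big[p^2-\frac{1}{|q|}\Big]_-\rd p\rd q + 2S(z_k\kappa)\bigg) + o(h^{-2}),
\end{equation*}
and the substitution $p=(z_k/h)q$ identifies the Weyl term above with $\frac{2}{(2\pi h)^3}\int\chi_k^2(q)[p^2-V(q)]_-\rd p\rd q$ up to $o(h^{-2})$. Summing the exterior and the $M$ core contributions reconstructs the global Weyl integral and produces the claimed Scott term, giving the lower bound in \eqref{locscMain1}.

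The matching upper bound together with the density estimates \eqref{densitycontroll} and \eqref{densitycontroll65} is obtained from a trial pair $(A,\gamma) = (A_{ext}+\sum_k A_k,\,\gamma_{ext}+\sum_k\gamma_k)$, where each $(A_k,\gamma_k)$ is the rescaled near-minimizer from the last statement of Theorem \ref{thm:scott} (with $\supp A_k\subset\cC_k$ so that field energies are additive), and $(0,\gamma_{ext})$ is a standard coherent-state semiclassical trial state on the exterior (whose density $\varrho_{\gamma_{ext}}$ automatically satisfies \eqref{densitycontroll} and \eqref{densitycontroll65} by Weyl calculus). The main obstacle is the compatibility at $\partial\cC_k$: the multiscale errors in the exterior, the reallocated buffer field energies in $\cA_k$, and the Scott asymptotics in the rescaled interior must dovetail into the global Weyl integral without leaving a residue of order $h^{-2}$, which is resolved by tuning $\delta$ and exploiting the scale-invariance of the Weyl term under the rescaling. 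The $D$-norm bound \eqref{densitycontroll65}, needed later for the mean-field reduction \eqref{eq:Ztoh}, is the most delicate ingredient since it is sharper than pointwise density control; it is obtained by regularizing $\varrho_\gamma - (3\pi^2)^{-1}h^{-3}[V]_-^{3/2}$ at a sub-semiclassical scale and then applying Hardy--Littlewood--Sobolev.
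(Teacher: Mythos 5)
Your overall architecture is the same as the paper's (IMS localization into shrinking balls around the nuclei which, after the rescaling $x=r_k+(h^2/z_k)y$, are compared with the Scott functional of Theorem~\ref{thm:scott}; multiscale semiclassics via Lemma~\ref{lem:PartUnityMultScale} and Theorem~\ref{thm:scMain} outside; a glued trial state for the upper bound), and your choice of core radius $h^{2-\delta}$ instead of the paper's $h^{1-\xi}$ is workable. However, two of your central claims do not hold as stated. First, the exterior: the assertion that $h\le h_0 f_u\ell_u$ holds on all of $\supp\chi_0$ because $d(u)\ge h^{2-\delta}$ is false at large distances when $\mu=0$, since by \eqref{fdef} one has $f_u\ell_u\sim d(u)^{-1}$ for $d(u)\ge 1$; Theorem~\ref{thm:scMain} is therefore only applicable for $d(u)\lesssim h^{-1}$, and a genuinely separate outer region is needed (the paper cuts at $R\sim h^{-1/2}$, resp.\ $R_\mu$ for $\mu\ne0$) which is controlled by the magnetic Lieb--Thirring inequality \eqref{genlt}/\eqref{lth}, together with a verification that the Weyl integral over that region is $o(h^{-2})$. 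This piece is simply missing from your argument.

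Second, and more seriously, the core energy is \emph{not} ``exactly'' the rescaled Scott functional. After IMS you carry, inside $B_{r_k}(2h^{2-\delta})$, the localization potential $Ch^2|\nabla\chi_k|^2\sim h^{-2+2\delta}$ and the bounded discrepancy $V-z_k|x-r_k|^{-1}=O(1+r_{min}^{-1})$ from \eqref{West}; but the functional in \eqref{def:Skappa} uses \emph{all} of the kinetic energy and the \emph{full} field weight $1/(z_k\kappa)$ on the core, so there is no spare kinetic or field energy there with which to absorb these error potentials, and you cannot instead perturb the field coefficient or the Coulomb coupling, because no continuity of $S(\kappa)$ in $\kappa$ is available (only monotonicity and boundedness are proved). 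The paper's device is to extract the factor $(1-2\e)$ simultaneously from kinetic, Coulomb and field terms, so that the bracket is $(1-2\e)$ times the Scott functional at the \emph{same} $\kappa$, and to spend the reserved $\e$-fractions on the errors; the leftover piece $\e\,(T_h(A)-2z_k|x-r_k|^{-1})$ cannot be handled by the $L^{5/2}$--$L^4$ Lieb--Thirring bound (the fourth power of Coulomb is not locally integrable) and requires the running-energy-scale estimate of \cite{ES3}, and any eigenvalue-counting bound for the Pauli operator needs control of the a priori unknown field. Your one-line appeals to ``controlled via Lieb--Thirring'' and ``$o(1)$ by \eqref{West}'' bypass precisely this mechanism. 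On the upper bound, your outline (rescaled near-minimizers from \eqref{def:Skappa2} in the cores plus a semiclassical exterior trial state) matches the paper, but you give no argument for the density of the core pieces: \eqref{densitycontroll} and especially \eqref{densitycontroll65} require bounds of the type $\sup_y\int\theta_{r,k}^2\,\varrho_{\gamma_k}(x)|x-y|^{-1}\rd x\le Ch^{-3}r^{1/2}$, which the paper derives from a hydrogen-with-self-generated-field stability estimate, while the exterior requires the quantitative density estimates of Proposition~\ref{prop:SS}; ``regularizing at a sub-semiclassical scale and applying Hardy--Littlewood--Sobolev'' is not a substitute for these inputs.
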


{\it Convention:}  
All integrals, unless specified 
otherwise, are on $\bR^3$.

\bigskip

\section{Proof of the Main Theorem~\ref{thm:main}}\label{sec:reduc}

In this section we complete the proof of our main theorem.

\subsection{Lower bound}

The first step is to reduce the many body problem to 
a one body problem. We will use the following
Lemma whose proof relies on the Lieb-Oxford inequality \cite{LO}.

\begin{lemma}\cite[Lemma~4.3]{ES3}
There is a universal constant $C_0>0$ such that for any $\Psi \in \bigwedge_{1}^N C_0^{\infty}({\mathbb R}^3)
 \otimes {\mathbb C}^2$ with $\| \Psi \|_2 = 1$, for any non-negative function 
$\rho\,:\,{\mathbb R}^3 \rightarrow {\mathbb R}$ with $D(\rho,\rho) < \infty,$
 for any compactly supported and admissible $A$ and for any $\delta >0$ we have
\begin{align}
\Big\langle \Psi, \Big[ \delta \sum_{i=1}^N T^{(i)}(A) + \sum_{i<j} \frac{1}{|x_i-x_j|} \Big] 
\Psi \Big\rangle &+ C_0 \int_{{\mathbb R}^3} |\nabla \times A|^2 \nonumber \\
&\geq
-D(\rho,\rho) + \Big\langle \Psi, \sum_{i=1}^N (\rho * |x_i|^{-1}) \Psi \Big\rangle
 - C \delta^{-1} N. \non \qed
\end{align}
\end{lemma}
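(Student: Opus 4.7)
The strategy is to use the Lieb--Oxford inequality to reduce the two-body Coulomb repulsion to a one-body expression in $\rho_\Psi$ (the one-particle density of $\Psi$), extract the desired direct term in $\rho$ by completing the square on $D$, and finally absorb the residual one-body errors using the positive kinetic energy $\delta\sum T^{(i)}(A)$ and the magnetic field energy $C_0\int|\nabla\times A|^2$.

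Concretely, the Lieb--Oxford inequality \cite{LO} gives
$$
\Big\langle \Psi, \sum_{i<j}|x_i-x_j|^{-1}\Psi\Big\rangle \ge D(\rho_\Psi,\rho_\Psi) - C_{\mathrm{LO}}\int \rho_\Psi^{4/3},
$$
while the positivity $D(\rho_\Psi-\rho,\rho_\Psi-\rho)\ge 0$ yields
$$
D(\rho_\Psi,\rho_\Psi) \ge \Big\langle \Psi, \sum_i (\rho*|x|^{-1})(x_i)\Psi\Big\rangle - D(\rho,\rho).
$$
Combining these, the target inequality reduces to absorbing $C_{\mathrm{LO}}\int \rho_\Psi^{4/3}$ into $\delta\sum\langle T^{(i)}(A)\rangle + C_0\int|\nabla\times A|^2 + C\delta^{-1}N$. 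By the pointwise Young inequality $\int\rho_\Psi^{4/3}\le \epsilon\int \rho_\Psi^{5/3} + (4\epsilon)^{-1}N$ (using $\int\rho_\Psi=N$), it suffices to bound $\int \rho_\Psi^{5/3}$ in terms of $\delta^{-1}\bigl(\delta\sum\langle T^{(i)}(A)\rangle + C_0\int B^2\bigr)$ with $B=\nabla\times A$.

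This last bound is the technical core. One uses the Pauli identity $T^{(i)}(A)=(-i\nabla_i+A(x_i))^2+\bsigma_i\cdot B(x_i)$ to split
$$
\sum_i\langle T^{(i)}(A)\rangle \ge \Big\langle\Psi,\sum_i (-i\nabla_i+A(x_i))^2\Psi\Big\rangle - \int|B|\,\rho_\Psi,
$$
then invokes the magnetic Lieb--Thirring kinetic-energy inequality $\langle\Psi,\sum(-i\nabla_i+A)^2\Psi\rangle\ge K\int \rho_\Psi^{5/3}$ on antisymmetric $\Psi$ (which holds with the same constant as in the non-magnetic case via the diamagnetic comparison), and absorbs $\int|B|\rho_\Psi$ by Cauchy--Schwarz combined with interpolation against $\int B^2$ and $\int \rho_\Psi^{5/3}$ (the latter using a Hoffmann--Ostenhof/Sobolev bound to cover the exponent gap between $L^{5/3}$ and $L^2$). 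Choosing $\epsilon\sim\delta$ in the Young step for $\rho_\Psi^{4/3}$, and picking the weights in the Cauchy--Schwarz step so that the $B^2$-coefficient produced matches the universal $C_0$, leaves exactly the claimed $C\delta^{-1}N$ as the only leftover.

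The main obstacle is the $\int|B|\rho_\Psi$-absorption: since the Pauli operator admits zero modes, there is no bound $\sum\langle T^{(i)}(A)\rangle \ge K\int \rho_\Psi^{5/3}$ on its own, and one must couple kinetic and field energies --- essentially via a Lieb--Loss--Solovej stability-type estimate --- to produce a coercive control of $\int\rho_\Psi^{5/3}$. The accompanying parameter bookkeeping to keep $C_0$ independent of $\delta$ is routine once this coupled inequality is in hand.
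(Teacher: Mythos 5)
Your opening moves (Lieb--Oxford followed by $D(\rho_\Psi-\rho,\rho_\Psi-\rho)\ge 0$, then a Young step producing $C\delta^{-1}N$) are exactly the intended reduction --- the paper itself does not reprove this lemma but cites \cite[Lemma~4.3]{ES3}, noting only that the proof rests on the Lieb--Oxford inequality. The gap is in your absorption of the exchange term: you route it through $\int\rho_\Psi^{5/3}$ and claim a coercive bound of the form $\int\rho_\Psi^{5/3}\le C\big(\sum_i\langle T^{(i)}(A)\rangle+\delta^{-1}C_0\int|\nabla\times A|^2\big)$. No inequality of this shape exists, even with kinetic and field energy coupled, because the field energy enters only linearly: take a Loss--Yau zero mode and scale, $\psi_\lambda(x)=\lambda^{-3/2}\psi(x/\lambda)$, $A_\lambda(x)=\lambda^{-1}A(x/\lambda)$; then the Pauli energy is identically zero, $\int|B_\lambda|^2=\lambda^{-1}\int|B|^2$, while $\int\rho_\lambda^{5/3}=\lambda^{-2}\int\rho^{5/3}$, so the bound fails as $\lambda\to0$ for any fixed constants. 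The same homogeneity defect shows up concretely in your sketch: after $T(A)\ge(p+A)^2-|B|$, Lieb--Thirring plus Hoffmann--Ostenhof/Sobolev give $\int|B|\rho_\Psi\le\|B\|_2\|\rho_\Psi\|_2\le C\|B\|_2\,\mathcal{T}^{3/4}$ with $\mathcal{T}=\sum_i\langle(p_i+A)^2\rangle$, and any Young splitting of $\|B\|_2\mathcal{T}^{3/4}$ produces an error proportional to $\big(\int|B|^2\big)^{2}$, never a universal constant times $\int|B|^2$; ``choosing the weights so that the $B^2$-coefficient matches $C_0$'' cannot fix a wrong scaling.

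The repair is to absorb the exchange error at the level of $\int\rho_\Psi^{4/3}$, which has the \emph{same} scaling as $\int|B|^2$, rather than passing to $\int\rho_\Psi^{5/3}$. For instance, apply the magnetic Lieb--Thirring inequality of \cite{LLS} (quoted in the paper as Theorem~\ref{thm:lls}, \eqref{genlt} with $h=1$) to the one-particle density matrix $\gamma_\Psi$ ($0\le\gamma_\Psi\le1$) with the truncated trial potential $V=c\,\delta^{-1}\rho_\Psi^{1/3}\,{\bf 1}_{\{\rho_\Psi\ge\delta^{-3}\}}$: the truncation gives $\int_{\{\rho_\Psi<\delta^{-3}\}}\rho_\Psi^{4/3}\le\delta^{-1}N$ and $\int V^{5/2}\le C\delta^{-2}N$, while the term $\big(\int|B|^2\big)^{3/4}\big(\int V^4\big)^{1/4}$ is split by Young into a small multiple of $\delta^{-1}\int\rho_\Psi^{4/3}$ plus a \emph{universal} constant times $\delta^{-1}\int|B|^2$. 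Multiplying back by $\delta$ yields precisely $C_{\mathrm{LO}}\int\rho_\Psi^{4/3}\le\delta\sum_i\langle T^{(i)}(A)\rangle+C_0\int|\nabla\times A|^2+C\delta^{-1}N$ with $C_0$ universal, which is what the lemma needs; this scale-respecting use of the Lieb--Loss--Solovej bound (rather than a coercive control of $\int\rho_\Psi^{5/3}$) is the missing ingredient in your plan.
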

Thus, applying this lemma for $\varrho=\varrho_\Psi$, the density of $\Psi$, we get
\begin{align}
\langle \Psi, & H_N(\bbZ, \bbR, A) \Psi\rangle \non\\
& \ge 
\Big\langle \Psi, \sum_{j=1}^N \big[ (1-\delta)T^{(j)}(A)
 - V(\bbZ, \bbR, x_j) \big]\Psi\Big\rangle + D(\varrho_\Psi) - C\delta^{-1} N 
-C_0 \int_{{\mathbb R}^3} |\nabla \times A|^2\non\\
&\ge \Big\langle \Psi, \sum_{j=1}^N \big[ (1-\delta)T^{(j)}(A)
 - V^{{\rm TF}}(\bbZ, \bbR, x_j) \big]\Psi\Big\rangle 
- D(\varrho^{{\rm TF}}_{\bbZ, \bbR}) - C\delta^{-1} N-C_0 \int_{{\mathbb R}^3} |\nabla \times A|^2  \non\\
&\ge \Tr \big[ (1-\delta)T(A)
 - V^{{\rm TF}}_{\bbZ, \bbR} \big]_-
- D(\varrho^{{\rm TF}}_{\bbZ, \bbR}) - C\delta^{-1} N-C_0 \int_{{\mathbb R}^3} |\nabla \times A|^2  ,  \non
\end{align}
where in the second step we used the definition \eqref{VTFdef} and
that $ D(\varrho_\Psi-
 \varrho^{{\rm TF}}_{\bbZ, \bbR})\ge0$. We now add the field energy 
\eqref{fielden} and absorb the $-C_0\int |\nabla \times A|^2$
term at the expense of factor $(1-\delta)$ by using
$$
   \frac{1}{8\pi\al^2} - C_0 \ge \frac{1-\delta}{8\pi\al^2}
$$
as long as $\delta\gg Z^{-1}$ and $ Z\al^2$ is bounded.
Now we use the
scaling properties \eqref{Vscale} of the Thomas-Fermi theory
 with $h:=Z^{-1/3}$ and $\kappa:= 8\pi Z\al^2$ to get
\begin{align}
  E_{\rm abs}& (\bbZ, \bbR, \al)  \non\\ 
& \ge Z^{7/3}\Bigg[ h^3 (1-2\delta)\inf_A \Big( 
 \Tr \big[ T_h(A)
 - V^{{\rm TF}}_{\bz, \br} \big]_- + 
\frac{1}{\kappa h^2}\int |\nabla\otimes A|^2\Big) - 
 D(\varrho^{{\rm TF}}_{\bz, \br})\Bigg] -   C\delta^{-1} N
 \non\\
& \quad + \delta Z^{7/3}h^3\inf_A \Big( 
 \Tr \big[ T_h(A)
 - 2V^{{\rm TF}}_{\bz, \br} \big]_- + 
\frac{1}{\kappa h^2}\int |\nabla\otimes A|^2\Big)
 \non\\ 
&\ge  Z^{7/3}\Bigg[
 \frac{2}{(2\pi)^{3}}\iint\Big[p^2 -  V^{{\rm TF}}_{\bz, \br}(q)\Big]_-
 \rd q \rd p 
+2h\sum_{k=1}^M z_k^2 S(z_k\kappa) -  o(h)  - 
 D(\varrho^{{\rm TF}}_{\bz, \br})\Bigg] \non\\
&\quad - C\delta Z^{7/3}-  C\delta^{-1} N
 \non \\
&\ge  Z^{7/3}\Bigg[ E^{\rm TF}_{\bz,\br}
+ 2h\sum_{k=1}^M z_k^2 S(8\pi Z_k\al^2) -C\delta- o(h) \Bigg] -   C\delta^{-1} N
 \non \\
&\ge  E^{\rm TF}_{\bbZ,\bbR}+ 2Z^2\sum_{k=1}^M z_k^2 S(8\pi Z_k\al^2) 
 - o(Z^2) - CZ^{2-1/6}.
\end{align}
{F}rom the second to the third line we used 
\eqref{locscMain1} from Theorem~\ref{thm:scMainscott}
twice. 
In the main term the $1-2\delta$ prefactor can be easily
removed for a lower bound since the term $\inf_A\big(\ldots\big)$ is non-positive.
In the error term we used that $\inf_A\big(\ldots\big)$ is bounded by $O(h^{-3})$
since $\int  [V^{{\rm TF}}_{\bz, \br}]^{5/2} \le C$
with a constant depending only on $M$ (see \eqref{Vder}).
Then we used \eqref{ETF} and finally the scaling
relation \eqref{Vscale} for the energy.
In the last step we also inserted the optimal $\delta= Z^{-5/6}$
and used $N=Z$. This completes the proof of the lower bound
in Theorem~\ref{thm:main}.

\subsection{Upper bound}

By Lieb's variational principle \cite{L2} and neglecting the
exchange term, after a rescaling by $h=Z^{-1/3}$
 the energy of the particles \eqref{eq:parten} can 
be estimated by
\begin{align}
E(\bbZ, \bbR, A) & \le
   Z^{4/3} \Big( \Tr \big[ T_h(A) - V(\bz, \br, \cdot)\big]\gamma 
  + Z D( Z^{-1}\varrho_\gamma) \Big)\non\\
 & =  Z^{4/3} \Big( \Tr \big[ T_h(A) - V^{\rm TF}_{\bz, \br}\big]\gamma 
  + Z D\big( Z^{-1}\varrho_\gamma- \varrho^{\rm TF}_{\bz, \br}\big)
  - Z D\big( \varrho^{\rm TF}_{\bz, \br}\big) \Big)
\end{align}
for any density matrix $\gamma$ on $L^2(\bR^3, \bC^2)$ with
density $\varrho_\gamma(x) ={\mbox{Tr}}_{\bC^2}\gamma(x,x)$
 and with $\Tr \gamma = \int \varrho(x) \le N= Z$. Adding the field energy,
 we have
\begin{align}
E_{\rm abs}(\bbZ, \bbR, \al) \le & \; Z^{4/3}\Big( 
 \Tr \big[ T_h(A) -  V^{\rm TF}_{\bz, \br}\big]\gamma  + \frac{1}{8\pi Z\al^2} h^{-2}
  \int |\nabla \otimes A|^2  \non\\
 &  \qquad + Z D\big( Z^{-1}\varrho_\gamma- \varrho^{\rm TF}_{\bz, \br}\big)
  - Z D\big( \varrho^{\rm TF}_{\bz, \br}\big) \Big)
\end{align}
for any admissible vector potential $A$.

Let $\kappa= 8\pi Z\al^2$, set $V= V^{\rm TF}_{\bz, \br}$
 and
choose an admissible vector potential $A$ and
a density matrix $\wt \gamma$ according to \eqref{densitycontroll},
\eqref{densitycontroll65} and \eqref{trialenergy}. In particular
$$
  \tr \wt\gamma = \int \varrho_{\wt \gamma} \le Z (1+ CZ^{-1/3-\e/3})
$$
using that 
$$
  \frac{1}{3\pi^2} \int [V^{\rm TF}_{\bz, \br}]_-^{3/2} 
= \int \varrho_{\bz, \br}^{\rm TF}  =1
$$
by  \eqref{TFE} and the constraint $\int\varrho = \sum_k z_k =1$ in
\eqref{variation}.
We thus define $\gamma =  (1+ CZ^{-1/3-\e/3})^{-1}\wt \gamma$ so that
the constraint
$\tr \gamma\le Z$ is satisfied. Moreover, from \eqref{densitycontroll65}
we have
$$
    D\big( Z^{-1} \varrho_{\wt\gamma}  -  \varrho_{\bz, \br}^{\rm TF}\big)
  \le C Z^{-1/3-\e/3}.
$$
{F}rom the triangle inequality for $\sqrt{D}$ we obtain
$$
    D\big( Z^{-1} \varrho_{\gamma}  -  \varrho_{\bz, \br}^{\rm TF}\big)\le
   C(1+ CZ^{-1/3-\e/3})^{2}
 D\big( Z^{-1} \varrho_{\wt\gamma}  -  \varrho_{\bz, \br}^{\rm TF}\big)
 +  CZ^{-2/3-2\e/3} D\big(  \varrho_{\bz, \br}^{\rm TF}\big) \le  C Z^{-1/3-\e/3}
$$
using that $D\big(  \varrho_{\bz, \br}^{\rm TF}\big)\le C$ and $\e<1$.
In summary, we obtained
\begin{align}
E_{\rm abs}(\bbZ, \bbR, \al) \le & \; Z^{4/3}\Big( 
 \Tr \big[ T_h(A) -  V^{\rm TF}_{\bz, \br}\big]\gamma  + \frac{1}{8\pi Z\al^2} h^{-2}
  \int |\nabla \otimes A|^2 - Z D\big( \varrho^{\rm TF}_{\bz, \br}\big) +
 C Z^{2/3-\e/3}  \Big) \non
\end{align}
for the vector potential $A$ from Theorem~\ref{thm:scMainscott}.
Since  \eqref{trialenergy}
holds for $\wt\gamma$ and the energy of $\gamma$ and $\wt\gamma$ 
differ by a factor $(1+ CZ^{-1/3-\e/3})$, we obtain, after the usual rescaling,
\begin{align}
E_{\rm abs}& (\bbZ, \bbR, \al)   \non\\
 &\le Z^{7/3} \Bigg[ 2(2\pi )^{-3}\iint
 \big[ p^2 - V^{\rm TF}_{\bz,\br}(q)\big]_- \rd q \rd p - D\big( \varrho^{\rm TF}_{\bz, \br}\big)\Bigg]
 +  2Z^2\sum_{k=1}^M z_k^2 S(z_k\kappa) 
 + o(Z^{2}). \non
\end{align}
Using the identity \eqref{ETF}, we
thus obtain the upper bound in  \eqref{eq:main} 
which completes the proof of Theorem~\ref{thm:main}. \qed

\section{The Scott term}\label{sec:scott}
In this section we give the proof of Theorem~\ref{thm:scott}.

\begin{proof}[Proof of Theorem~\ref{thm:scott}]
First we prove the existence of the limit 
\eqref{def:Skappa}.
Define, for $R>0$ (and with $\kappa, \beta$ as in the statement of the theorem) 
\begin{align}
{\mathcal E}_{R,\kappa,\beta}(A): &= 
\tr \Big[ \phi_R \Big( T_{h=1}(A) - 
 \frac{1}{|x|} \Big) \phi_R\Big]_- + \frac{1}{\kappa} \int_{B(R/4)} |\nabla\otimes A|^2 
 \label{cEdef}  \\
&\qquad + \beta \int_{B(2R)\setminus B(R/4)} |\nabla\otimes A|^2 
 - 2(2\pi)^{-3}\int_{\bR^3\times \bR^3} \phi_R^2(q)
 \Big[ p^2 - \frac{1}{|q|}\Big]_- \rd p \rd q  \non
\end{align}
and
\begin{align}\label{cSdef}
S(R,\kappa,\beta):= \frac{1}{2}\inf_A {\mathcal E}_{R,\kappa,\beta}(A) .
\end{align}
We will prove later in \eqref{Slow} that this infimum is not minus infinity.

\medskip
\noindent{\bf Step 1: A-priori upper bound.}\\
Upon inserting $A=0$ we get $S(R,\kappa,\beta) \leq S(R,0,\infty)$. Since we
 know that $\lim_{R\rightarrow \infty} S(R,0,\infty) = S(0)$ exists 
\cite[Lemma~4.3 with $\al=0$]{SSS}, we
obtain that $S(R,\kappa,\beta)$ is bounded from above uniformly in $R$. I.e.,
there exists a constant $K_0$ (independent of $R, \kappa$ and $\beta$) such that
\begin{align}\label{eq:UpperNonmagnetic}
S(R,\kappa,\beta) \leq S(R, 0, \infty) \leq K_0.
\end{align}

\medskip

\noindent{\bf Step 2: Lower bound and semiclassics.}\\
Consider $r< R/8$. Let $\wt \phi_{r}$ satisfy $\phi_{r}^2 + \wt
\phi_{r}^2 = 1$ and define $W_{r} = |\nabla \phi_{r}|^2 + |\nabla \wt\phi_{r}|^2$ 
and $\phi_{r,R} = \phi_R \wt \phi_r$.
Split
\begin{align}\label{splitting}
\phi_R \big(T_{h=1}(A) -  \frac{1}{|x|} \big)\phi_R&\geq \phi_r  \big(T_{h=1}(A)
 -  \frac{1}{|x|} \big)\phi_r+\phi_{r,R} \big(T_{h=1}(A) -  \frac{1}{|x|} \big)\phi_{r,R}- W_{r}\non\\
&=\phi_r  \big(T_{h=1}(A) -  \frac{1}{|x|} - W_{r}\big)\phi_r+\phi_{r,R}\big(T_{h=1}(A)
 -  \frac{1}{|x|} - W_{r}\big)\phi_{r,R}.
\end{align}
The second term will be estimated by borrowing a small part of the field energy.
We will use that  the local semiclassical result Theorem~\ref{thm:scMain} holds with
 any positive constant in front of the field energy. However, this regime has to
be treated with multiscaling. The proof of the following lemma is postponed
to the end of this section.

\begin{lemma}\label{lm:multiscalesc}  For any $r_0>0$ and $\delta>0$ and
for any $r, R$ satisfying $r_0\le r\le R/8$ we have
\begin{align}
\inf_A \Bigg\{\tr\Big[\phi_{r,R}\big(T_{h=1}(A) -  \frac{1}{|x|}-W_r\big)\phi_{r,R}\Big]_{-}
 &+ \delta \int_{B(2R)\setminus B(r/4)} |\nabla \otimes A|^2\Bigg\}\label{eq:multi} \\
&\geq 2 (2\pi)^{-3}\int \phi_{r,R}^2(x)\Big[ p^2  -  \frac{1}{|x|}\Big]_{-}
 \rd x \rd p
- C_{\delta, r_0} r^{-\e/2 }, \non
\end{align}
where $\e>0$ is the exponent obtained from Theorem~\ref{thm:scMain}
and the constant in the error term depends only on $r_0$ and $\delta$.
\end{lemma}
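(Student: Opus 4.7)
Since $W_r\geq 0$, we have the operator inequality $\phi_{r,R}(T_1(A)-\tfrac{1}{|x|}-W_r)\phi_{r,R}\leq \phi_{r,R}(T_1(A)-\tfrac{1}{|x|})\phi_{r,R}$, and therefore
\begin{align*}
\tr\big[\phi_{r,R}(T_1(A)-\tfrac{1}{|x|}-W_r)\phi_{r,R}\big]_{-}
\;\geq\; \tr\big[\phi_{r,R}(T_1(A)-\tfrac{1}{|x|})\phi_{r,R}\big]_{-}.
\end{align*}
It therefore suffices to prove the lemma with $W_r$ dropped. For this I would perform a multiscale semiclassical analysis with the scaling functions $\ell(u)=c|u|$ and $f(u)=|u|^{-1/2}$ natural for the Coulomb potential. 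For $c$ small enough, any ball $B_u(\ell_u)$ with $u\in\supp\phi_{r,R}$ lies inside $B(2R)\setminus B(r/4)$ and away from the origin, and on each such ball the derivative bound $|\partial^n(1/|x|)|\leq C_n f(u)^2\ell(u)^{-|n|}$ holds.

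Using the partition of unity from Lemma~\ref{lem:PartUnityMultScale} with localization functions $\eta_u:=\psi_u\phi_{r,R}$, the multiscale IMS argument analogous to \eqref{IMS3} yields
\begin{align*}
\tr\big[\phi_{r,R}(T_1(A)-\tfrac{1}{|x|})\phi_{r,R}\big]_{-} + \delta\int_{B(2R)\setminus B(r/4)}|\nabla\otimes A|^2 \;\geq\; \int\frac{\rd u}{\ell_u^3}\Big(\tr[\eta_u(T_1(A)-V_u^+)\eta_u]_{-} + \tfrac{\delta}{C_0}\int_{B_u(2\ell_u)}|\nabla\otimes A|^2\Big),
\end{align*}
where $V_u^+=\tfrac{1}{|x|}+O(\ell_u^{-2})\geq\tfrac{1}{|x|}$ absorbs the nonnegative IMS kinetic correction and $C_0$ is a bounded overlap constant arising when the field energy is redistributed across the balls. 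I would then invoke Theorem~\ref{thm:scMain} on each ball with $h=1$, effective coupling $\kappa=C_0/\delta$, cutoff $\eta_u$, and potential $V_u^+$. The hypothesis $\kappa\leq\kappa_0 f_u^{-2}\ell_u^{-1}=\kappa_0/c$ holds whenever $\delta$ is bounded below, an implicit lower bound on $\delta$ absorbed into the dependence of $C_{\delta,r_0}$. The theorem produces the local Weyl lower bound with error of order $f_u^{4-\e}\ell_u^{2-\e}=c^{2-\e}|u|^{-\e/2}$.

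Integrating against $\ell_u^{-3}\rd u$ and using \eqref{partun} in the form $\int\eta_u^2\ell_u^{-3}\rd u=\phi_{r,R}^2$, together with the pointwise inequality $[p^2-V_u^+]_{-}\geq[p^2-\tfrac{1}{|x|}]_{-}$ (which follows from $V_u^+\geq\tfrac{1}{|x|}$), recovers the desired lower bound $2(2\pi)^{-3}\iint\phi_{r,R}^2(q)[p^2-\tfrac{1}{|q|}]_{-}\rd q\,\rd p$. The accumulated error integrates to $\int_{r/2}^{2R}t^{-1-\e/2}\rd t\leq Cr^{-\e/2}$, giving the stated bound. The main technical obstacles are (i) the careful bookkeeping of the IMS commutator corrections, in particular verifying that $V_u^+$ still satisfies the multiscale derivative bound on each ball (using $\ell_u^{-2}\leq Cf_u^2\ell_u^{-1}$); and (ii) ensuring the semiclassical hypothesis $h\leq h_0 f\ell$ of Theorem~\ref{thm:scMain} holds uniformly in $u$, which at $h=1$ requires $|u|$ bounded below --- this is managed by taking $r_0$ sufficiently large or by a preliminary rescaling $x\mapsto\lambda y$, with all such contributions absorbed into the constant $C_{\delta,r_0}$.
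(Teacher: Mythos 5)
Your overall architecture is the same as the paper's proof: a multiscale partition of unity with $\ell_u\sim |u|$, $f_u=\ell_u^{-1/2}$, reallocation of the field energy over the balls $B_u(2\ell_u)$, application of Theorem~\ref{thm:scMain} at $h=1$ (with constants depending on $\delta,r_0$ through the free parameters $h_0,\kappa_0$), and integration of the local errors $f_u^{4-\e}\ell_u^{2-\e}\sim\ell_u^{-\e/2}$ against $\ell_u^{-3}\rd u$ to produce $Cr^{-\e/2}$. However, your first and last steps rest on monotonicity of $[\,\cdot\,]_-$ used in the wrong direction. In this lemma $\tr[H]_-$ is the sum of the negative eigenvalues, i.e. $\inf_{0\le\gamma\le 1}\tr H\gamma\le 0$; this is the convention under which the superadditivity $\tr[\int O_u\,\rd u]_-\ge\int\tr[O_u]_-\,\rd u$ that you also invoke is valid, and under which the lemma is used in \eqref{splitting}--\eqref{eq:LowerCompare}. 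In that convention, $W_r\ge 0$ gives $\phi_{r,R}\big(T_1(A)-|x|^{-1}-W_r\big)\phi_{r,R}\le\phi_{r,R}\big(T_1(A)-|x|^{-1}\big)\phi_{r,R}$ and hence $\tr[\cdots -W_r\cdots]_-\le\tr[\cdots]_-$: the $-W_r$ term pushes the energy \emph{down}, so proving the statement with $W_r$ dropped does not imply the lemma. (If you instead read $[a]_-=-\min\{a,0\}\ge0$ literally, your first step is fine, but then the superadditivity step and the interpretation of \eqref{eq:multi} as an energy lower bound are reversed; whichever convention you fix, one of your steps is backwards.) The repair is not to drop $W_r$ but to keep it and absorb it, together with the IMS error $|\nabla\psi_u|^2$, into an enlarged local potential, using $W_r\le Cr^{-2}{\bf 1}(r/2\le|x|\le r)$ and $|\nabla\psi_u|^2\le C\ell_u^{-2}$, so that on $\supp\psi_u\cap\supp\phi_{r,R}$ one has $V_u^+\le |x|^{-1}\big(1+C|x|^{-1}\big)$, which still satisfies \eqref{derMain} with $f_u^2=\ell_u^{-1}$ because $\ell_u\gtrsim r_0$.

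The same directional error reappears at the end: from $V_u^+\ge |x|^{-1}$ one gets $[p^2-V_u^+]_-\le[p^2-|x|^{-1}]_-$, not $\ge$, so the semiclassical term delivered by Theorem~\ref{thm:scMain} for $V_u^+$ lies \emph{below} the one required in \eqref{eq:multi} and cannot be replaced for free; you must estimate the discrepancy quantitatively and uniformly in $R$. After performing the $u$-integration with \eqref{partun}, the difference of the two phase-space integrals is bounded in absolute value by $C\int_{|x|\ge r/2}|x|^{-5/2}\big[(1+C|x|^{-1})^{5/2}-1\big]\rd x\le Cr^{-1/2}$, which is absorbed into $C_{\delta,r_0}r^{-\e/2}$ since $\e\le 1$; note that the cruder uniform bound $V_u^+\le(1+C/r)|x|^{-1}$ would produce an error of order $R^{1/2}/r$, not uniform in $R$, so the $|x|$-dependent bound matters. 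With these two corrections your argument coincides with the paper's proof. Your treatment of the hypotheses $h\le h_0f_u\ell_u$ and $\kappa\le\kappa_0f_u^{-2}\ell_u^{-1}$ is acceptable in substance, but the correct justification is that $h_0$ and $\kappa_0$ in Theorem~\ref{thm:scMain} may be chosen large with constants depending on them (hence on $r_0$ and $\delta$); you may neither impose a lower bound on $\delta$ nor take $r_0$ large, since the lemma is claimed for all $\delta>0$ and all $r_0>0$.
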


Therefore, combining \eqref{splitting} and \eqref{eq:multi}, we have
\begin{align}
 \tr\Big[\phi_R \big(T_{h=1}(A) -  \frac{1}{|x|}\big)\phi_R\Big]_-
 &\geq \tr\Big[\phi_r \big(T_{h=1}(A) -  \frac{1}{|x|}-W_r\big)\phi_r\Big]_- \non\\ 
& \quad +2 (2\pi)^{-3}\int \phi_{r,R}^2(x)\Big[ p^2 -  \frac{1}{|x|}\Big]_{-}
 \rd x \rd p\nonumber \\
&\quad-\delta \int_{B(2R)\setminus B(r/4)}|\nabla\otimes A|^2-C_{\delta,r_0} r^{-\e/2}.
\end{align}
Hence, combining the semiclassical integrals, for any $r_0\le r \le R/8$ we get
\begin{align}
\label{eq:LowerCompare}
{\mathcal E}_{R,\kappa,\beta}(A) &\geq \tr\Big[\phi_r \big(T_{h=1}(A) - 
 \frac{1}{|x|}- W_r\big)\phi_r\Big]_{-} + \frac{1}{\kappa}\int_{B(R/4)}|\nabla\otimes A|^2
+\beta \int_{B(2R)\setminus B(R/4)} |\nabla\otimes A|^2 \nonumber \\
&\quad-\delta \int_{B(2R)\setminus B(r/4)}|\nabla\otimes A|^2 \nonumber \\
&\quad
-2 (2\pi)^{-3}\int \phi_{r}^2(x)\Big[ p^2 -  \frac{1}{|x|}\Big]_{-} 
\rd x \rd p-C_{\delta,r_0} r^{-\e/2}.
\end{align}
In the following, we will fix an $r_0>0$.

\medskip

\noindent {\bf Step 3: A-priori lower bound.}\\ 
We fix $r=r_0$.
On the ball of radius $r_0$ we use that if $\kappa\le\kappa^*$ 
where $\kappa^*$ is a small
universal constant,  then
\begin{align}
\label{eq:LowerInnerball}
\tr\Big[\phi_{r_0} \big(T_{h=1}(A) -  \frac{1}{|x|}-W_{r_0}\big)
 \phi_{r_0}\Big]_{-} + \frac{1}{2\kappa} \int_{B(2r_0)}|\nabla\otimes
A|^2 \geq - K_1
\end{align}
and 
\begin{align}\label{eq:LowerInnerball2}
-2 (2\pi)^{-3}\int \phi_{r_0}^2\Big[ p^2 -  \frac{1}{|x|}\Big]_{-}
 \rd x \rd p \geq -K_2,
\end{align}
where the constants only depend on $r_0$. The estimate \eqref{eq:LowerInnerball2}
 follows by simple integration. The other estimate \eqref{eq:LowerInnerball} 
is a consequence of (the proof of) \cite[Lemma~2.1] {ES3} in the case $Z=1$.

Inserting \eqref{eq:LowerInnerball} and \eqref{eq:LowerInnerball2}
 in \eqref{eq:LowerCompare}, we get
\begin{align}\label{eq:Inserted31/8}
{\mathcal E}_{R,\kappa,\beta}(A) \geq  
 (\frac{1}{2\kappa} - \delta) \int_{B(R/2)}|\nabla\otimes
A|^2 + (\beta-\delta) \int_{B(2R)\setminus B(R/2)}|\nabla\otimes
A|^2 -  K(r_0, \delta)
\end{align}
for any $\kappa\le \kappa^*$ and 
with a finite constant $K(r_0, \delta)$ depending only on $r_0$ and $\delta$.
Choosing $\delta=\beta \le (2\kappa)^{-1}$, this proves in particular that 
\be\label{Slow}
   S(R, \kappa, \beta)>-\infty,
\ee
and with the choice $\delta= (2\kappa^*)^{-1}$ we get
\be\label{Slow1}
  S(R, \kappa, (2\kappa)^{-1}) \ge -C(\kappa^*)
\ee
for some finite function $C(\kappa^*)<\infty$  depending only on $r_0$ and
$\kappa^*$ as long as $\kappa\le \kappa^*$.
\medskip

\noindent {\bf Step 4: Bound on the magnetic energy}.\\ 
{F}rom \eqref{eq:Inserted31/8}
we also obtain  that if $A$ is such that it yields a 
better energy on the large ball than no-magnetic field, i.e.
$$
   {\mathcal E}_{R,\kappa,\beta}(A) \le{\mathcal E}_{R,\kappa,\beta}(A=0),
$$ 
and $\kappa\le \kappa^*$,
then 
\be 
\int_{B(2R)} |\nabla\otimes A|^2\le C( r_0, \beta)
\label{fieldbound}
\ee
with some constant depending only on $r_0$ and $\beta$.
This follows from \eqref{eq:Inserted31/8} (taking $\delta =\delta_0(\beta):=
\frac{1}{2} \min((2\kappa^*)^{-1}, \beta)$)
 because then we have
$$ 
\frac{\delta_0(\beta)}{2} \int_{B(2R)} | \nabla \otimes A|^2 \leq {\mathcal E}_{R,\kappa,\beta}(A=0) +
 K(r_0, \delta_0(\beta)),
$$
and ${\mathcal E}_{R,\kappa,\beta}(A=0)$ is known to have a finite limit as $R \rightarrow \infty$,
independent of $\kappa, \beta$ 
\cite[Lemma~4.3]{SSS}.

\medskip

\noindent {\bf Step 5: Bound on the localization error}.\\ 
We will now remove the localisation error $W_r$ from \eqref{eq:LowerCompare}
 using the bound on the field energy. By the variational principle
and $W_r\le Cr^{-2}$,
 we can estimate
\begin{align}\label{eq:RemoveLoc}
\tr\Big[\phi_r \big(T_{h=1}(A) -  \frac{1}{|x|}- W_r\big)\phi_r\Big]_{-}
 &\geq \tr \Big[\phi_r \big(T_{h=1}(A) -  \frac{1}{|x|}\big)\phi_r\Big]_{-} 
\nonumber \\
&\quad - Cr^{-2} \tr {\bf 1}_{(-\infty,0)}\Big(
\phi_r \big(T_{h=1}(A) -  \frac{1}{|x|}- W_r\big)\phi_r\Big).
\end{align}
We have, with $g_r = {\bf 1}_{\{|x|\leq r\}}$,
\begin{align*}
\tr {\bf 1}_{(-\infty,0)}\Big(\phi_r \big(T_{h=1}(A) -  \frac{1}{|x|}- W_r\big)
\phi_r\Big) &\leq \tr {\bf 1}_{(-\infty,0)}\Big(\phi_r \Big[
(p+A)^2 -\frac{1}{|x|} - |B|- Cr^{-2}\Big]\phi_r\Big)  \\
&\leq \tr {\bf 1}_{(-\infty,0)}\Big((p+A)^2 -\Big[
\frac{1}{|x|} + |B|+ Cr^{-2}\Big]g_r\Big).
\end{align*}
Here we used the fact that we consider the strictly negative eigenvalues to get
 the last inequality. By the CLR estimate, we therefore have
\begin{align*}
\tr {\bf 1}_{(-\infty,0)}\Big(
\phi_r \big(T_{h=1}(A) -  \frac{1}{|x|}- W_r\big)\phi_r\Big) &\leq 
C \int_{\{|x|\leq r\}} \Big[ \frac{1}{|x|} + |B|+ Cr^{-2}\Big]^{3/2}\,\rd x \\
&\leq C' \int_{\{|x|\leq r\}} \Big[ \frac{1}{|x|^{3/2}} + |B|^{3/2}+ Cr^{-3}\Big]
\,\rd x \\
&\leq C'' \Big\{ r^{3/2} +  r^{3/4}\Big(\int_{\{|x|\leq r\}}|B|^2\Big)^{3/4} \Big\},
\end{align*}
where we used the H\"{o}lder inequality to get the last inequality. Using the uniform
 bound \eqref{fieldbound} on the field energy, we can therefore control the 
last term in \eqref{eq:RemoveLoc} for any $r\ge r_0$ as
\begin{align}
\label{eq:RemoveLoc2}
r^{-2} \tr {\bf 1}_{(-\infty,0)}\Big(\phi_r \big(T_{h=1}(A) - 
 \frac{1}{|x|}- W_r\big)\phi_r\Big) \leq C_{r_0} r^{-1/2}.
\end{align}
Thus
\begin{align}\label{eq:RemoveLocnew}
\tr\Big[\phi_r \big(T_{h=1}(A) -  \frac{1}{|x|}- W_r\big)\phi_r\Big]_{-}
 \geq \tr \Big[\phi_r \big(T_{h=1}(A) -  \frac{1}{|x|}\big)\phi_r\Big]_{-}  - C_{r_0} r^{-1/2}.
\end{align}

\medskip

\noindent {\bf Step 6: Monotonicity of the energy in the radius}.

Combining \eqref{eq:LowerCompare}, \eqref{eq:RemoveLocnew} and the definition 
\eqref{cEdef} with $R$ replaced with $r$ and $\beta$ replaced with $\beta'$ we get
\begin{align}
\cE_{R,\kappa,\beta}(A)& \ge \cE_{r,\kappa,\beta'}(A)  -  C_{\delta, r_0} r^{-\e/2} \non \\
&\;\; +\Bigg[\frac{1}{\kappa}\int_{B(R/4)\setminus B(r/4)}
 +\beta \int_{B(2R)\setminus B(R/4)}
-\beta' \int_{B(2r)\setminus B(r/4)}
-\delta \int_{B(2R)\setminus B(r/4)}\Bigg]|\nabla\otimes A|^2 \non \\
& \ge \cE_{r,\kappa,\beta'}(A)  + \frac{\beta}{2}\int_{B(2R)\setminus B(r/4)}|\nabla\otimes A|^2
 -  C_{\delta_0, r_0} r^{-\e/2} \label{mon1}
\end{align}
for any $\beta', \beta\le 1/(2\kappa)$ and
if we choose $\delta =\delta_0(\kappa,\beta):= \frac{1}{2}\min\{ (2\kappa)^{-1}, \beta\}$
and recall that $r\le R/8$. Notice that we even saved a part of the field energy
in $B(2R)\setminus B(r/4)$.
Taking infimum over all $A$, we have
\be
   S(R,\kappa,\beta) \ge  S(r,\kappa,\beta')  -  C_{\delta_0, r_0} r^{-\e/2}
\label{SSmon}
\ee
which shows that $R\to  S(R,\kappa,\beta)$ is essentially an increasing function
with a uniform upper bound \eqref{eq:UpperNonmagnetic}, hence it has a limit.
To be more precise,
define $S(\kappa,\beta): = \limsup_{R\to \infty} S(R,\kappa,\beta)$, then
for any $\eta>0$  there is a sufficiently large $r=r(\eta)>r_0$ such  that
$C_{\delta_0, r_0} r^{-\e/2}\le \eta/2$ and
$ S(r,\kappa,\beta) \ge S(\kappa,\beta) -\eta/2$.  Then  \eqref{SSmon}
implies that $S(R,\kappa,\beta) \ge  S(\kappa,\beta) -\eta$ for any $R\ge 8r(\eta)$.
Together with the definition of $ S(\kappa,\beta)$ this means that $S(\kappa,\beta)
=  \lim_{R\to \infty} S(R,\kappa,\beta)$.

\medskip
\noindent{\bf Step 7: Independence of the limit of $\beta$.}\\
Suppose  that $\beta < \beta' \le 1/(2\kappa)$. Then clearly $S(R,\kappa,\beta)\leq 
S(R,\kappa,\beta')$. Furthermore, by taking first the limit $R \rightarrow \infty$ 
in \eqref{SSmon}, then the limit $r \rightarrow \infty$ we obtain that
$$
\lim_{R\rightarrow \infty} S(R,\kappa, \beta)\geq \lim_{R\rightarrow \infty} S(R,\kappa, \beta').
$$
So we get that the limit $S(\kappa,\beta)$ is indeed independent of $\beta \le 1/(2\kappa)$.
Moreover, from \eqref{SSmon} it follows that
\be
    S(\kappa) \ge  S(r,\kappa,\beta)  -  C_{\delta_0, r_0} r^{-\e/2}
\label{lowerbd}
\ee
for any $r\ge r_0$, $\beta  \le 1/(2\kappa)$ and 
$\delta_0(\kappa,\beta):= \frac{1}{2}\min\{ (2\kappa)^{-1}, \beta\}$.
Combining this bound with \eqref{Slow1}, we obtain in particular that
\be
  \inf_{\kappa\le \kappa^*} S(\kappa) \ge C(\kappa^*, r_0)> -\infty
\label{infS}
\ee
for some constant depending only on $\kappa^*$ and $r_0$.
Together with the upper bound \eqref{eq:UpperNonmagnetic}
this shows  that $S(\kappa)$ is a bounded function for $\kappa\in (0,\kappa^*]$.

The fact that $S(0)=\frac{1}{8}$, i.e. the non-magnetic case,
has been proven before, see e.g. Lemma 4.3 in \cite{SSS} 
with the choice $\alpha=0$. (Note that in \cite{SSS} $S(0)=\frac{1}{4}$ is
stated but there the kinetic energy was $-\frac{1}{2}\Delta$, while 
our non-magnetic kinetic energy is $-\Delta$ which accounts for the apparent
discrepancy.)

It remains to prove that one can obtain $S(\kappa)$ by considering only vector potentials with small support.

\medskip
\noindent {\bf Step 8: Improved bound on field energy.}\\
Let $A_R$ be an (almost) minimizer of the variational problem
\eqref{cSdef}, i.e.
$$
  2S(R, \kappa,\beta) \ge \cE_{R,\kappa,\beta}(A_R) - R^{-\e/2}.
$$
Using \eqref{mon1} with the choice $r=R/8$, $\beta'=\beta$ and estimating
$\cE_{R/8, \kappa, \beta}(A_R)\ge 2S(R/8, \kappa,\beta)$, we  get
\be
   S(R, \kappa,\beta) \ge S(R/8, \kappa,\beta) +
 \frac{\beta}{2}\int_{B(2R)\setminus B(R/32)}|\nabla\otimes A_R|^2- C_{\delta_0, r_0} R^{-\e/2}
\label{SSbound}
\ee
for any $R\ge 8r_0$. Now letting $R\to\infty$, we conclude that
\begin{align}\label{eq:FEo}
  \lim_{R\to\infty}  
\frac{\beta}{2}\int_{B(2R)\setminus B(R/32)}|\nabla\otimes A_R|^2 = 0.
\end{align}

\medskip
\noindent {\bf Step 9: Upper bound on $S(R, \kappa, \beta)$}.\\ 
Fix $r = R/8$ and $\kappa, \beta$.
By the definition of $S(r, \kappa, \beta)$, there exists
a vector potential $A_r$ such that
\be
   2S(r, \kappa, \beta) \ge \cE_{r,\kappa, \beta}(A_r) - r^{-1},
\label{almin}
\ee
and we can assume that
\be
   \int_{B(2r)\setminus B(r/4)} A_r=0
\label{avg}
\ee
by adding a constant to $A_r$ if necessary.
Finally, by \eqref{eq:FEo} we have
\begin{align}\label{eq:er_to_zero}
e(r) := \int_{B(2r)\setminus B(r/32)} |\nabla \otimes A_r|^2 = o(1),
\end{align}
as $r \rightarrow \infty$.

Furthermore, 
there exists  a density matrix $\gamma_r$
such that
\begin{align} \label{eq1}
   2S(r, \kappa, \beta) \ge & \tr \phi_r\gamma_r\phi_r\Big( T_{h=1}(A_r) - 
 \frac{1}{|x|} \Big) + \frac{1}{\kappa} \int_{B(r/4)} |\nabla\otimes A_r|^2    \\
& + \beta \int_{B(2r)\setminus B(r/4)} |\nabla\otimes A_r|^2 
 - 2(2\pi)^{-3}\int_{\bR^3\times \bR^3} \phi_r^2(q)
 \Big[ p^2 - \frac{1}{|q|}\Big]_- \rd p \rd q - 2 r^{-1}.  \non 
\end{align}
We define $A_r': = \phi_{2r} A_r$, then $A_r'=A_r$ on the support of $\phi_r$.
Moreover,
\begin{align}
   \int_{B(2r)\setminus B(r/4)} |\nabla\otimes A_r'|^2 
  & \le Cr^{-2}  \int_{B(2r)\setminus B(r/4)} |A_r|^2 
  + \int_{B(2r)\setminus B(r/4)}\phi_{2r}^2 |\nabla\otimes A_r|^2 \non \\
 & \le C\int_{B(2r)\setminus B(r/4)}|\nabla\otimes A_r|^2  \le C e(r) \label{po}
\end{align}
with a universal constant $C$. Here we used the Poincar\'e inequality on the
ring $B(2r) \setminus B(r/4)$ which holds with a universal constant since the width of
the ring is comparable with its radius.

 Thus,  we have from \eqref{eq1}
\begin{align} \label{eq1bis}
   2S(r, \kappa, \beta) \ge & \tr \phi_r\gamma_r\phi_r\Big( T_{h=1}(A_r') - 
 \frac{1}{|x|} \Big) + \frac{1}{\kappa} \int_{B(r/4)} |\nabla\otimes A_r'|^2    \\
& 
 - 2(2\pi)^{-3}\int_{\bR^3\times \bR^3} \phi_r^2(q)
 \Big[ p^2 - \frac{1}{|q|}\Big]_- \rd p \rd q - C( r^{-1} + e(r)).  \non 
\end{align}

We use this $A_r'$ as a trial vector potential for $S(R, \kappa, \beta)$, i.e.
we have
\begin{align}
   2S(R, \kappa, \beta) \le & \tr \Big[ \phi_R \Big( T_{h=1}(A_r') - 
 \frac{1}{|x|} \Big) \phi_R\Big]_- + \frac{1}{\kappa} \int_{B(2r)} |\nabla\otimes A_r'|^2 
 \label{Sbeta}  \\
& - 2(2\pi)^{-3}\int_{\bR^3\times \bR^3} \phi_R^2(q)
 \Big[ p^2 - \frac{1}{|q|}\Big]_- \rd p \rd q . \non
\end{align}
The field energy integral (with coefficient $\kappa^{-1}$ in \eqref{cEdef}) can
be restricted to $B(2r)$ and the second
field energy integral is absent since $A'_r$ is supported on $B(2r)$.

Now we construct a suitable trial density matrix $\gamma$. It will have the form
$$
\gamma: = \phi_r\gamma_r\phi_r + \phi_{r,R} \wt \gamma \phi_{r,R}, 
$$
with $0 \leq \wt \gamma \leq 1$ to be chosen below.
Inserting $\gamma$ into \eqref{Sbeta} and using the support properties of
the cutoff functions, we obtain 
\begin{align}\label{Sbeta1}
   2S(R, \kappa, \beta) \le & {\mathcal E}_{R,\kappa,\beta}(A_r') \non\\
   \le & \tr \phi_r\gamma_r \phi_r \Big( T_{h=1}(A_r') - 
 \frac{1}{|x|} \Big)   + \tr \phi_{r,R} \wt \gamma \phi_{r,R} \Big( T_{h=1}(A_r') - 
 \frac{1}{|x|} \Big) 
   \non\\
& + \frac{1}{\kappa} \int_{B(2r)} |\nabla\otimes A_r'|^2 
- 2(2\pi)^{-3}\int_{\bR^3\times \bR^3} \phi_R^2(q)
 \Big[ p^2 - \frac{1}{|q|}\Big]_- \rd p \rd q  \non \\
 \le &  2S(r, \kappa, \beta)
 + C( e(r)+r^{-1}) + \Delta(\wt \gamma),
\end{align}
where
\begin{align} 
 \Delta(\wt \gamma) :=  \tr \phi_{r,R} \wt \gamma \phi_{r,R} \Big( T_{h=1}(A_r') - 
 \frac{1}{|x|} \Big) 
- 2(2\pi)^{-3}\int_{\bR^3\times \bR^3} \phi_{r,R}^2(q) \Big[ p^2 - \frac{1}{|q|}\Big]_- \rd p \rd q.
\end{align}
We used \eqref{almin}, \eqref{eq1} and \eqref{po} in the last step of \eqref{Sbeta1}.

We will use Theorem~\ref{thm:UpperSemiclass} and multiscaling to prove
 that we may choose $\wt \gamma$ such that
\begin{align}\label{DeltaBound}
\Delta(\wt \gamma) \leq C(\sqrt{e(r)} + r^{-1/2}).
\end{align}
Inserting \eqref{DeltaBound} in \eqref{Sbeta1} and recalling that $e(r) \rightarrow 0$ 
by \eqref{eq:er_to_zero}
 we obtain \eqref{def:Skappa2} (with the choice $A_R = A_r'$). Here we used the choice
 that $r =R/8$ and the 
previous result that $S(\kappa) = \lim_{R\rightarrow \infty} S(R,\kappa,\beta)$ exists 
and is independent of $\beta$.

It thus remains to prove \eqref{DeltaBound} for a suitable choice of $\wt \gamma$.
Let $\psi_u$ be given as in Lemma~\ref{lem:PartUnityMultScale}
with the choice $\ell(u) =\ell_u: = \frac{1}{100}\sqrt{r_0^2+u^2}$.
Define
$$
  \wt \gamma: = \int_{\cP} \psi_u \gamma_u \psi_u \frac{\rd u}{\ell_u^3},
$$
where
$$
 \cP := \{ x\; : \; r/3< |x| <R\}
$$
and 
$$
\gamma_u = {\bf 1}_{(-\infty,0]}\Big[\psi_u 
  \phi_{r,R} \Big( T_{h=1}(A_r') -   \frac{1}{|x|} \Big)
  \phi_{r,R} \psi_u \Big].
$$
Clearly $0\le \wt \gamma\le 1$
using \eqref{partun}.

Inserting this choice of $\wt \gamma$ into \eqref{Sbeta} and using \eqref{partun}, we obtain 
\begin{align}\label{Sbeta1old}
\Delta(\wt \gamma) =
 \int_{\cP} \frac{\rd u}{\ell_u^3} \Big\{ &\tr  \Big[ \phi_{r,R} \psi_u 
  \Big( T_{h=1}(A_r') -   \frac{1}{|x|} \Big) \psi_u  \phi_{r,R}  \Big]_{-}\non\\
  &
- 2(2\pi)^{-3}\int_{\bR^3\times \bR^3} \phi_{r,R}^2(q) \psi_u^2(q) 
\Big[ p^2 - \frac{1}{|q|}\Big]_- \rd p \rd q \Big\}.
\end{align}
Notice that we could restrict the $\rd u$ integration to $\cP$ 
since otherwise $\psi_u   \phi_{r,R}$
vanishes due to the support properties of these functions.

We will use Theorem~\ref{thm:UpperSemiclass} for each $u$ with
$$
h = 1, \qquad f_u^{-2} = \ell_u.
$$
We have 
$$
h^2 f_u^{-4} \ell_u^{-3} \int_{B(u,2\ell_u)} |\nabla \otimes A_r'|^2 \leq \ell_u^{-1} \int_{B(2R)\setminus B(r/4)} |\nabla \otimes A_r'|^2 \leq C e(r),
$$
by \eqref{po}. Here we used that $\supp A_r' \subseteq B(2r)$ to get the last inequality.
So  \eqref{eq:17} is satisfied and we get from \eqref{eq:USNew} and \eqref{Sbeta1old} that
\begin{align}\label{eq:424}
\Delta(\wt \gamma) \leq 
C \int_{\cP} \frac{\rd u}{\ell_u^3} \Big\{ e(r)^{1/2} +
 \ell_u^{-1/2} \Big\} \leq C \Big(e(r)^{1/2} + r^{-1/2} \Big)
\end{align}
for $r \geq r_0$. This finishes the proof of \eqref{DeltaBound}.
\end{proof}

{\it Proof of Lemma~\ref{lm:multiscalesc}.}
 We choose $\ell(u) =\ell_u: = \frac{1}{100}\sqrt{r_0^2+u^2}$ and 
$f_u= \ell^{-1/2}_u$ for the scaling functions and define
the ring
$$
 \cP := \{ x\; : \; r/3< |x| <R\}
$$
 which supports $\phi_{r,R}$. Inserting the partition of unity \eqref{partun}
and reallocating the 
 localization error we get
  \begin{align}
\tr\Big[\phi_{r,R} & \big(T_1(A) -  \frac{1}{|x|}  -W_r\big)\phi_{r,R}\Big]_{-}
 \non\\
 &= \tr\Big[ \int_{\cP} \frac{\rd u}{\ell_u^3} \Big( \psi_u  
 \phi_{r,R}\big(T_1(A) -  \frac{1}{|x|}-  W_r\big)\phi_{r,R} \psi_u  
 - |\nabla\psi_u|^2\phi_{r,R}^2\Big)
 \Big]_- \non \\
&\ge  \int_{\cP} \frac{\rd u}{\ell_u^3} \tr \Big[ \psi_u  
 \phi_{r,R}\Big(T_1(A) -  \frac{1}{|x|}-C\big(W_r +|\nabla\psi_u|^2\big)\Big)
\phi_{r,R} \psi_u \Big]_-. \non
\end{align}
Notice that we could restrict the $\rd u$ integration to $\cP$ 
since otherwise $\psi_u   \phi_{r,R}$
vanishes due to the support properties of these functions.  We also used that
 $\tr [\int O_u \rd \mu(u)]_-\ge \int \tr [O_u]_-\rd \mu(u)$
for any continuous family of operators $O_u$ and for any measure $\mu$.
We can also reallocate the field energy as
$$
  \int_{B(2R)\setminus B(r/4)} |\nabla \otimes A|^2 \ge  c\int_{\cP} \frac{\rd u}{\ell_u^3}
   \int_{B_u(2\ell_u)}|\nabla \otimes A|^2 
$$
with some positive universal constant $c$. Thus
  \begin{align}\label{ce}
\tr\Big[\phi_{r,R}\big(T_1(A) -  \frac{1}{|x|}  -W_r\big)\phi_{r,R}\Big]_{-}
 + \delta  \int_{B(2R)\setminus B(r/4)} |\nabla \otimes A|^2
 \ge  \int_{\cP} \frac{\rd u}{\ell_u^3} \cE_{r,R}(A, V^+, \psi_u), 
\end{align}
where we define
$$
 \cE_{r,R}(A, U, \psi_u): = 
\tr \Big[ \psi_u  
 \phi_{r,R}\big(T_1(A) - U\big)\phi_{r,R} \psi_u \Big]_- + c
\int_{B_u(2\ell_u)}|\nabla \otimes A|^2  \non
$$
for any potential $U$, 
and in the last step we used that 
$$
    \frac{1}{|x|}+C\big(W_r +  |\nabla\psi_u|^2\big) \le \frac{1}{|x|} \Big( 1 + \frac{C}{r}\Big)  
=:V^+(x).
$$
This inequality holds for any $u\in \cP$  by the support and scaling 
properties of $W_r$ and $\psi_u$
and by the estimate
$W_r\le Cr^{-2}{\bf 1}( r/2\le |x|\le r)\le Cr^{-1}/|x|$.

It is easy to see that $\psi=\psi_u\phi_{r,R}$ and $V=V^+$ satisfy  
the condition \eqref{derMain} in Theorem~\ref{thm:scMain} on the ball $B_u(2\ell_u)$
(the theorem was formulated for balls about the origin but it clearly holds
for balls with different center). We can choose $h_0\ge 1$ and  $\kappa_0$ 
 sufficiently large  so that $c\delta\ge \kappa_0^{-1}$ and 
$ 1\le h_0 f_u\ell_u = \frac{1}{10}h_0(1+u^2)^{1/4}$
are satisfied for all  $u\in \cP$ and $r\ge r_0$.
Thus, Theorem~\ref{thm:scMain} with $h=1$ gives 
$$
   \cE_{r,R}(A, V^+, \psi_u)\ge  2(2\pi)^{-3}\iint [(\psi_u\phi_{r,R})(x)]^2
[p^2- V^+(x)]_-\rd x\rd p
 - C_{\delta, r_0} \ell^{-\e/2}_u,
$$
where the constant $C_{\delta, r_0}$ is independent of $u$ but it depends on $\delta$ 
and on $r_0$ (via the
choice of $\kappa_0$ and $h_0$).

Inserting this bound into \eqref{ce}, we get
 \begin{align}
\tr\Big[\phi_{r,R}\big(T_1(A)& -  \frac{1}{|x|}  -W_r\big)\phi_{r,R}\Big]_{-}
 + \delta   \int_{B(2R)\setminus B(r/4)} |\nabla \otimes A|^2\non \\
& \ge  2(2\pi )^{-3}\iint \phi_{r,R}(x)^2
[p^2- V^+(x)]_-\rd x\rd p -  C_{\delta, r_0}  \int_{\cP}
 \frac{\rd u}{\ell_u^{3+\e/2}}, \non 
\end{align}
where in the first term we extended the $\rd u$ integration from $\cP$
to $\bR^3$  and we used \eqref{partun}. The last integral is bounded by $C_{\delta, r_0}r^{-\e/2}$,
 uniformly in $R$.

Finally, we can remove the localization errors from $V^+$, since
\begin{align}
\iint \phi_{r,R}(x)^2
\Big([p^2- V^+(x)]_- & -[p^2- |x|^{-1}]_-\Big) \rd x\rd p  \non\\
&\le C \int_{|x|\ge r/2}  \frac{1}{|x|^{5/2}} \Big[ \Big( 1+ \frac{C}{r}\Big)^{5/2}-1\Big]
\rd x \non \\
&\le C r^{-1/2}
\end{align}
with a constant independent of $R$. This error can be absorbed into the
other error as $\e\le 1$. Thus we get \eqref{eq:multi}
and have proved Lemma~\ref{lm:multiscalesc}. \qed

\section{Semiclassics with Scott term }\label{sec:scscott}

In this section we prove Theorem~\ref{thm:scMainscott}.
The guiding principle follows the similar proofs in \cite{SS, SSS}.
We first divide the space into three regions. The first region
consists of disjoint balls of radius $r\sim h$ about the
nuclei. Here we will use the Scott asymptotics as described in 
Theorem~\ref{thm:scott}.
The second region is far away from the nuclei, at a distance $R\gtrsim h^{-1}$.
The contribution of this regime will be estimated by a simple
Lieb-Thirring inequality.
Finally, in the third intermediate region, we will use
an argument similar to
Lemma~\ref{lm:multiscalesc} which relied
on the multiscale decomposition and Theorem~\ref{thm:scMain}
on each domain. The precise decomposition is the following.

Choose two localization functions $\theta_\pm\in C^1(\bR)$
with the properties that $0\le \theta_\pm\le 1$, 
$\theta_-^2+\theta_+^2\equiv 1$, moreover $\theta_-(t)=1$
for $t<1/2$ and $\theta_-(t)=0$ for $t\ge 1$.
Recall that $d(x)$ denotes the distance of $x$ to the nearest
nucleus and $r_{min}$ is the minimal distance among the nuclei.

For any $r< r_{min}/4$ and $R>r_{min}$ we set
$$
   \phi_\pm(x) : =\theta_\pm\Big(\frac{d(x)}{r}\Big), \qquad
   \Phi_\pm(x) : =\theta_\pm\Big(\frac{d(x)}{R}\Big).
$$
Note that
$$
  \phi_-(x) =\sum_{k=1}^M \theta_{r, k}(x), \qquad \mbox{with}
  \quad \theta_{r, k}(x) = \theta_-( |x-r_k|/r).
$$
Assuming $h$ is small enough, we will choose
\be\label{rRchoice}
   r:= h^{1-\xi}, \qquad  R : =
\begin{cases}
 Ch^{-1/2} & \mbox{if $\mu=0$}\\
C R_\mu & \mbox{if $\mu\ne0$}
\end{cases}
\ee
with some small $\xi>0$.
Here the $\mu$-dependent constant $R_\mu>0$ is chosen such that $-V(x)\ge 0$ for
$d(x)\ge R_\mu/2$.
Clearly $\Phi_-^2+\Phi_+^2=1$,  $\phi_-^2+\phi_+^2=1$ and 
$$
\phi_-^2 +  \Phi_+^2+ \Phi_-^2\phi_+^2 =1.
$$
This latter partition of unity corresponds to the three regions we
described above.
The localization errors will be affordable in all regimes. Near
the nuclei, the localization errors of order $h^2r^{-2}= O(h^{2\xi})$ is
more than
one order of magnitude smaller than the size of the potential
in this regime, which is $|x-r_k|^{-1}\sim h^{-1}$, and the localization
error relative to the potential becomes  even weaker further away from the nuclei.
 The localization
error far away is of order $h^2 R^{-2}= h^3$ will be negligible
both in $L^{5/2}$ and $L^4$ norms as necessary for the magnetic Lieb-Thirring
inequality that we recall for convenience:
\begin{theorem}\cite{LLS}\label{thm:lls}
There exist a universal constant $C$ such that
for the semiclassical Pauli operator  $T_h(A)-V$ 
with a potential $V\in L^{5/2}(\bR^3)\cap L^4(\bR^3)$ and magnetic field $B=\nabla\times A
\in L^2(\bR^3)$ we have
\be
    \tr\big[ T_h(A)-V\big]_- \le Ch^{-3}\int \big[ V\big]_+^{5/2} 
 +  C\Big( h^{-2} \int |B|^2\Big)^{3/4}\Big( \int  \big[ V\big]_+^{4} \Big)^{1/4}.
\label{genlt}
\ee
\end{theorem}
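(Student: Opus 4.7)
The plan is to follow a covering / Birman--Schwinger argument in the spirit of \cite{LLS}. The essential new feature compared to the Lieb--Thirring bound for magnetic Schr\"odinger operators is the spin-magnetic coupling $h\,\bsigma\cdot B$ in the identity $T_h(A)=(-ih\nabla+A)^2+h\,\bsigma\cdot B$, which can generate negative eigenvalues even when $V\equiv 0$ via the zero modes of the Pauli operator (Loss--Yau). This forces the appearance of a magnetic correction in \eqref{genlt}; a naive attempt to absorb $h\bsigma\cdot B$ into the potential would produce $\int|B|^{5/2}$, which is not controlled when $B$ is only in $L^2$. By the rescaling $x=hy$, $\widetilde A(y)=h^{-1}A(hy)$, $\widetilde V(y)=h^2V(hy)$, $\widetilde B(y)=hB(hy)$, both sides of \eqref{genlt} transform identically, and the problem reduces to the case $h=1$.

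I would then introduce a covering of $\bR^3$ by balls at variable scale adapted to the local concentration of $B$: for a small $\eta>0$, set
\[
   L(x):=\sup\Big\{L>0:\int_{B(x,L)}|B|^2\,\rd y\le \eta L\Big\},
\]
and extract by a Besicovitch-type lemma a locally finite cover $\{B_j=B(x_j,L_j)\}$ with subordinate partition of unity $\chi_j$ satisfying $\sum_j\chi_j^2=1$ and $|\nabla\chi_j|\le CL_j^{-1}$. The IMS formula reduces matters to estimating $\sum_j\tr\bigl[\chi_j(T_1(A)-V-CL_j^{-2})\chi_j\bigr]_-$. On each $B_j$, Poincar\'e's inequality for divergence-free vector fields lets me pick a gauge with $\|A-\bar A_j\|_{L^\infty(B_j)}\le C(\int_{B_j}|B|^2)^{1/2}\le C\sqrt{\eta L_j}$. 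Combined with the pointwise bound $T_1(A)\ge(-i\nabla+A)^2-|B|$ and the standard non-magnetic Lieb--Thirring inequality applied through the diamagnetic inequality (whose constants are preserved up to $O(1)$ since the gauge-adjusted $A$ is bounded on $B_j$), one obtains
\[
  \tr\bigl[\chi_j(T_1(A)-V)\chi_j\bigr]_- \le C\int_{B_j}\chi_j^2\bigl[V+|B|+CL_j^{-2}\bigr]_+^{5/2}.
\]

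Summation over $j$ then yields the Weyl term $C\int[V]_+^{5/2}$ from the $V$-only contribution. The main obstacle is extracting exactly the second term $C\|B\|_2^{3/2}\|V_+\|_4$ rather than a wrong $\|B\|_{5/2}$-type expression: a direct H\"older on the magnetic contribution gives $\int_{B_j}|B|^{5/2}\le(\int_{B_j}|B|^2)^{3/4}(\int_{B_j}|B|^4)^{1/4}$, so that naive summation produces $\|B\|_2^{3/2}\|B\|_4$ with $\|B\|_4$ in the place where one actually needs $\|V_+\|_4$. The LLS mechanism that circumvents this couples $V$ and $B$ in the choice of scale $L(x)$, replacing the threshold $\eta L$ by one that involves both $V$ and $B$, so that the resulting H\"older interpolation correctly pairs $\|B\|_2^{3/2}$ with $\|V_+\|_4$; the cross terms between $V$ and $|B|$ in the Lieb--Thirring integrand are absorbed through the same interpolation. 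Getting this pairing right, and verifying that the localization error $CL_j^{-2}$ is absorbable in the same way, constitutes the technical heart of the argument.
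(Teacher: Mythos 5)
There is a genuine gap, and it sits exactly where the theorem is hard. (Note also that the paper itself offers no proof of this statement -- it is quoted from [LLS] -- so the relevant comparison is with the Lieb--Loss--Solovej argument.) Your plan reduces, on each ball, to the pointwise bound $T_1(A)\ge(-i\nabla+A)^2-|B|$ followed by the ordinary Lieb--Thirring inequality, which gives the local estimate $C\int_{B_j}\chi_j^2[V+|B|+CL_j^{-2}]_+^{5/2}$. From that point on no choice of covering scale can produce the term $C\big(\int|B|^2\big)^{3/4}\big(\int[V]_+^4\big)^{1/4}$: the replacement $\bsigma\cdot B\ge-|B|$ has already discarded the spinor structure, and the resulting $\int|B|^{5/2}$ (plus the localization errors $\sim L_j^{-2}$ per ball, which are genuinely present for the localized Pauli operators because of Loss--Yau zero modes even when $V\equiv0$) simply cannot be dominated by $\|B\|_2^{3/2}\|V_+\|_4$ -- consider $B$ large and concentrated in a region where $V$ vanishes, where the stated right-hand side gives no room at all. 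You acknowledge this and appeal to ``the LLS mechanism'' of coupling $V$ and $B$ in the scale choice, but you supply no such mechanism, and in fact LLS do not argue this way: their proof runs the Birman--Schwinger principle for the Pauli operator and splits off the low-energy spectral subspace of $\bsigma\cdot(-i\nabla+A)$, whose dimension (density of low-lying states, including the zero modes) is controlled by the field energy $\int|B|^2$; optimizing over the splitting energy is what produces the exponents $3/4$ and $1/4$. That step, which is the entire content of the theorem beyond the non-magnetic case, is missing from your proposal.

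Two further points, less central but worth fixing. First, your scale function is degenerate as written: since $\int_{B(x,L)}|B|^2\le\|B\|_{L^2}^2<\infty$ while $\eta L\to\infty$, the supremum defining $L(x)$ is always $+\infty$; dimensionally you presumably intended a condition of the form $L\int_{B(x,L)}|B|^2\le\eta$. Second, the claimed gauge bound $\|A-\bar A_j\|_{L^\infty(B_j)}\le C(\int_{B_j}|B|^2)^{1/2}$ is false (and dimensionally inconsistent): $L^2$ control of $\nabla\times A$ on a ball yields, after gauge fixing, only $H^1$ and hence $L^6$ control of $A-\bar A_j$, not $L^\infty$. Fortunately this step is also unnecessary, because the Lieb--Thirring inequality for the magnetic Schr\"odinger operator $(-i\nabla+A)^2-W$ holds with a constant independent of $A$ by the diamagnetic inequality, with no boundedness of $A$ required.
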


\subsection{Lower bound}

For any $\e\in (0,1/4)$, we define
\begin{align}
  \cT_k(A) : & =  \tr \Big[ \theta_{r,k} \big( T_h(A)- V - Ch^2r^{-2}\big) \theta_{r,k} \Big]_- +
   \Big(1-\frac{\e}{2}\Big)\frac{1}{\kappa  h^2}
\int_{B_{r_k}(r/4)} |\nabla \otimes A|^2 \non\\ 
&\qquad +
 \frac{1}{4\kappa h^2}
\int_{B_{r_k}(2r)\setminus B_{r_k}(r/4)} |\nabla \otimes A|^2.
\end{align}
Using the IMS localization  and the fact that the balls $B_{r_k}(2r)$ are disjoint since
$r< r_{min}/4$,  we have
\begin{align}
 \tr [T_h(A) & - V]_- +   \frac{1}{\kappa  h^2}
\int_{\bR^3} |\nabla \otimes A|^2  \non\\ 
&\ge \sum_{k=1}^M  \cT_k(A)\non\\
& + \tr \Big[ \Phi_+ \big( T_h(A)- V - Ch^2W_{R}\big) \Phi_+ \Big]_- 
+ \frac{\e}{2\kappa  h^2} \int_{\bR^3} |\nabla \otimes A|^2 \label{3term} \\
& + \tr \Big[ \Phi_- \phi_+\big( T_h(A)- V - Ch^2W_{r,R}\big)\phi_+ \Phi_- \Big]_- 
+ \frac{1}{8\kappa  h^2} \int_{\bR^3\setminus \bigcup_k B_{r_k}(r/4)} |\nabla \otimes A|^2 \non
\end{align}
with some positive universal constant  $C$
and with $W_R:=  |\nabla \Phi_-|^2 + |\nabla \Phi_+|^2$ and
$W_{r,R}: = |\nabla \phi_-|^2 + |\nabla \phi_+|^2 + |\nabla \Phi_-|^2 + |\nabla \Phi_+|^2$.

\medskip

\underline{\bf First line in \eqref{3term}}

\medskip

Fix $k$ and recall from \eqref{West} that 
$$
   - V(x) \ge - \frac{z_k}{|x-r_k|} - Cr_{min}^{-1} - C
$$
on the support of $\theta_{r,k}$. Thus we can write
$$
   \cT_k(A) \ge \cT_k^{(1)}(A) + \cT_k^{(2)}(A) + \cT_k^{(3)}(A),
$$
where
\begin{align}
 \cT_k^{(1)}(A):=  &  \tr \Big[ \theta_{r,k} \big( (1-2\e) T_h(A)- (1-2\e)\frac{z_k}{|x-r_k|}
  \big) \theta_{r,k} \Big]_- \non\\ 
& +
   \frac{1-2\e}{\kappa  h^2}
\int_{B_{r_k}(2r)} |\nabla \otimes A|^2 + \frac{1-2\e}{8\kappa h^2} \int_{B_{r_k}(2r)\setminus 
B_{r_k}(r/4)} |\nabla \otimes A|^2
 \non\\
\cT_k^{(2)}(A):= &  \tr \Big[ \theta_{r,k} \big( \e T_h(A)
 -Ch^2r^{-2} - Cr_{min}^{-1} - C \big) \theta_{r,k} \Big]_- +
   \frac{\e}{4\kappa  h^2}
\int_{B_{r_k}(2r)} |\nabla \otimes A|^2. \non\\
\cT_k^{(3)}(A):= &  \tr \Big[ \theta_{r,k} \big( \e T_h(A)- 2\e\frac{z_k}{|x-r_k|} \big) \theta_{r,k} \Big]_- +
   \frac{\e}{4\kappa  h^2}
\int_{B_{r_k}(2r)} |\nabla \otimes A|^2. \non
\end{align}
After pulling out the common $(1-2\e)$ factor, 
shifting $r_k$ to the origin and rescaling, we obtain from
\eqref{def:Skappa}, with $\beta = (8\kappa)^{-1}$, that
$$
 \inf_A \cT_k^{(1)}(A) \ge  (1-2\e) \Bigg[ 2(2\pi h)^{-3} \int_{\bR^3\times\bR^3} \theta_{r,k}^2 
\Big[ p^2- \frac{z_k}{|q-r_k|}\Big]_-  \rd q\rd p + 2h^{-2} z_k^2 S(z_k\kappa)\Bigg] + o(h^{-2}).
$$
Here we made use of the fact that after rescaling the variable $R$ in \eqref{def:Skappa} 
becomes $rh^{-1}= h^{-\xi}$, so the limit $h\to0$ is equivalent to
the limit $R\to\infty$ in 
\eqref{def:Skappa}. 

The term in the square bracket can be bounded by
\begin{align}
\Bigg| 2(2\pi h)^{-3} \int_{\bR^3\times\bR^3} & \theta_{r,k}^2 
\Big[ p^2- \frac{z_k}{|q-r_k|}\Big]_-  \rd q\rd p + 2h^{-2} z_k^2 S(z_k\kappa)\Bigg| \non\\
\le & Ch^{-3} \int_{|q-r_k|\le r} \frac{\rd q}{|q-r_k|^{5/2}} + Ch^{-2} \non\\
\le & Ch^{-3} r^{1/2} + Ch^{-2},
\end{align}
using $z_k\le 1$ and $S$ is bounded (if $\kappa\le \kappa_0$ is sufficiently small).
Thus
$$
 \inf_A \cT_k^{(1)}(A) \ge  \Bigg[ 2(2\pi h)^{-3} \int_{\bR^3\times\bR^3} \theta_{r,k}^2 
\Big[ p^2- \frac{z_k}{|q-r_k|}\Big]_-  \rd q\rd p + 2h^{-2} z_k^2 S(z_k\kappa)\Bigg]
 + o(h^{-2})
$$
as long as $\e r^{1/2}\ll h$.

The term $\cT_k^{(2)}(A)$ is estimated by the magnetic Lieb-Thirring inequality
\eqref{genlt} which we need
in the following form:

\begin{lemma}  
Let $\phi\in C_0^\infty(\bR^3)$ be a cutoff function with $\mbox{supp}\;\phi\subset B(1)$,
$\phi\equiv 1$ on $B(1/2)$. Define $\phi_\ell(x)=\phi(x/\ell)$ for some $\ell>0$
and  let  $\Omega: = \mbox{supp}\, \phi_\ell$.
Then for any vector potential $A$ we have
\be
  \tr \big[ \phi_\ell (T_h(A)-V)\phi_\ell\big]_- + \frac{\beta}{h^2} \int_{B(2\ell)} |\nabla \otimes A|^2
  \ge  -Ch^{-3}\int_\Omega V_+^{5/2} - C\beta^{-3}\int_\Omega V_+^{4} 
\label{lth}
\ee
with some universal constant $C$.
\end{lemma}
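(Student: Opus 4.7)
My plan is to deduce this localized magnetic Lieb--Thirring bound from the global inequality \eqref{genlt} via three reductions: a density-matrix argument to remove $\phi_\ell$ from the trace, a cutoff/gauge argument to localize the vector potential, and Young's inequality to separate the cross term.

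First, I will reduce the localized trace to a global one. The pointwise bound $V\le V_+{\bf 1}_\Omega$ (since $\supp\phi_\ell\subset\Omega$) yields the operator inequality $\phi_\ell(T_h(A)-V)\phi_\ell \ge \phi_\ell(T_h(A)-V_+{\bf 1}_\Omega)\phi_\ell$, hence $\tr[\phi_\ell(T_h(A)-V)\phi_\ell]_-\le \tr[\phi_\ell(T_h(A)-V_+{\bf 1}_\Omega)\phi_\ell]_-$. A standard density-matrix argument---for any admissible $0\le\gamma\le 1$ the matrix $\phi_\ell\gamma\phi_\ell$ is also admissible since $|\phi_\ell|\le 1$---then gives $\tr[\phi_\ell(T_h(A)-V_+{\bf 1}_\Omega)\phi_\ell]_-\le \tr[T_h(A)-V_+{\bf 1}_\Omega]_-$.

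Second, I will localize the vector potential to $B(2\ell)$. Let $\eta\in C_0^\infty(\bR^3)$ with $\eta\equiv 1$ on $B(\ell)$, $\supp\eta\subset B(2\ell)$, $|\nabla\eta|\le C/\ell$. Since only $A|_{B(\ell)}$ enters $\phi_\ell T_h(A)\phi_\ell$, I can replace $A$ by $A':=\eta A$, so that $\phi_\ell T_h(A')\phi_\ell = \phi_\ell T_h(A)\phi_\ell$. After first subtracting a constant vector from $A$ (a gauge transformation implemented by a unitary that commutes with the multiplication operator $\phi_\ell$ and therefore leaves the trace invariant) to force $A$ to have mean zero on the annulus $B(2\ell)\setminus B(\ell)$, the Poincaré inequality on this annulus gives
\[
 \int|\nabla\times A'|^2 \le 2\int_{B(2\ell)}|\nabla\times A|^2 + \frac{C}{\ell^{2}}\int_{B(2\ell)\setminus B(\ell)}|A|^2 \le C\int_{B(2\ell)}|\nabla\otimes A|^2.
\]

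Third, I will apply the global Lieb--Thirring inequality \eqref{genlt} to $T_h(A')-V_+{\bf 1}_\Omega$ and dispose of the resulting cross term via Young's inequality with conjugate exponents $4/3$ and $4$: for any $\beta>0$,
\[
  (h^{-2}X)^{3/4}Y^{1/4} \le \beta h^{-2}X + C\beta^{-3}Y,
\]
applied with $X=\int_{B(2\ell)}|\nabla\otimes A|^2$ and $Y=\int_\Omega V_+^4$. This produces exactly the $\beta h^{-2} X$ and $C\beta^{-3}Y$ contributions in the required form. The main obstacle is the second step: the global Lieb--Thirring bound controls field energy by $\int_{\bR^3}|B|^2$, whereas the lemma only provides field energy on $B(2\ell)$. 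The gauge shift plus the Poincaré estimate on the annulus is precisely what bridges that gap; once this localization is in place, the remaining arguments are algebraic.
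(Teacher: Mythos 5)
Your proof is correct and essentially identical to the paper's: the paper likewise subtracts the mean of $A$ over $B(2\ell)$ by a constant gauge transformation, multiplies by a cutoff that equals $1$ on $\supp \phi_\ell$ and is supported in $B(2\ell)$, controls the field energy of the cut-off potential via the Poincar\'e inequality, and then applies the magnetic Lieb--Thirring bound \eqref{genlt} followed by the (implicit) Young step to produce the $\beta h^{-2}$ and $C\beta^{-3}$ terms. The only caveat is one of ordering: the replacement $A\to A'$ must be made while the localization $\phi_\ell$ is still present, i.e.\ before the density-matrix reduction to the global operator --- which your Step 2 in fact respects, since it is phrased at the level of $\phi_\ell T_h(A)\phi_\ell$.
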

{\it Proof.} Let $\wt\phi_\ell (x):= \wt \phi (x/\ell)$, where $\wt\phi \in C_0^\infty(\bR^3)$ is supported in
$B(3/2)$ and $\wt\phi\equiv 1$ on $B(1)$. 
Set $\langle A\rangle : = |B(2\ell)|^{-1}\int_{B(2\ell)} A$ and $\wt A: = ( A- \langle A\rangle)\wt\phi_\ell$.
By a gauge transformation and by the fact that $\wt A =  A- \langle A\rangle$ on the support of $\phi_\ell$,
we have
$$ 
\tr \big[ \phi_\ell (T_h(A)-V)\phi_\ell\big]_- =\tr \big[ \phi_\ell (T_h(\wt A)-V)\phi_\ell\big]_-.
$$ 
By the magnetic Lieb-Thirring inequality \eqref{genlt} we have
\be
   \tr \big[ \phi_\ell (T_h(\wt A)-V)\phi_\ell\big]_-  \ge -Ch^{-3}\int_\Omega V_+^{5/2}
  - C \Big( h^{-2}\int_{\bR^3} |\nabla\times \wt A|^2\Big)^{3/4}\Big( \int_\Omega V_+^4\Big)^{1/4}.
\label{mlt}
\ee
By $\mbox{supp} \, \wt A\subset B(2\ell)$ and by the Poincar\'e
inequality
$$
  \int_{\bR^3} |\nabla\times \wt A|^2 =  \int_{\bR^3} |\nabla\otimes \wt A|^2
\le C\int_{B(2\ell)} |\nabla \otimes A|^2 + C\ell^{-2}
   \int_{B(2\ell)} |A- \langle A\rangle|^2 \le  C\int_{B(2\ell)} |\nabla \otimes A|^2.
$$
Combining this with \eqref{mlt}  we obtain \eqref{lth}. $\Box$

\medskip

We return to the estimate of $\cT_k^{(2)}(A)$, and shifting $r_k$ to the origin and
using \eqref{lth} we obtain
\begin{align}
\cT_k^{(2)}(A)\ge & -C\e h^{-3} \int_{|x|\le r} \Big(  \e^{-1}\big[
h^{2} r^{-2} + r_{min}^{-1} + 1\big]\Big)^{5/2}
 - C\e  \int_{|x|\le r} \Big( \e^{-1}\big[
h^{2} r^{-2} + r_{min}^{-1} + 1\big]\Big)^4 \non\\
\ge & \; - C\e^{-3/2} h^{-3}\Big( h^5 r^{-2} +r^3\Big)
  -C\e^{-3} \Big( h^8 r^{-5} +r^3\Big) \ge -C\e^{-3/2}h^{-3\xi} - C\e^{-3} h^{3-3\xi} \non
\end{align}
with a constant $C$ depending on $\kappa_0$ and  $r_{min}$.
This error term is of order $o(h^2)$ as long as $\e\ge h$ and $\xi< 1/6$,
which we will assume from now on.

Finally, for the term $\cT^{(3)}_k$, after shifting, rescaling and using $z_k\le 1$
$$
 \inf_A \cT_k^{(3)}(A) \ge \e \inf_A \Bigg\{ \Tr \Big[ \theta_{d} 
\big( T_{h=1}(A) - \frac{2h^{-1} }{|x|} \big)\theta_{d}\Big]
 + \frac{1}{4\kappa h} \int_{B(2d)} |\nabla\otimes A|^2\Bigg\}
$$
with $d:=rh^{-1} = h^{-\xi}$ and $\theta_d(x)=\theta_-(x/d)$.
Now we will use the ``running energy scale'' argument as in the proof of Lemma 2.1
of \cite{ES3}, where $2h^{-1}$ plays the role of $Z$ in Lemma 2.1 of \cite{ES3},
the localization errors in (2.14) of \cite{ES3} are not present, and 
the key condition in \cite{ES3}  that $Z\al^2$ is sufficiently small
translates into $\kappa$ being sufficiently small. 
Using the final result (3.23) from \cite{ES3}, and noting
that the $d^{-2}$ term there was due to the localization error
that is not present now, we obtain
\be\label{t3}
 \inf_A \cT_k^{(3)}(A) \ge -C \e  h^{-5/2} d^{1/2},
\ee
which if $o(h^{-2})$ provided $\e\ll h^{(\xi+1)/2}$. These constraints, together with 
the previous conditions  $\e r^{1/2}\ll h$, $\e\ge h$ and $\xi<1/6$
leave plenty of room, choosing for example $\xi=\frac{1}{10}$ and $\e = h^{3/4}$.
In summary, for the first line  in \eqref{3term} we proved that
\begin{align}
\inf_A \sum_{k=1}^M \cT_k(A) \ge & \sum_{k=1}^M
 \Bigg[ 2(2\pi h)^{-3} \int_{\bR^3\times\bR^3} \theta_{r,k}^2 
\Big[ p^2- \frac{z_k}{|q-r_k|}\Big]_-  \rd q\rd p + 2h^{-2} z_k^2 S(z_k\kappa)\Bigg]
+ o(h^{-2}).
\end{align}

\bigskip

\underline{\bf Second line in \eqref{3term}}

\bigskip
By the magnetic Lieb-Thirring inequality from \eqref{lth} 
 we have
\begin{align}
\tr \Big[ \Phi_+ \big( T_h(A)- & V - Ch^2W_{R}\big) \Phi_+ \Big]_-
+ \frac{\e}{2\kappa  h^2} \int |\nabla \otimes A|^2  \non\\
\ge & - Ch^{-3} \int_{\Omega_+} [V+Ch^2W_{R}]_+^{5/2}
 - C\kappa^3 \e^{-3}\int_{\Omega_+} [V+Ch^2W_{R}]_+^{4},
\end{align}
where $\Omega_+: = \mbox{supp} \Phi_+$.
The contribution  of the $W_R$ terms is negligible using $\| W_R\|_\infty\le CR^{-2}$:
and that its support has a volume $CR^3$:
$$
    h^{-3} \int [h^2 W_R]^{5/2} + \kappa^3\e^{-3}\int [h^2 W_R]^{4}  
\le C h^2  R^{-2} +C\kappa_0^3 \e^{-3}
h^8 R^{-5}\le \begin{cases}  Ch^3 & \mbox{if $\mu=0$}\\
Ch^2  & \mbox{if $\mu\ne 0$}
\end{cases}
$$
with a constant that may depend on $\mu$.
The positive part of the potential  $[V(x)]_+$
is zero for $\mu\ne0$ in $\Omega_+$ and it  
 can be estimated by $f(x)^2 \le d(x)^{-4}$
 according to \eqref{Vder} if $\mu=0$, so
$$
   \int_{\Omega_+} \Big(h^{-3} [V+\mu]_+^{5/2} + 
\kappa^3\e^{-3} [V+\mu]_+^4  \Big) \le \begin{cases}
 C (h^{-3}  R^{-7} + \e^{-3}R^{-13} ) & \mbox{if $\mu=0$}\\
0  & \mbox{if $\mu\ne 0$}
\end{cases}
$$
with a constant depending on $\kappa_0$ and $M$. With the choice of \eqref{rRchoice}
and recalling $\e\ge h$,
the lower bound on the second line  in \eqref{3term} thus vanishes as $h\to 0$.

\medskip

\underline{\bf Third line in \eqref{3term}}
\medskip

This estimate will be very similar to the proof of Lemma~\ref{lm:multiscalesc},
so we will skip some details here.
 We choose $\ell(u) =\ell_u: = \frac{1}{100}\sqrt{r^2+d(u)^2}$ and 
$f_u= \min\{\ell^{-1/2}_u, \ell_u^{-2}\}$ for the scaling functions and
define the regime
\be
\label{def:Q}
   \cQ: = \{ x\; : \; |x|\le 2R, \; |x-r_k|\ge r/3,\; k=1,2, \ldots, M\}
\ee
which supports $\Phi_-\phi_+$. Inserting the partition of unity \eqref{partun}
and reallocating the localization error,  we have
\begin{align}
 \tr \Big[ \Phi_- \phi_+\big( T_h(A) & - V - Ch^2W_{r,R}\big)\phi_+ \Phi_- \Big]_- 
+ \frac{1}{8\kappa  h^2} \int_{\bR^3\setminus \bigcup_k B_{r_k}(r/4)} |\nabla \otimes A|^2\non\\
& \ge \int_\cQ \frac{\rd u}{\ell_u^3} \cE(A, V^+_u, \psi_u),\label{ed}
\end{align}
where we defined
$$
  \cE (A, U, \psi_u):=  \tr \Big[ \psi_u\Phi_- \phi_+\Big( T_h(A) 
 - U\Big)\phi_+ \Phi_-\psi_u
 \Big]_-  + \frac{c}{\kappa h^2} \int_{B_u(2\ell_u)} |\nabla\otimes A|^2
$$
for any potential $U$ and with some sufficiently small positive universal constant $c>0$,
 and we defined
$$
V_u^+(x):=V(x)+ Ch^2\big(W_{r,R}(x)+|\nabla\psi_u(x)|^2\big)
$$
on the support of $\psi_u$. We recall
from \eqref{Vder} that $V(x)\le Cf(x)^2\le Cf_u^2$
 on the support of $\psi_u$,
since $f(x)$, defined in \eqref{fdef},
is comparable with $f_u$. Thus
$$
|V_u^+(x)|\le Cf_u^2 +  Ch^2\ell_u^{-2}\le Cf_u^2\le Cf(x)^2, 
\qquad \mbox{on supp}\; \psi_u,
$$
where we distinguished the case $\mu=0$ and $\mu\ne0$.
In the latter case $|u|$ is bounded (depending on $\mu$)
and thus $f_u$ is bounded from below.
We also used that $h\le C\ell_u f_u= C\min\{ \ell^{1/2}_u, \ell_u^{-1}\}$,
which holds since $\ell_u$ is between a constant multiple of $r$ and $R$
if $u\in \cQ$.  
 Similar estimate holds for the derivatives of $V_u^+$, i.e.
the main condition \eqref{derMain} of Theorem~\ref{thm:scMain} is satisfied
for $V_u^+$ with scaling functions $\ell=\ell_u$ and $f=f_u$. 
The other condition, $\kappa\le \kappa_0f^{-2}_u\ell_u^{-1}$, is trivially satisfied.
Applying  Theorem~\ref{thm:scMain}, we get from  \eqref{ed} that
\begin{align}
  \inf_A\Bigg[ & \tr \Big[ \Phi_- \phi_+\big( T_h(A)  - V - Ch^2W_{r,R}\big)\phi_+ \Phi_- \Big]_- 
+ \frac{1}{8\kappa  h^2} \int_{\bR^3\setminus \bigcup_k B_{r_k}(r/4)} |\nabla \otimes A|^2\Bigg]\non\\
& \ge \int_\cQ \frac{\rd u}{\ell_u^3}\Bigg\{
 2(2\pi h)^{-3}\iint \big[ (\psi_u \Phi_- \phi_+)(q)\big]^2
\big[ p^2- V_u^+(q)\big]_-\rd q \rd p - C h^{-2+\e}f_u^{4-\e} \ell_u^{2-\e}\Bigg\}\non\\
& = \int_\cQ \frac{\rd u}{\ell_u^3}\Bigg\{
- 2(2\pi h)^{-3}\frac{8\pi}{15}\int \big[ (\psi_u \Phi_- \phi_+)(q)\big]^2
\big[V_u^+(q)\big]_+^{5/2}\rd q \Bigg\} - C h^{-2+\e} \int_\cQ \frac{\rd u}{\ell_u^3}
f_u^{4-\e} \ell_u^{2-\e}.\non
\end{align}
The second term is $O(h^{-2+\e})$ since the integral is finite even 
after extending to $\bR^3$ from $\cQ$. In the first term we use
that the localization errors in $V_u^+$ are bounded by $Ch^2\ell_u^{-2}\le Ch^2[d(x)+r]^{-2}$
and are supported in a ball of radius $CR$,  and
thus
\begin{align}
   h^{-3}\int (\psi_u \Phi_- \phi_+)^2 [ V_u^+]_+^{5/2}
 \le & (1+\e)h^{-3}\int (\psi_u \Phi_- \phi_+)^2 [V]_+^{5/2} + C\e^{-3/2} h^{-3}
\int_{|x|\le CR}\Big[ \frac{h^2}{d(x)+r}\Big]^{5/2}\rd x\non\\
\le & h^{-3}\int  (\psi_u \Phi_- \phi_+)^2 [V]_+^{5/2}+ C\e h^{-3} + C\e^{-3/2}h^2R^{1/2}
\end{align}
since $|V|\le f^2\in L^{5/2}$. Choosing $\e = h^{1+\zeta}$ with a small $\zeta>0$,
and recalling that $R\le Ch^{-1/2}$, we obtain that the two error terms 
are of order $h^{-2+\zeta}$, which even after the $\int_\cQ \ell_u^{-3} \rd u$ integration
is $o(h^{-2})$. 

In the main term, we perform the $\rd u$ integration
and use \eqref{partun} to obtain
$$
  \mbox{Third line of \eqref{3term}}\ge  2(2\pi h)^{-3}\iint \big[ (\Phi_- \phi_+)(q)\big]^2
\big[ p^2- V(q)\big]_-\rd q \rd p - o(h^{-2}).
$$

\medskip

Collecting the estimates of all three terms in  \eqref{3term} and using the properties
of the cutoff functions, we have
\begin{align}
 \tr [T_h(A) & - V]_- +   \frac{1}{\kappa  h^2}
\int_{\bR^3} |\nabla \otimes A|^2  \non\\ 
\ge & \; 
2(2\pi h)^{-3}\iint  \big[ p^2- V(q)\big]_-\rd q \rd p +2h^{-2}\sum_{k=1}^M z_k^2 S(z_k\kappa)
\non\\
& + 2(2\pi h)^{-3}\iint  \Phi_+(q)^2\big[ p^2- V(q)\big]_-\rd q \rd p\non\\
& +  2(2\pi h)^{-3}\sum_{k=1}^M \iint  \theta_{r,k}^2\Bigg( \Big[ p^2- \frac{z_k}{|q-r_k|}\Big]_-
-\Big[ p^2- V(q)\Big]_- \Bigg)\rd q \rd p
 - o(h^{-2}).\label{3term1}
\end{align}
The middle term in the r.h.s. in absolute value is bounded by
$$
 Ch^{-3}\int_{|x|\ge R/2} [V]_+^{5/2} \le \begin{cases}
Ch^{-3}\int_{|x|\ge R/2} d(x)^{-10}\le Ch^{-3} R^7
 \le Ch^{1/2} & \mbox{if $\mu=0$} \\
0 & \mbox{if $\mu\ne 0$.}
\end{cases}
$$
The last term in \eqref{3term1}, also in absolute value, is bounded by
$$
   Ch^{-3}\sum_{k=1}^M \int  \theta_{r,k}^2\Bigg| \Big[\frac{z_k}{|q-r_k|}\Big]^{5/2}
- [V(q)]_+^{5/2} \Bigg|\rd q \le Ch^{-3} \int_{|q|\le r} |q|^{-3/2}\rd q = Ch^{-3}r^{3/2} 
$$
by using \eqref{West}, where $C$ depends on $M$ and $r_{min}$. With our choice of 
$r=h^{1-\frac{1}{10}}$,
this error term is also negligible. This completes the proof
of the lower bound in Theorem~\ref{thm:scMainscott}.

\medskip

\subsection{Upper bound}

We again set $\ell_u=\frac{1}{100}\sqrt{r^2+ d(u)^2}$ and consider
the appropriate cutoff functions $\psi_u$ from \eqref{partun}.
We construct a trial density matrix of the form
\be
 \gamma =\sum_{k=1}^M \theta_{r,k}\gamma_k\theta_{r,k} + 
\int_{\cQ} \frac{\rd u}{\ell_u^3} \phi_+\psi_u\gamma_u \psi_u\phi_+,
\label{trialg}
\ee
where $\cQ$ was defined in \eqref{def:Q} and
$\gamma_k$ and $\gamma_u$ are density matrices to be determined below.
Since
$$
  \sum_k \theta_{r,k}^2 +\phi_+^2 = \phi_-^2+\phi_+^2 =1,
$$
we obtain from \eqref{partun} that $\gamma$ is a density matrix.

\subsubsection{Trial density near the nuclei}\label{sec:trialnear}

To construct $\gamma_k$, we fix some $\eta>0$ and
we notice that from the last part of Theorem~\ref{thm:scott},
for any unscaled cutoff function $\phi$,
there exists some $R(\eta)$ such that for any $R_0>R(\eta)$
there is a vector potential $A_0$, supported
in $B(R_0/4)$, and there is a density matrix $\wh\gamma$ such that
\begin{multline}
\label{def:Sb}
     \tr \Big[ \wh\gamma\phi_{R_0} \Big( T_{h=1}(A_0) - 
 \frac{1}{|x|} \Big) \phi_{R_0}\Big] + \frac{1}{z_k\kappa} \int |\nabla\otimes A_0|^2 
 \\
 - 2(2\pi)^{-3}\int_{\bR^3\times \bR^3} \phi_{R_0}^2(q)
 \Big[ p^2 - \frac{1}{|q|}\Big]_- \rd p \rd q \le 2S(z_k\kappa)+\eta.
\end{multline}
We can also assume that
\be
\label{def:Sbnull}
     \tr \Big[ \wh\gamma\phi_{R_0} \Big( T_{h=1}(A_0) - 
 \frac{1}{|x|} \Big) \phi_{R_0}\Big] + \frac{1}{z_k\kappa} \int |\nabla\otimes A_0|^2 \le0
 \ee
by noticing that the semiclassical integral is of order $R_0^{1/2}$ which
dominates over the $2S(z_k\kappa)+\eta$ term for sufficiently large $R_0$.

Fixing an appropriately large $R_0$,
$\gamma_k$ and $A_k$ are now obtained from $\wh\gamma$ 
and $A_0$ by shifting and rescaling, such that
\begin{multline}
\label{def:Sbk}
     \tr \Big[ \gamma_k\theta_{r,k} \Big( T_{h}(A_k) - 
 \frac{z_k}{|x-r_k|} \Big) \theta_{r,k}\Big]
 + \frac{1}{\kappa h^2} \int |\nabla\otimes A_k|^2 
 \\
 - 2(2\pi h)^{-3}\int_{\bR^3\times \bR^3} \theta_{r,k}^2(q)
 \Big[ p^2 - \frac{z_k}{|q-r_k|}\Big]_- \rd p \rd q \le 2h^{-2}z_k^2S(z_k\kappa)+C h^{-2} \eta
\end{multline}
with $r=R_0h^2z_k^{-1}$.
Here we used that the unscaled cutoff function $\phi$ can be chosen
to be $\phi(x)=\theta_-(d(x))$ so that after rescaling and shift
$\phi_{R_0}$ became $\theta_{r,k}$. We choose $R_0=h^{-1-\xi}z_k$ such that $r=h^{1-\xi}$
and clearly $R_0> R(\eta)$ is satisfied  in the limit as $h\to0$.
We also remark that $A_k$ is supported in $B_{r/4}(r_k)$
which are disjoint balls for different $k$'s. Defining
$A:= \sum_{k=1}^M A_k$, we have $A=A_k$ in the support of $\theta_{r,k}$.
Thus summing up \eqref{def:Sbk}, and using that the replacement
of $z_k|q-r_k|^{-1}$ with $V(q)$ in the semiclassical integral term 
is negligible (see the estimate of the last term in
\eqref{3term1}),
 we have
\begin{multline}
\label{def:Sbk1}
     \sum_{k=1}^M\tr \Big[ \gamma_k\theta_{r,k} \Big( T_{h}(A) - 
 \frac{z_k}{|x-r_k|} \Big) \theta_{r,k}\Big]
 + \frac{1}{\kappa h^2} \int |\nabla\otimes A|^2 
 \\
 \le  2(2\pi h)^{-3}\int_{\bR^3\times \bR^3} \phi_-^2(q)
 \Big[ p^2 - V(q)\Big]_- \rd p \rd q + 2h^{-2}\sum_{k=1}^M
z_k^2S(z_k\kappa)+C h^{-2} \eta.
\end{multline}

Now we establish some properties of the density $\varrho_k(x):=\gamma_k(x,x)$.
Similarly to the estimate $\cT_k^{(3)}(A)$ from \eqref{t3}, we have for any $L>0$
$$
  \inf_A\Bigg\{ \tr \Big[ \phi_{L}\Big( T_{h}(A)- \frac{2}{|x-y|}\Big)\phi_{L}\Big]_- 
 + \frac{1}{\kappa h^2}\int |\nabla \otimes A|^2 \Bigg\}\ge -  Ch^{-3} L^{1/2}
$$
uniformly for any $y\in \bR^3$  if $\kappa$ is sufficiently small.
In particular, for any density matrix $\gamma$ with density $\varrho_\gamma$ we have
$$
 2\sup_{y\in \bR^3} \int \phi_L^2(x)\frac{\varrho_\gamma(x)}{|x-y|}\rd x \le  Ch^{-3} L^{1/2}
+\tr\gamma \phi_{L}T_{h}(A)\phi_{L} 
 + \frac{1}{\kappa h^2}\int |\nabla \otimes A|^2 
$$
for any vector potential $A$ and thus
$$
 \sup_{y\in \bR^3} \int \phi_L^2(x)\frac{\varrho_\gamma(x)}{|x-y|}\rd x \le  Ch^{-3} L^{1/2}
+\inf_A \Bigg\{\tr \Bigg[ \gamma \phi_{L}\Big(T_{h}(A)- \frac{1}{|x|}\Big)\phi_{L}\Bigg] 
 + \frac{1}{\kappa h^2}\int |\nabla \otimes A|^2 \Bigg\}.
$$
Applying this bound to $\wh\gamma$ constructed above
(with $L=R_0$, $h=1$ and $A=A_0$) and using \eqref{def:Sbnull}, we have
$$
 \sup_{y\in \bR^3} \int \phi_{R_0}^2(x)
\frac{\varrho_{\wh \gamma}(x)}{|x-y|}\rd x \le  C R_0^{1/2}
$$
which, after rescaling and shifting $\wh\gamma$ to $\gamma_k$ amounts to
\be
 \sup_{y\in \bR^3} \int \theta_{r,k}^2(x)
\frac{\varrho_{\gamma_k}(x)}{|x-y|}\rd x \le  Ch^{-3}r^{1/2}.
\label{dens1}
\ee
In particular,
\be\label{dens2}
    \int \theta_{r,k}^2(x) \varrho_{\gamma_k}(x) \rd x \le  Ch^{-3}r^{3/2}
\ee
by choosing $y=r_k$ and using that $|x-r_k|\le Cr$ on the support of $\theta_{r,k}$.

Combining \eqref{dens2} with \eqref{West}
 we see that $z_k|x-r_k|^{-1}$ can be replaced with $V$ 
in the l.h.s. of \eqref{def:Sbk1} at an error of order $h^{-3} r^{3/2} = o(h^{-2})$
and thus we have
\begin{multline}
\label{def:Sbk2}
     \sum_{k=1}^M\tr \Big[ \gamma_k\theta_{r,k} \Big( T_{h}(A) - V\Big) \theta_{r,k}\Big]
 + \frac{1}{\kappa h^2} \int |\nabla\otimes A|^2 
 \\
 \le  2(2\pi h)^{-3}\int_{\bR^3\times \bR^3} \phi_-^2(q)
 \Big[ p^2 - V(q)\Big]_- \rd p \rd q + 2h^{-2}\sum_{k=1}^M
z_k^2S(z_k\kappa)+C h^{-2} \eta.
\end{multline}
Moreover, it follows from \eqref{dens1} and \eqref{dens2} that
the density $\theta_{r,k}^2\varrho_{\gamma_k}$ of the density matrix
 $\theta_{r,k}\gamma_k \theta_{r,k}$ 
satisfies
\be
    D( \theta_{r,k}^2\varrho_{\gamma_k}) = \frac{1}{2}\iint  \theta_{r,k}(x)^2 \theta_{r,k}(y)^2
\frac{\varrho_{\gamma_k}(x)\varrho_{\gamma_k}(y)} {|x-y|}\rd x\rd y \le Ch^{-6}r^2.
\label{Dnucl}
\ee

\subsubsection{Trial density away from the nuclei}

Now we construct $\gamma_u$ for any $u\in \cQ$. Since $\mbox{supp}\; A_k \subset B_{r/4}(r_k)$
and $\phi_+(x)$ is supported at $d(x)\ge r/2$, thus on the support of $\phi_+$ we
have $A =0$. Therefore it is sufficient to construct a non-magnetic trial state $\gamma_u$
within each ball $B_u(\ell_u)$ which supports $\psi_u$. This was achieved in
Corollary 15 of \cite{SS} and we just quote the relevant upper bound
(we formulate it particles with spin, this accounts for an additional factor 2
compared with \cite{SS}):
\begin{proposition}\cite[Corollary 15]{SS}\label{prop:SS} Let $\chi \in C_0^7(\bR^3)$ be supported
in $B_\ell$ with some $\ell>0$ and let $V\in C^3(\ov{B_\ell})$ be a real potential. Assume
that for any multiindex $n\in \bN^3$ with $|n|\le 7$ we have
\be
   \|  \partial^n \chi\|_\infty \le C_n \ell^{-|n|}, \qquad
   \|  \partial^n V\|_\infty \le C_n f^{2}  \ell^{-|n|}
\label{psiVb}
\ee
with some constant $f$.
Then there exists a density matrix $\gamma$ such that
$$
  \tr \big[\gamma\chi(-h^2\Delta-V)\chi\big]\le 2(2\pi h)^{-3} \iint \chi(q)^2[p^2- V(q)]_-
\rd p\rd q + C h^{-3+6/5}f^{3+4/5} \ell^{3-6/5},
$$
where $C$ depends only on the constants in \eqref{psiVb}. Moreover, the density
$\varrho_\gamma(x)$ of $\gamma$ satisfies
\be
  \Big| \varrho_\gamma(x) -2(2\pi h)^{-3}\om_3 [V(x)]_+^{3/2}\Big|\le C h^{-3+9/10} f^{3-9/10}
  \ell^{-9/10},
\label{densbb}
\ee 
for almost all $x\in B_\ell$, and
\be
  \Big| \int \chi^2\varrho_\gamma - 2(2\pi h)^{-3} \om_3
  \int \chi^2[V]_+^{3/2} \Big|\le Ch^{-3+6/5},
\label{densintb}
\ee
where $\om_3= 4\pi/3$ is the volume of the unit ball.
\end{proposition}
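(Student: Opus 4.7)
\medskip
\noindent\textbf{Proof proposal for Proposition~\ref{prop:SS}.}

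The plan is to construct the trial density matrix $\gamma$ using the standard coherent state method, optimized so that the resulting errors in both the energy and the density bounds balance at the exponents $6/5$ and $9/10$ respectively. First I would perform a scaling reduction: replacing $x$ by $\ell y$ and dividing the potential by $f^2$ reduces the problem to the case $\ell=f=1$ with an effective semiclassical parameter $h_{\mathrm{eff}}=h/(f\ell)$ and a potential $V$ of unit size, supported (up to the cutoff) on the unit ball. Under the bound \eqref{fll}, this parameter is small.

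Next I would fix a Gaussian profile $G(y)=\pi^{-3/4}\exp(-|y|^2/2)$ and, for a width parameter $a=h_{\mathrm{eff}}^{\beta}$ to be chosen, define the coherent states
\[
 |q,p\rangle_a(x)=a^{-3/2}G\bigl((x-q)/a\bigr)\exp(ip\cdot x/h_{\mathrm{eff}}).
\]
The trial density matrix is taken to be
\[
 \gamma=\frac{2}{(2\pi h_{\mathrm{eff}})^{3}}\iint F(q,p)\,|q,p\rangle_a\langle q,p|_a\,\rd q\,\rd p,
\]
with $F(q,p)=\chi(q)^{2}\,\mathbf{1}\bigl(p^{2}\le V(q)\bigr)$ (or a mild regularization of the indicator so that $0\le\gamma\le 1$, which is where the Husimi/Berezin positivity of the coherent state transform is used). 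Positivity of $\gamma$ is automatic; the factor $2$ accounts for the spin.

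The energy estimate then follows from the standard coherent state identity for $\tr[\gamma\,\chi(-h_{\mathrm{eff}}^{2}\Delta)\chi]$, which produces the Weyl integral plus an error $O(a^{2}+h_{\mathrm{eff}}^{2}/a^{2})$ times a volume factor, combined with the fact that replacing $V(x)$ by $V(q)$ inside the coherent state integral costs $O(a^{2}\|\nabla^{2}V\|_{\infty})$ by Taylor expansion through third order (here one uses the $C^{3}$ regularity of $V$ and integrates by parts in $p$ to cancel the first-order term). Balancing these contributions with $a\sim h_{\mathrm{eff}}^{2/5}$ yields the energy error of order $h_{\mathrm{eff}}^{-3+6/5}$, which, after undoing the scaling, matches the claim.

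For the density, I would compute
\[
 \varrho_{\gamma}(x)=\frac{2}{(2\pi h_{\mathrm{eff}})^{3}}\iint F(q,p)\,a^{-3}G^{2}\bigl((x-q)/a\bigr)\,\rd q\,\rd p
 =\frac{2\omega_{3}}{(2\pi h_{\mathrm{eff}})^{3}}\bigl([V]_{+}^{3/2}*G_{a}^{2}\bigr)(x),
\]
where $G_{a}^{2}$ is the Gaussian of width $a$. A first-order Taylor estimate on $[V]_{+}^{3/2}$, which is H\"older-$3/2$ near $\{V=0\}$ and smooth away from it, gives a pointwise error $O(a^{2}\|\nabla^{2}[V]_{+}^{3/2}\|_{\infty}) + O(a^{3/2})$ from the boundary region. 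Reoptimizing (the boundary set contributes on a thinner scale) yields the pointwise bound \eqref{densbb} with the exponent $9/10$ at the choice $a\sim h_{\mathrm{eff}}^{3/10}$; the integrated estimate \eqref{densintb} allows a rougher scale and gives the better exponent $6/5$, which follows from integration by parts against $\chi^{2}$ together with the coherent-state Weyl symbol identity. Finally I would rescale back to general $f,\ell$.

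The main technical obstacle is the pointwise density bound \eqref{densbb}, because $[V]_{+}^{3/2}$ is only H\"older continuous at the set $\{V=0\}$; this forces the Gaussian width $a$ to be chosen substantially larger than what would be optimal for the energy, and one must verify that the energy bound \eqref{psiVb} still holds with the larger $a$. Controlling both exponents simultaneously with a single coherent state construction is the delicate point.
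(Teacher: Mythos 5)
You should first note that the paper does not prove this Proposition at all: it is imported verbatim as Corollary~15 of \cite{SS}, so the question is whether your construction actually reproduces that result, and it does not. The crux is the exponent $6/5$. With plain Gaussian coherent states and $F(q,p)=\chi^2(q)\mathbf{1}(p^2\le V(q))$, the cost of replacing $(\chi^2V)*g_a^2(q)$ by $\chi^2(q)V(q)$ is of size $a^2\|\nabla^2(\chi^2V)\|_\infty$ per unit phase-space volume: the first-order Taylor term vanishes automatically by the symmetry of the Gaussian (no integration by parts in $p$ is involved), but the second-order term survives and is in general nonzero. Hence, in scaled units $f=\ell=1$, the total energy error of your trial state is $C\big(h^{-3}a^2+h^{-1}a^{-2}\big)$, which is minimized at $a\sim h^{1/2}$ and gives $O(h^{-2})$ --- exactly the order of the Scott term and strictly worse than the claimed $h^{-3+6/5}=h^{-2+1/5}$. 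Your own arithmetic is inconsistent on this point: with $a\sim h^{2/5}$ the smearing term $h^{-3}a^2$ is $h^{-3+4/5}$, not $h^{-3+6/5}$; the balance you quote would require the smearing error to be $O(a^3)$ per particle, i.e.\ the $a^2$-term to be cancelled. Arranging that cancellation (by correcting the coherent states to second order, or equivalently shifting the Fermi surface in $F$ by an explicit $a^2$-correction and then controlling the induced change of the density) is precisely the content of the ``new coherent states'' calculus of \cite{SS}; without it the construction cannot beat $h^{-2}$, which is useless both for the Proposition as stated and for this paper, where all local errors must sum to $o(h^{-2})$.

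Two further gaps. First, putting the factor $\chi^2(q)$ into $F$ ruins the pointwise bound \eqref{densbb}: your density is $\approx 2(2\pi h)^{-3}\om_3\,\chi^2(x)[V(x)]_+^{3/2}$, which differs from $2(2\pi h)^{-3}\om_3[V(x)]_+^{3/2}$ by $O(h^{-3})$ wherever $\chi<1$ inside $B_\ell$; the cutoff must enter only through the sandwiched trace, with $\gamma$ built from the (smeared, corrected) potential alone. Second, even granting the $a^2$-cancellation for the energy, your single choice of width does not deliver both exponents: smearing $[V]_+^{3/2}$, which is only H\"older-$3/2$ on $\{V=0\}$, at scale $a$ produces a pointwise error of relative size $a^{3/2}$, so \eqref{densbb} forces $a\lesssim h^{3/5}$, while the kinetic localization error $h^{-1}a^{-2}$ then becomes $h^{-11/5}\gg h^{-3+6/5}$. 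You acknowledge this tension in your closing paragraph but do not resolve it, and resolving it is exactly the nontrivial part of the proof of Corollary~15 in \cite{SS}. As it stands, the proposal establishes neither the energy bound nor \eqref{densbb} with the stated exponents.
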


We will apply  this Proposition for each ball $B_u(\ell_u)$
and with $\chi:= \phi_+\psi_u$.
{F}rom \eqref{psider} and \eqref{Vder} it
 is easy to see that the conditions \eqref{psiVb} are satisfied
with our choice of $\ell =\ell_u =  \frac{1}{100}\sqrt{r^2+ d(u)^2}$
and $f_u=f(u)=\min\{ d(u)^{-1/2}, d(u)^{-2}\}$ as given in
\eqref{fdef}. The density matrix  thus constructed in Proposition~\ref{prop:SS}
will be denoted by $\gamma_u$ and this will be the density matrix
 in \eqref{trialg}. With this choice we have
\begin{align}\label{fullinn}
\tr \Big[ & (-h^2\Delta -V)  
\int_{\cQ} \frac{\rd u}{\ell_u^3} \phi_+\psi_u\gamma_u \psi_u\phi_+\Big]\non\\
 \le &\; 2(2\pi h)^{-3} \int_{\cQ} \frac{\rd u}{\ell_u^3} \iint \phi_+(q)^2\psi_u(q)^2
 [p^2- V(q)]_- \rd p\rd q 
 + C h^{-3+6/5}\int_{\cQ} \frac{\rd u}{\ell_u^3}  f_u^{3+4/5} \ell_u^{3-6/5}\non\\
 \le & \; 2(2\pi h)^{-3} \iint \phi_+(q)^2
 [p^2- V(q)]_- \rd p\rd q 
 + C h^{-2+1/5}\int_{\cQ} \frac{\rd u}{\ell_u^3}  f_u^{3+4/5} \ell_u^{3-6/5}.
\end{align}
The error term can easily be computed by using that $\ell_u \sim d(u)$ for $u\in \cQ$
as
\begin{align}
  \int_{\cQ} \frac{\rd u}{\ell_u^3}  f_u^{3+4/5} \ell_u^{3-6/5}
  \le & \; C\int_{r/3\le d(u)\le 1} d(u)^{-19/10 - 6/5}  \rd u
  + C\int_{1\le d(u)\le 2R} d(u)^{-38/5 - 6/5}  \rd u \non\\
\le &\;  Cr^{-1/10} +C.
\end{align}
Since $r\ge h$, we have $h^{1/5} r^{-1/10}\le h^{1/10}$, so the error term 
in \eqref{fullinn} is $o(h^{-2})$. 

We now use $\gamma$ from \eqref{trialg} and $A=\sum_k A_k$ constructed in
Section~\ref{sec:trialnear} to complete the proof of
the upper bound \eqref{trialenergy} in Theorem~\ref{thm:scMainscott}.
Combining \eqref{def:Sbk2} and  \eqref{fullinn} and recalling 
that $A$ and $\phi_+$ are supported disjointly, we have
\begin{align}
\tr[T_h(A) & -V]\gamma + \frac{1}{\kappa h^2} \int |\nabla\otimes A|^2\non\\
&\le  2(2\pi h)^{-3}\iint \big[ p^2 - V(q)\big]_- \rd q \rd p
 +  2h^{-2}\sum_{k=1}^M z_k^2 S(z_k\kappa) +o(h^{-2})+Ch^{-2}\eta, \non
\end{align}
where we also used that $\phi_+^2+\phi_-^2=1$.
Finally, letting $h\to 0$ first and then $\eta\to0$, we obtain
the upper bound in \eqref{locscMain1} and  \eqref{trialenergy}.

\medskip

To complete the proof of Theorem~\ref{thm:scMainscott}, it remains to prove
\eqref{densitycontroll} and \eqref{densitycontroll65}. 
{F}rom \eqref{trialg} we have 
$$
 \varrho_\gamma = \sum_{k=1}^M \theta_{r,k}^2 \varrho_{\gamma_k} + 
  \int_{\cQ} \frac{\rd u}{\ell_u^3} \psi_u^2 \phi_+^2 \varrho_{\gamma_u}.
$$
Using  \eqref{dens2} and $\phi_+\le 1$ we have
$$
  \int\varrho_\gamma\le Ch^{-3} r^{3/2} +  
\int_{\cQ} \frac{\rd u}{\ell_u^3} \int \psi_u^2 \varrho_{\gamma_u}.
$$
The first term is of order $h^{-3/2(1+\xi)}= h^{-2+7/20}$, hence negligible. In the second term 
we use \eqref{densintb} and \eqref{partun} to get
\begin{align}
   \int_{\cQ} \frac{\rd u}{\ell_u^3} \int \psi_u^2 \varrho_{\gamma_u}
   & \le \frac{1}{3\pi^2h^3}\int_{\cQ} \frac{\rd u}{\ell_u^3} \int \psi_u^2 [V]_+^{3/2}
  + Ch^{-2+1/5} \int_{\cQ} \frac{\rd u}{\ell_u^3}\non\\
&\le  \frac{1}{3\pi^2h^3}\int [V]_+^{3/2}+ Ch^{-2+1/5}(|\log r|+ |\log R|).
\end{align}
Since both logarithms are of order $|\log h|$, we obtained \eqref{densitycontroll}.

To prove \eqref{densitycontroll65}, we note that $\sqrt{ D(\varrho)}$ satisfies the
triangle inequality, thus we have
\begin{align}
 D\Big(\varrho_\gamma - (3\pi^2)^{-1} h^{-3} [V]_-^{3/2}\Big) 
   \le & \; C D\Big( \int_{\cQ} \frac{\rd u}{\ell_u^3} \psi_u^2 \phi_+^2 \Big[ \varrho_{\gamma_u}
  - \frac{1}{3\pi^2 h^3} [V]_+^{3/2}\Big]\Big)\label{3D} \\
& + C  D\Big(  \frac{1}{3\pi^2 h^3} [V]_+^{3/2}
\Big[ \int_{\cQ} \frac{\rd u}{\ell_u^3} \psi_u^2 \phi_+^2 -1
\Big]\Big)
+ C \sum_k D(\varrho_{\gamma_k}).  \non
\end{align}
The last term is smaller than $O(h^{-5+\e})$ with some small $\e$ by 
using \eqref{Dnucl}.
For the second term in \eqref{3D} we note that by \eqref{partun}
$$
  \Big| \int_{\cQ} \frac{\rd u}{\ell_u^3} \psi_u(x)^2 \phi_+(x)^2 -1
\Big| \le |\phi_+(x)^2-1| + \int_{\cQ^c} \frac{\rd u}{\ell_u^3} \psi_u(x)^2
\le C\big( {\bf 1}( d(x)\le r) + {\bf 1}( d(x)\ge R) \big)
$$
and thus
$$
  D\Big(  \frac{1}{3\pi^2 h^3} [V]_+^{3/2}
\Big[ \int_{\cQ} \frac{\rd u}{\ell_u^3} \psi_u^2 \phi_+^2 -1
\Big]\Big) \le Ch^{-6} \Big[ D\Big( [V]_+^{3/2} {\bf 1}( d(x)\le r)\Big)
  +   D\Big( [V]_+^{3/2} {\bf 1}( d(x)\ge R)\Big)\Big].
$$
By the Hardy-Littlewood-Sobolev inequality we have $D(\varrho)\le C\|\varrho\|_{6/5}^2$
for any real function $\varrho$, therefore this error is bounded
by
$$
 Ch^{-6}\Big( \int_{d(x)\le r} [V]_+^{9/5} +  \int_{d(x)\ge R} [V]_+^{9/5}\Big)^{5/3}.
$$
Using that $V(x)$ is essentially $z_k|x-r_k|^{-1}$ near the nuclei and
$V(x) \sim |x|^{-4}$ for large $x$ (see \eqref{Vder} and \eqref{West}), we
easily obtain 
$$
  D\Big(  \frac{1}{3\pi^2 h^3} [V]_+^{3/2}
\Big[ \int_{\cQ} \frac{\rd u}{\ell_u^3} \psi_u^2 \phi_+^2 -1
\Big]\Big) \le Ch^{-6}\big(r^2 + R^{-7}\big)  
$$
which is smaller than   $O(h^{-5+\e})$.

Finally, we estimate the first term on the r.h.s. of \eqref{3D}.
By  the Hardy-Littlewood-Sobolev inequality and \eqref{densbb}
\begin{align}
 D\Big( \int_{\cQ} \frac{\rd u}{\ell_u^3} \psi_u^2 \phi_+^2 \Big[ \varrho_{\gamma_u}
  - \frac{1}{3\pi^2 h^3} [V]_+^{3/2}\Big]\Big)^{1/2}
\le & C \, \Bigg\| \int_{\cQ} \frac{\rd u}{\ell_u^3} \psi_u^2 \phi_+^2 \Big[ \varrho_{\gamma_u}
  - \frac{1}{3\pi^2 h^3} [V]_+^{3/2}\Big] \Bigg\|_{6/5}\non\\
\le &  \int_{\cQ} \frac{\rd u}{\ell_u^3} \Bigg\| \psi_u^2 \phi_+^2 \Big[ \varrho_{\gamma_u}
  - \frac{1}{3\pi^2 h^3} [V]_+^{3/2}\Big] \Bigg\|_{6/5} \non\\
\le & Ch^{-2-\frac{1}{10}} \int_{\cQ} \frac{\rd u}{\ell_u^3} f_u^{21/10} \ell_u^{8/5}
 \non\\
\le & Ch^{-2-\frac{1}{10}}
\end{align}
as the last integral is bounded.
In the second line we estimated
$$
   \Bigg\| \psi_u^2 \phi_+^2 \Big[ \varrho_{\gamma_u}
  - \frac{1}{3\pi^2 h^3} [V]_+^{3/2}\Big] \Bigg\|_{6/5} 
  \le Ch^{-2-\frac{1}{10}} f_u^{21/10} \ell_u^{8/5}
$$
by using  \eqref{densbb} and \eqref{psider}. This completes the
proof of  \eqref{densitycontroll65} and the proof of  Theorem~\ref{thm:scMainscott}.
\qed

\section{Equivalence of the two definitions of $S(\kappa)$}\label{sec:equiv}

{\it Proof of Lemma~\ref{S=S}.} 
Setting $h=\nu^{1/2}$, 
after a change of variables we have
$$
\frac{2}{(2\pi)^3} \int_{\bR^3\times \bR^3}
 \Big[ p^2 - \frac{1}{|q|}+\nu\Big]_- \rd p \rd q =
\nu\frac{2}{(2\pi h)^3} \int_{\bR^3\times \bR^3}
 \Big[ p^2 - \frac{1}{|q|}+1\Big]_- \rd p \rd q .
$$
Similarly,  by  a simple rescaling, $x\to \nu^{-1}x$ we get
$$
 \inf_A \Big\{ \tr \Big[  T_{1}(A) - 
 \frac{1}{|x|} +\nu\Big]_- + \frac{1}{\kappa} \int_{\bR^3} |\nabla\otimes A|^2 
\Big\} = \nu \inf_A \Big\{ \tr \Big[  T_{\nu^{1/2}}(A) - 
 \frac{1}{|x|} +1\Big]_- + \frac{1}{\kappa \nu} \int_{\bR^3} |\nabla\otimes A|^2 
\Big\}.
$$
We apply Theorem~\ref{thm:scMainscott} to the potential 
$V(x)= |x|^{-1} -1$ and with $h=\nu^{1/2}$. Notice that $V(x)$
satisfies the conditions \eqref{Vder} and \eqref{West} with $\mu=1$, $M=1$,
$r_{min}=\infty$, $z_1=1$ and $r_1=0$. We get
$$
\inf_A \Big\{ \tr \Big[  T_{1}(A) - 
 \frac{1}{|x|} +\nu\Big]_- + \frac{1}{\kappa} \int_{\bR^3} |\nabla\otimes A|^2 
\Big\}  - \frac{2}{(2\pi)^3} \int_{\bR^3\times \bR^3}
 \Big[ p^2 - \frac{1}{|q|}+\nu\Big]_- \rd p \rd q = 2S(\kappa) + O(\nu^{\e/2})
$$
which proves \eqref{def:Skappa1}.
\qed

\end{document}